\newtheorem{conjecture}{Conjecture}
\begin{document}

\setcounter{page}{1}
\issue{XXI~(2020)}

\title{A linear bound on the k-rendezvous time for primitive sets of NZ matrices}

%\address{Department of Mathematical Sciences ``G.L.~Lagrange'', Politecnico di Torino, Corso Duca degli Abruzzi 24, 10129 Torino, Italy.}
%\thanks{C.\ Catalano is supported by MIUR grant Dipartimenti di Eccellenza 2018–2022 [CUP: E11G18000350001].}
\author{Costanza Catalano\corresponding\\
Department of Economics, Statistics and Research\\
Banca d'Italia (Central Bank of Italy)\\
Largo Guido Carli 1, 00044 Frascati (Roma), Italy.\\
costanzacatalano{@}gmail.com\\
%Department of Mathematical Sciences ``G.L.~Lagrange''\\
%Politecnico di Torino\\ Corso Duca degli Abruzzi 24, 10129 Torino, Italy. \\
%costanza.catalano{@}polito.it\\
$  $\\
$  $\\
\and Umer Azfar\\
ICTEAM,
Universit\'{e} Catholique de Louvain\\
Avenue Georges Lema\^{i}tres 4-6, Louvain-la-Neuve, Belgium.\\
umer.azfar{@}student.uclouvain.be\\
\and Ludovic Charlier\\
ICTEAM,
Universit\'{e} Catholique de Louvain\\
Avenue Georges Lema\^{i}tres 4-6, Louvain-la-Neuve, Belgium.\\
ludovic.charlier{@}student.uclouvain.be\\
\and Rapha\"{e}l M. Jungers\thanks{R. M. Jungers is a FNRS Research Associate. He is supported by the French Community of Belgium, the Walloon Region and the Innoviris Foundation.}\\
ICTEAM,
Universit\'{e} Catholique de Louvain\\
Avenue Georges Lema\^{i}tres 4-6, Louvain-la-Neuve, Belgium.\\
raphael.jungers{@}uclouvain.be
} 
\maketitle

\runninghead{C. Catalano, U. Azfar, L. Charlier, R.M. Jungers}{A linear bound on the k-rt for primitive sets of NZ matrices}

\begin{abstract}
A set of nonnegative matrices is called primitive if there exists a product of these matrices that is entrywise positive. Motivated by recent results relating synchronizing automata and primitive sets, we study the length of the shortest product of a primitive set having a column or a row with $ k $ positive entries, called its \emph{$ k $-rendezvous time}  (\emph{$ k$-RT}), in the case of sets of matrices having no zero rows and no zero columns. We prove that the $ k $-RT is at most linear w.r.t.\ the matrix size $ n $ for small $ k $, while the problem is still open for synchronizing automata. We provide two upper bounds on the $ k $-RT: the second is an improvement of the first one, although the latter can be written in closed form. We then report numerical results comparing our upper bounds on the $ k $-RT with heuristic approximation methods.
\end{abstract}

\begin{keywords}
Primitive set of matrices, matrix semigroups, synchronizing automaton, \v{C}ern\'{y} conjecture.
\end{keywords}

\section{Introduction}
\textbf{Primitive sets of matrices.}
A nonnegative matrix $ M $ is called \emph{primitive} if there exists an integer $ s\in\mathbb{N} $ such that $ M^s>0 $ entrywise. This notion was introduced by Perron and Frobenius at the beginning of the 20th century, and it can be extended to \emph{sets} of matrices: a set of nonnegative matrices $ \mathcal{M}=\lbrace M_1,\dots ,M_m\rbrace $  is called \emph{primitive} if there exist some indices $ i_1,\dots ,i_r\in \lbrace 1,\dots ,m\rbrace $ such that the product $ M_{i_1}\cdots M_{i_r} $ is entrywise positive. A product of this kind is called a \emph{positive} product and the length of the shortest positive product of a primitive set $ \mathcal{M} $ is called its \emph{exponent} and it is denoted by $ exp(\mathcal{M}) $. The concept of primitive set has been just recently formalized by Protasov and Voynov \cite{ProtVoyn}, but it had appeared before in different fields as in stochastic switching systems \cite{hennion1997,Protasov2011} and time-inhomogeneous Markov chains \cite{Hart,seneta}. It has lately gained more importance due to its applications in consensus of discrete-time multi-agent systems \cite{Pierre}, cryptography \cite{Fomichev2018} and automata theory \cite{BlonJung,GerenGusJung,Catalano,CatalanoJALC}. 
Deciding whether a set is primitive is a PSPACE-complete problem  \cite{GerenGusJung}, while computing the exponent of a primitive set is an FP$^{\text{NP}[\log]}$-complete problem \cite{GerenGusJung}; for the complexity of other problems related to primitivity and the computation of the exponent, we refer the reader to \cite{GerenGusJung}. For sets of matrices having at least one positive entry in every row and every column (called \emph{NZ} \cite{GerenGusJung} or \emph{allowable} matrices \cite{hajnal_1976,hennion1997}), the primitivity problem becomes decidable in polynomial-time \cite{ProtVoyn}, although computing the exponent remains NP-hard \cite{GerenGusJung}. Methods for approximating the exponent have been proposed \cite{CatalanoSPF,CatalanoSPF2} as well as cubic upper bounds on the matrix size $ n $ \cite{BlonJung}. %; in particular, if we denote with $ exp_{NZ}(n) $ the maximal exponent among all the primitive sets of $ n\times n $ NZ matrices, it is known that $exp_{NZ}(n)\leq (15617n^3+7500n^2+56250n-78125)/46875$ \cite{BlonJung,Szykula}. 
Better upper bounds have been found for some classes of primitive sets (see e.g.\ \cite{GerenGusJung} and \cite{Hart}, Corollary 2.5).
The NZ condition is often met in applications and in particular in the connection with synchronizing automata.

\textbf{Synchronizing automata.}
A \emph{(complete deterministic finite state) automaton} is a $ 3 $-tuple $\mathcal{A}= \langle Q,\Sigma,\delta\rangle $ where $ Q=\lbrace q_1,\dots, q_n\rbrace $ is a finite set of states, $ \Sigma=\lbrace a_1,\dots ,a_m\rbrace $ is a finite set of input symbols (the \emph{letters} of the automaton) and $ \delta: Q\times\Sigma\rightarrow Q $ is the \emph{transition function}. Given some indices $i_1, i_2,...,i_l \in \{1,...,m\}$, we call $w = a_{i_1}a_{i_2}...a_{i_l}$ a \emph{word} and we define $\delta(q,w) = \delta(\delta(q,a_{i_1}a_{i_2}...a_{i_{l-1}}),a_{i_l})$.
An automaton is  \emph{synchronizing} if it admits a word $ w $, called a \textit{synchronizing} or a \emph{reset} word, and a state $ q $ such that $ \delta(q',w)=q $ for any state $ q'\in Q $. In other words, the reset word $ w $ brings the automaton from every state to the same fixed state. 

\begin{remark}\label{rem:aut}
The automaton $ \mathcal{A} $ can be equivalently represented by the set of matrices $ \lbrace A_1,\dots ,A_m\rbrace $ where, for all $ i=1,\dots ,m $ and $ l,k=1,\dots ,n $, $ (A_i)_{lk}=1 $ if $ \delta(q_l,a_i)=q_k $, $ (A_i)_{lk}=0 $ otherwise. The action of a letter $ a_i $ on a state $ q_j $ is represented by the product $ e_j^{\top}A_i $, where $ e_j $ is the $ j $-th element of the canonical basis.
Notice that the matrices $ A_1,\dots ,A_m $ are binary\footnote{A matrix is \emph{binary} if it has entries in $ \lbrace 0,1\rbrace $.} and row-stochastic, i.e.\ each of them has exactly one entry equal to $ 1 $ in every row and zero everywhere else. In this representation, the automaton $ \mathcal{A} $ is synchronizing if and only if there exists a product of its matrices with a column with all entries equal to $ 1 $ (also called an \emph{all-ones} column).
\end{remark}
The idea of synchronization is quite simple: we want to restore control over a device whose current state is unknown. For this reason, synchronizing automata are often used as models of error-resistant systems \cite{Epp,Chen}, but they also find application in other fields such as in symbolic dynamics \cite{mateescu}, in robotics \cite{Natarajan} or in resilience of data compression \cite{SCHU,Biskup}. For a recent survey on synchronizing automata we refer the reader to \cite{Volk}. We are usually interested in the length of the shortest reset word of a synchronizing automaton $ \mathcal{A} $, called its \emph{reset threshold} and denoted by $ rt(\mathcal{A}) $. Despite the fact that determining whether an automaton is synchronizing can be done in polynomial time (see e.g.\ \cite{Volk}), computing its reset threshold is an NP-hard problem \cite{Epp}\footnote{Moreover, even approximating the reset threshold of an $n$-state synchronizing automaton within a factor of $n^{1-\epsilon}$ is known to be NP-hard for any $\epsilon > 0$, see \cite{StrongInapprox}.}. One of the most longstanding open questions in automata theory concerns the maximal reset threshold of a synchronizing automaton, problem that is traditionally known as \emph{The \v{C}ern\'{y} conjecture}\footnote{\v{C}ern\'{y}, together with Pirick\'{a} and Rosenaurov\'{a}, explicitly stated it in 1971 \cite{CernyPiricka}, while the first printed version of such conjecture is attributable to Starke in 1966 \cite{Starke} (see also its recent english translation \cite{StarkeEN}). For further details on the paternity of this conjecture, we refer the reader to \cite{VolkovINTRO}.}.

\begin{conjecture}[The \v{C}ern\'{y}(-Starke) conjecture]
Any synchronizing automaton on $ n $ states has a synchronizing word of length at most $ (n-1)^2 $.
\end{conjecture}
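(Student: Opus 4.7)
This is the famous \v{C}ern\'{y} conjecture, open for over half a century, so my plan here is not a serious attempt at a full proof but a description of the most natural line of attack and why it stalls. The classical strategy, going back to \v{C}ern\'{y} himself and refined by Frankl, Pin and Kl\'{y}achko--Rystsov--Spivak, is the \emph{extension} (or \emph{shrinking}) argument: show that for every subset $S\subseteq Q$ with $|S|\ge 2$ there is a word $w_S$ such that $|\delta(S,w_S)|<|S|$, with a bound on $|w_S|$ that sums telescopically to $(n-1)^2$. Concretely, writing $f(k)$ for the worst-case length of a word needed to shrink a $k$-subset by at least one, it would suffice to prove $f(k)\le 2k-1$ for all $2\le k\le n$, so that $\sum_{k=2}^{n} f(k)\le (n-1)^2$. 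I would begin by recalling that $f(k)\le k^2$ follows from a breadth-first search in the power automaton on pairs, giving Pin--Frankl's $\tfrac{n(n^2-1)}{6}$ bound, which remains the state of the art up to constants.

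The plan would then be to sharpen the shrinking step. First I would dualize via the matrix representation of Remark~\ref{rem:aut}: a subset $S$ corresponds to a $0/1$ column indicator $\chi_S$, and shrinking $S$ amounts to finding a word whose associated product $A_{i_1}\cdots A_{i_r}$ merges two rows of $\chi_S$, equivalently produces a column in $A_{i_1}\cdots A_{i_r}$ whose support has strictly smaller image. Second, I would try to exploit structural invariants: eigenvectors of the averaged stochastic matrix, the rank of the transition semigroup acting on subsets, or an orbit-counting argument on pairs. Third, I would follow the primitive-set route suggested by this paper: a synchronizing automaton is primitive when viewed as a set of NZ row-stochastic matrices, and bounding the $k$-rendezvous time for large $k$ would translate directly into a bound on the reset threshold, since an all-ones column is exactly the $n$-rendezvous object.

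The main obstacle, and the reason the conjecture is still open, is precisely the gap between $f(k)=O(k^2)$ and the required $f(k)\le 2k-1$: no combinatorial invariant is known that decreases by a constant after every word of length linear in $k$. All improvements since Frankl have come from small additive savings (Trahtman, Szyku\l{}a) rather than from closing this gap. Extremal examples matching the \v{C}ern\'{y} series $(n-1)^2$ are extraordinarily rigid, which fuels the belief that the conjecture is true but also means any proof must detect this rigidity globally rather than subset-by-subset. My honest expectation is that attacking the statement directly in this form is hopeless with current techniques; a realistic intermediate goal, and the one pursued in the present paper, is to prove tight bounds on the related $k$-rendezvous time of primitive sets, where the NZ hypothesis provides the extra leverage needed to push past the quadratic barrier for small values of $k$.
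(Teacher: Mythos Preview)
The statement is the \v{C}ern\'{y} conjecture, which the paper records as an open conjecture and does not attempt to prove; there is therefore no proof in the paper to compare your proposal against. You correctly identify the problem as open and give an honest survey of the extension/shrinking strategy and its known limitations, which is exactly the right response here.

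One small slip worth flagging: your telescoping arithmetic is off. With your indexing, $\sum_{k=2}^{n}(2k-1)=n^{2}-1$, not $(n-1)^{2}$. The formulation that actually yields $(n-1)^{2}$ is the dual one: if extending a proper subset of size $k$ to one of size $k+1$ always costs at most $2(n-k)-1$, then $\sum_{k=1}^{n-1}\bigl(2(n-k)-1\bigr)=(n-1)^{2}$. This does not affect your overall point, since you are not claiming a proof, but it is worth stating the target inequality correctly. Also, Trahtman's claimed improvement to the cubic upper bound was later found to contain an error; the genuine post-Frankl improvements are due to Szyku{\l}a and, more recently, Shitov.
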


\v{C}ern\'{y} also presented in his pioneering paper \cite{Cerny} (see also its recent english translation \cite{CernyEN}) a family of automata having reset threshold of exactly $ (n-1)^2 $, thus demonstrating that the bound in his conjecture (if true) cannot be improved. Exhaustive search confirmed the \v{C}ern\'{y} conjecture for small values of $ n $ \cite{Ananichev2016,BondtDon,Experiments,Traht06} and within certain classes of automata (see e.g. \cite{Kari,Steinberg,Volkov2007}), but despite a great effort has been made to prove (or disprove) it in the last decades, its validity still remains unclear. Indeed on the one hand, the best upper bound known on the reset threshold of any synchronizing $ n $-state automaton is cubic in $n $ \cite{Pin,Frankl,Szykula}, while on the other hand automata having quadratic reset threshold, called \emph{extremal} automata, are very difficult to find and few of them are known (see e.g.\ \cite{Babai,GusevSzikulaDzyga,Rystsov1997,Szykula2015}). Some of these families have been found by Ananichev et.\ al.\  \cite{SlowAutom} by coloring the digraph of primitive matrices having large exponent; this has been probably the first time where primitivity has been succesfully used to shed light on synchronization.

\textbf{Connecting primitivity and synchronization.}
The following definition and theorem establish the connection between primitive sets of binary NZ matrices and synchronizing automata. From here on, we will use the matrix representation of deterministic finite automata as described in Remark \ref{rem:aut}. 
\begin{definition}\label{def:assoc_autom}
 Let $ \mathcal{M}$ be a set of binary NZ matrices. The \emph{automaton associated to} the set $ \mathcal{M} $ is the automaton $  Aut(\mathcal{M}) $ such that $A\in Aut(\mathcal{M})$ if and only if $ A $ is a binary and row-stochastic matrix and there exists $ M\in \mathcal{M} $ such that $ A\leq M $ (entrywise). We denote with $  Aut(\mathcal{M}^{\top}) $ the automaton associated to the set $ \mathcal{M}^{\top}=\lbrace M^{\top}_1,\dots ,M^{\top}_m\rbrace  $.
\end{definition}

The following example exhibits a primitive set $ \mathcal{M}$ of NZ matrices and the synchronizing automata $  Aut(\mathcal{M}) $ and $ Aut(\mathcal{M}^{\top}) $.
\begin{example}\label{ex}
Here we present a primitive set and its associated automata, see also Figure \ref{fig:twoautom2}.
\begin{align*}
\mathcal{M}\!&=\!\left\lbrace  
\left( \begin{smallmatrix} 0 & 1&0 \\ 0&0&1 \\  1&0&0 \end{smallmatrix}\right) , \left( \begin{smallmatrix} 0 & 1&0 \\  1&0&1 \\ 0&0 & 1 \end{smallmatrix}\right) \right\rbrace , \\
Aut(\mathcal{M})\!&=\!\Bigl\lbrace  
a=\left( \begin{smallmatrix} 0 & 1&0 \\ 0&0&1 \\  1&0&0 \end{smallmatrix}\right),\, b_1 =\left( \begin{smallmatrix} 0 & 1&0 \\  1&0&0 \\ 0&0 & 1 \end{smallmatrix}\right) ,\,b_2=\left(  \begin{smallmatrix} 0 & 1&0 \\  0&0&1 \\ 0&0 & 1 \end{smallmatrix}\right) \Bigr\rbrace\\
Aut(\mathcal{M}^{\top})\!&=\!\Bigl\lbrace  
a'=\left( \begin{smallmatrix} 0 & 0&1 \\ 1&0&0 \\  0&1&0 \end{smallmatrix}\right) , \,b_1=\left(  \begin{smallmatrix} 0 & 1&0 \\  1&0&0 \\ 0&0 & 1 \end{smallmatrix}\right),\, b'_2=\left(  \begin{smallmatrix} 0 & 1&0 \\  1&0&0 \\ 0&1 & 0 \end{smallmatrix}\right) \Bigr\rbrace . 
\end{align*}
%\\The synchronizing DFAs $  Aut(\mathcal{M}) $ and $ Aut(\mathcal{M}^T) $ are the following (see also Fig. \ref{fig:twoautom2});  % in both their matrix and graph representation (Figure \ref{fig:twoautom}).
%associated, respectively, to $ \mathcal{M} $ and $ \mathcal{M}^T $ 
%\vspace{-0.5cm}
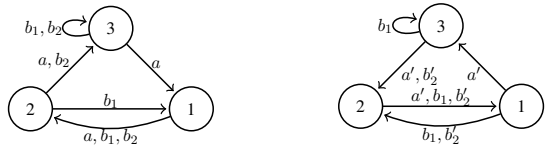
\begin{figure}
\centering
%$  Aut(\mathcal{M})\,\, $
\begin{tikzpicture}[shorten >=1pt,node distance=2.5cm,on grid,auto,scale=0.6,transform shape,inner sep=0pt,bend angle=15,line width=0.2mm]
\node[state]    (q_3) {3 };
 			\node[state]    (q_1) [ below right=of q_3] {1};
 			\node[state]          (q_2) [below left=of q_3] {2};

 			\path[->]
 			(q_1) edge	[bend left=20]	 node  {$ a,b_1,b_2 $} (q_2)
 			(q_2) edge []	node  {$ b_1 $} (q_1)
 			(q_3) edge [loop left]		node  {$ b_1,b_2 $ } ()
 			(q_2) edge []		node  {$ a,b_2 $} (q_3)
 			(q_3) edge []		node  {$ a $} (q_1)
 			;  
\end{tikzpicture}$ \qquad \qquad$
\begin{tikzpicture}[shorten >=1pt,node distance=2.5cm,on grid,auto,scale=0.6,transform shape,inner sep=0pt,bend angle=15,line width=0.2mm]

%node[shape = circle split, draw, line width = 1pt,
         % minimum size = 10mm, inner sep = 0mm, font = \sffamily\large,
          %rotate=30]

\node[state]    (q_3) {3 };
 			\node[state]    (q_1) [ below right=of q_3] {1};
 			\node[state]          (q_2) [below left=of q_3] {2};

 			\path[->]
 			(q_1) edge	[bend left=20]	 node  {$ b_1,b'_2 $} (q_2)
 			(q_2) edge []	node  {$ a',b_1,b'_2 $} (q_1)
 			(q_3) edge [loop left]		node  {$ b_1 $ } ()
 			(q_3) edge []		node  {$ a',b'_2 $} (q_2)
 			(q_1) edge []		node  {$ a' $} (q_3)
 			;  
\end{tikzpicture}
\caption{The automata $ Aut(\mathcal{M}) $ (left) and $ Aut(\mathcal{M}^{\top}) $ (right) of Example \ref{ex}.}\label{fig:twoautom2}
\end{figure}
%\vspace{-0.5cm}
\end{example}
The following theorem establishes how $ exp(\mathcal{M}) $, $ rt\bigl( Aut(\mathcal{M})\bigr) $ and $ rt\bigl(Aut(\mathcal{M}^{\top})\bigr) $ are in relation to each others.

\begin{theorem}[\cite{BlonJung} Theorems 16-17, \cite{GerenGusJung} Theorem 2]\label{thm:autom_matrix}
Let $ \mathcal{M}\!=\!\lbrace M_1,\dots ,M_m\rbrace $ be a primitive set of $ n\times n $ binary NZ matrices. Then $  Aut(\mathcal{M})$ and $ Aut(\mathcal{M}^{\top})$ are synchronizing and it holds that: 
 %there exists two $ n $-states synchronizing automata $ \mathcal{A} $ and $ \mathcal{A}' $ such that:
 
\begin{equation}\label{eq:thmauotm_mat}
rt\bigl( Aut(\mathcal{M})\bigr)\leq exp(\mathcal{M}) \leq rt\bigl( Aut(\mathcal{M})\bigr)+rt\bigl(Aut(\mathcal{M}^{\top})\bigr)+n-1.
\end{equation}

\end{theorem}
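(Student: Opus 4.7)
The plan is to prove the two inequalities by turning ``primitivity witnesses'' (positive products of $\mathcal{M}$) into ``reset witnesses'' (synchronizing words of $Aut(\mathcal{M})$) and back. The synchronizability of both $Aut(\mathcal{M})$ and $Aut(\mathcal{M}^\top)$ will fall out of these constructions, because $M_{i_1}\cdots M_{i_r}>0$ implies $M_{i_r}^{\top}\cdots M_{i_1}^{\top}>0$, so $\mathcal{M}^\top$ is automatically primitive and NZ and can be treated symmetrically.

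For the left inequality $rt(Aut(\mathcal{M}))\leq exp(\mathcal{M})$, I would fix a positive product $M_{i_1}\cdots M_{i_r}$ with $r=exp(\mathcal{M})$ and any column index $k^{*}$, and then introduce the \emph{backward funnel sets}
\[
T_{l}=\bigl\{\,j:(M_{i_{l+1}}\cdots M_{i_{r}})_{j,k^{*}}>0\,\bigr\},\qquad l=0,\ldots,r,
\]
which satisfy $T_{0}=\{1,\ldots,n\}$ and $T_{r}=\{k^{*}\}$. Splitting
\[
(M_{i_{l}}\cdots M_{i_{r}})_{j',k^{*}}=\sum_{j''}(M_{i_{l}})_{j',j''}\,(M_{i_{l+1}}\cdots M_{i_{r}})_{j'',k^{*}}
\]
shows that for every $j'\in T_{l-1}$ one can pick $j''\in T_{l}$ with $(M_{i_{l}})_{j',j''}>0$; I would use these picks to define a binary row-stochastic matrix $A_{l}\leq M_{i_{l}}$ row by row (completing the rows outside $T_{l-1}$ by any admissible choice, which exists thanks to NZ). A straightforward induction then gives that $A_{l}\cdots A_{1}$ maps every state into $T_{l}$, so $A_{1}\cdots A_{r}$ is a reset word of $Aut(\mathcal{M})$ of length $r$ synchronizing to $k^{*}$. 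The delicate point here, which I view as the main obstacle, is to realise that such a \emph{state-independent} choice of refinement exists at all: the invariant provided by the sets $T_l$ is precisely what makes this simultaneous funnelling feasible in exactly $r$ steps.

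For the right inequality I would set $s=rt(Aut(\mathcal{M}))$ and $t=rt(Aut(\mathcal{M}^\top))$. A reset word $A_{1}\cdots A_{s}$ of $Aut(\mathcal{M})$ has an all-ones column, so by monotonicity of the matrix product (and $A_{l}\leq M_{j_{l}}\in\mathcal{M}$) the product $P=M_{j_{1}}\cdots M_{j_{s}}$ has a positive column at some index $k$. Transposing a reset word of $Aut(\mathcal{M}^\top)$ of length $t$ dually yields a product $Q$ of matrices of $\mathcal{M}$ of length $t$ with a positive row at some index $j$. Primitivity forces the union digraph of $\mathcal{M}$ (with an edge $u\to v$ whenever $(M_i)_{u,v}>0$ for some $i$) to be strongly connected, hence of diameter at most $n-1$, so I can realise the $k$-to-$j$ path as a product $R$ of matrices of $\mathcal{M}$ of length at most $n-1$ (the empty product if $k=j$) with $R_{k,j}>0$. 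The estimate $(PRQ)_{a,b}\geq P_{a,k}\,R_{k,j}\,Q_{j,b}>0$ then exhibits a positive product of $\mathcal{M}$ of length at most $s+(n-1)+t$, which is the desired bound.
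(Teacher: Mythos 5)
Your proposal is correct, and it follows essentially the same route as the cited proofs and as the paper's own argument for the analogous Proposition~\ref{prop:krt_prim_aut}: your backward funnel sets $T_l$ are just a cleaner, state-independent packaging of the path-pruning construction used there to extract row-stochastic $A_l\leq M_{i_l}$, and the concatenation $PRQ$ with a connecting path of length at most $n-1$ is the standard argument for the upper bound. (Only a cosmetic slip: the composed map after reading the prefix is $A_1\cdots A_l$ acting on row vectors, not $A_l\cdots A_1$; your conclusion is stated correctly.)
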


\begin{example}
Consider the set $ \mathcal{M} $ and the automata $ Aut(\mathcal{M}) $ and $ Aut(\mathcal{M}^{\top}) $ of Example \ref{ex}. It holds that $ exp(\mathcal{M})\!=\!7 $, $ rt\bigl( Aut(\mathcal{M})\bigr)\!=\!2 $ and $ rt\bigl(Aut(\mathcal{M}^{\top})\bigr)\!=\!3 $, thus showing that the upper bound of Eq.(\ref{eq:thmauotm_mat}) is tight. A reset word for $ Aut(\mathcal{M}) $ is $ w=b_2b_2 $, while a reset word for $Aut(\mathcal{M}^{\top})$ is $ w=b'_2a'b'_2 $.
\end{example}

Notice that the requirement in Theorem \ref{thm:autom_matrix} that the set $\mathcal{M} $ has to be made of \emph{binary} matrices is not restrictive, as the primitivity property does not depend on the magnitude of the positive entries of the matrices of the set. We can thus restrict ourselves to the set of binary matrices by using the Boolean product between them\footnote{In other words, we work with matrices over the Boolean semiring.}; this means that for any $ A $ and $ B $ binary matrices, we set $ (AB)_{ij}=1 $ any time that $ \sum_{s}A_{is}B_{sj}>0 $.  In this framework, primitivity can be also rephrased as a \emph{membership problem} (see e.g.\ \cite{Paterson,Potapov2017}), where we ask whether the all-ones matrix belongs to the semigroup generated by the matrix set. 

Equation (\ref{eq:thmauotm_mat}) shows that the behavior of the exponent of a primitive set of NZ matrices is tightly connected to the behavior of the reset threshold of its associated automaton. A primitive set $ \mathcal{M} $ with quadratic exponent implies that one of the automata $  Aut(\mathcal{M}) $ or $ Aut(\mathcal{M}^{\top}) $ has quadratic reset threshold; in particular, a primitive set with exponent greater than $ 2(n-1)^2+n-1 $ would disprove the \v{C}ern\'{y} conjecture. 
%On the other hand, if we define $ exp_{NZ}(n) $ to be the maximal exponent among the primitive sets of $ n\times n $ NZ-matrices, then a (quadratic) upper bound on $ exp_{NZ}(n) $ would lead to a (quadratic) upper bound on the reset threshold of any $ n $-states synchronizing automaton associated to some primitive set.
%if the \v{C}ern\'{y} conjecture holds true, then $ exp(\mathcal{M})\leq 2n^2-3n+1 $ for every primitive NZ-set $ \mathcal{M} $ of matrix size $ n\times n $.
This property has been used by the authors in \cite{CatalanoJALC} to construct a randomized procedure for finding extremal synchronizing automata.
%It follows from the above considerations that a better understanding of the primitivity phenomenon would give a further insight on the synchronization of automata, other than being of interest by itself. 

The synchronization problem for automata is about finding the length of the shortest word mapping the whole set of $ n $ states onto one single state. We can weaken this request by asking what is the length of the shortest word mapping $ k $ states onto one single state, for $ 2\leq k\leq n $. In the matrix framework, we are asking what is the length of the shortest product having a column with $ k $ positive entries. The case $ k=2 $ is trivial, as any synchronizing automaton has a letter mapping two states onto one; for $ k=3 $ Gonze and Jungers \cite{Gonze2015} presented a quadratic upper bound in the number of the states of the automaton  while, to the best of our knowledge, the cases $ k\geq 4 $ are still open. Clearly, the case $ k=n $ is the problem of computing the reset threshold.

In view of the connection between synchronizing automata and primitive sets, we extend the above described problem to primitive sets by introducing the \emph{k-rendezvous time} ($ k $-RT): the $ k $-RT of a primitive set $ \mathcal{M} $ is the length of its shortest product having a row or a column with $ k $ positive entries. The following proposition shows how the $ k $-RT of a primitive set $ \mathcal{M} $ of NZ matrices (denoted by $ rt_k(\mathcal{M}) $) is linked to the length of the shortest word for which there exists a set of $ k $ states mapped by it onto a single state in the automata $ Aut(\mathcal{M}) $ and $ Aut(\mathcal{M}^{\top}) $, where the lengths are denoted respectively by $ rt_k(Aut(\mathcal{M})) $ and $ rt_k(Aut(\mathcal{M}^{\top})) $.
\begin{proposition}\label{prop:krt_prim_aut}
Let $ \mathcal{M} $ be a primitive set of $ n\times n $ binary NZ matrices and let $ Aut(\mathcal{M}) $ and $ Aut(\mathcal{M}^{\top}) $ be the automata defined in Definition \ref{def:assoc_autom}. Then for every $ 2\leq k\leq n $, it holds that
\[\,
rt_k(\mathcal{M})=\min \bigl\lbrace rt_k\bigl(Aut(\mathcal{M})\bigr), rt_k\bigl(Aut(\mathcal{M}^{\top})\bigr)\bigr\rbrace\, .
\]
\end{proposition}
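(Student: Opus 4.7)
The plan is to prove the equality by establishing the two opposite inequalities. For the direction $rt_k(\mathcal{M})\leq \min\{rt_k(Aut(\mathcal{M})),rt_k(Aut(\mathcal{M}^{\top}))\}$, I would exploit Boolean domination: since every $A\in Aut(\mathcal{M})$ satisfies $A\leq M$ (entrywise) for some $M\in\mathcal{M}$, any product $A_1\cdots A_r$ of matrices from $Aut(\mathcal{M})$ is dominated over the Boolean semiring by a suitable product $M_{i_1}\cdots M_{i_r}$ with $A_t\leq M_{i_t}$. Hence a column with $k$ ones in the former produces one in the latter. For $Aut(\mathcal{M}^{\top})$, the same reasoning yields a column with $k$ ones in a product of $\mathcal{M}^{\top}$; transposing gives a row with $k$ ones in a product of $\mathcal{M}$ of the same length, so $rt_k(\mathcal{M})$ is still bounded.

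For the converse, I would take an optimal product $P=M_{i_1}\cdots M_{i_r}$ of length $r=rt_k(\mathcal{M})$ and first handle the case when $P$ has a column $c$ with $k$ ones in rows $l_1,\dots,l_k$ (the row case reduces to this by applying the same argument to $\mathcal{M}^{\top}$, whose associated automaton is $Aut(\mathcal{M}^{\top})$). Since $(P)_{l_s,c}=1$, for each $s$ there is a witness path of states $l_s=j_0^{(s)},j_1^{(s)},\dots,j_r^{(s)}=c$ with $(M_{i_t})_{j_{t-1}^{(s)},j_t^{(s)}}=1$ for every $t$. The objective is to pick row-stochastic matrices $A_t\leq M_{i_t}$ in $Aut(\mathcal{M})$ whose product $A_1\cdots A_r$ realises all $k$ witness paths at once, thereby exhibiting column $c$ with at least $k$ ones and proving $rt_k(Aut(\mathcal{M}))\leq r$.

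The only subtle point is consistency: row $l$ of $A_t$ must commit to a single successor column, yet several paths might traverse $l$ at time $t-1$ with distinct successors at time $t$. I would resolve this by \emph{merging paths}: scanning $t$ from $r$ backwards, whenever two paths coincide at some time $t^{*}$ I would replace the tail of one by the tail of the other (possible because they share the endpoint $c$). After this normalisation, at every time $t$ and every state $l$ visited at time $t-1$ the designated successor is unambiguous; rows not visited by any path can be assigned an arbitrary positive entry of $M_{i_t}$ (which exists because $M_{i_t}$ has no zero row, this being the only place the NZ hypothesis is used). The matrices $A_t$ so built belong to $Aut(\mathcal{M})$, and by construction $A_1\cdots A_r$ maps each $l_s$ to $c$.

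The main obstacle is exactly this consistency issue: an optimal product in $\mathcal{M}$ may encode $k$ routings to the target column that locally disagree at intermediate times, and the path-merging argument, crucially using that all witnesses share the same endpoint, is what reconciles them within a single automaton product of the same length. Once this is in place, combining the two inequalities yields the claimed equality.
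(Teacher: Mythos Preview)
Your proposal is correct and follows essentially the same approach as the paper: Boolean domination for the inequality $rt_k(\mathcal{M})\leq\min\{\cdot,\cdot\}$, and for the converse the construction of row-stochastic $A_t\leq M_{i_t}$ via witness paths together with the path-merging normalisation (if two paths meet at some time, identify their tails) to resolve the consistency issue, with the NZ hypothesis used to fill in rows not visited by any path. The paper argues in the same way, phrased through the digraphs $D_r$ of the $M_{i_r}$ and their pruned subgraphs, and handles the row case by passing to $\mathcal{M}^{\top}$ exactly as you do.
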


The proof of Proposition \ref{prop:krt_prim_aut} mimics the proof of Theorem 16 in \cite{BlonJung}; for the sake of completeness, we provide a self-contained proof.

\begin{proof}
%We remind that the \emph{support} of a nonnegative vector $ v $ is the set of the indices of its positive entries and its \emph{weight} is the cardinality of its support.\\
By Definition \ref{def:assoc_autom}, each matrix of $ Aut(\mathcal{M}) $ and $ Aut(\mathcal{M}^{\top}) $ is entrywise smaller than a matrix of $ \mathcal{M} $. It follows that $ rt_k(\mathcal{M})\leq\min \bigl\lbrace rt_k\bigl(Aut(\mathcal{M})\bigr), rt_k\bigl(Aut(\mathcal{M}^{\top})\bigr)\bigr\rbrace $. 
\\Let now $ M=M_{i_1}\cdots M_{i_u} $ be the product that attains the $ k $-RT, that is a product of length $ rt_k(\mathcal{M}) $ having a column or a row with $ k $ positive entries. Suppose that $ M $ has column with $ k $ positive entries: we show that $ rt_k(\mathcal{M})\geq rt_k\bigl(Aut(\mathcal{M})\bigr) $. Let $ j $ be the index of this column and $ S $ be its support.
We claim that for every $  r\in [u] $ we can safely set to zero some entries of $ M_{i_r} $ in order to make its rows be stochastic while making sure that the final product still has the $ j $-th column with support $ S $. In other words, we claim that for every $ r\in [u] $ we can select a binary row-stochastic matrix $ A_{r}\leq M_{i_r}$ (entrywise) such that the $ j $-th column of the product $ A_{1}\cdots A_{u}$ has support $ S $. If this is true, since by hypothesis the matrices $ A_{1},\dots ,A_{u} $ belong to $  Aut(\mathcal{M}) $, it holds that $ rt_k\bigl(Aut(\mathcal{M})\bigr)\leq rt_k(\mathcal{M}) $.\\
We now prove the claim: let $ D_r $ be the digraph on $ n $ vertices and edge set $ E_r $ such that $ p\rightarrow q\in E_r $ if and only if $ (M_{i_r})_{pq}>0 $. The fact that the $j $-th column of $ M_{i_1}\cdots M_{i_u}$ has support $ S $ means that for every $ s\in S $ there exists a sequence of vertices $ v^s_1,\dots ,v^s_{u+1}\in [n] $ such that:
\begin{align}
&v^s_1=s\enspace ,\label{eq:fin1new}\\
&v^s_{u+1}=j\enspace ,\label{eq:fin2new}\\
& v^s_r\rightarrow v^s_{r+1}\in E_r\quad \forall r=1,\dots ,u\enspace .\label{eq:fin3new}
\end{align}
We can impose an additional property on these sequences: if at step $ t $ two sequences share the same vertex, then they have to coincide for all the steps $ t'>t $. More formally, if for some $ t\in [u] $ we have that $ v^s_t=v^{s'}_t $ for $ s\neq s' $, then we set $  v^{s'}_{t'}=v^{s}_{t'} $ for all $ t'>t $ as the new sequence $ v^{s'}_1,\dots ,v^{s'}_t,v^{s}_{t+1},\dots ,v^{s}_{u+1}  $ for vertex $ s' $ fulfills all the requirements (\ref{eq:fin1new}), (\ref{eq:fin2new}) and (\ref{eq:fin3new}). For every $ r\in [u] $, we now remove from $ E_r $ all the edges that are not of type (\ref{eq:fin3new}). Furthermore, for every $ r\in [u] $ and vertex $ w\notin \lbrace v^s_r\rbrace_{s\in S} $, we remove from $ E_r $ all the outgoing edges of $ w $ but one. We call this new edge set $ \tilde{E}_r $ and let $ \tilde{D}_r $ be the subgraph of $ D_r $ with edge set $ \tilde{E}_r $. Then, for every $ r\in [u] $, we set $ A_r $ to be the adjacency matrix of $ \tilde{D}_r $. Since $M_{i_r}$ is NZ, if $ A_r $ has some zero-rows we can always add a one in each of them while preserving the property $A_r \le M_{i_r}$. We do so in order to make $A_r$ row stochastic. By construction, for all $ r\in [u] $, $ A_r $ has exactly one positive entry in each row and it is entrywise smaller than $ M_{i_r} $, so $ A_r\in Aut(\mathcal{M}) $. Finally, the $ j $-th column of  $ A_{1}\cdots A_{u} $ has support $ S $ by construction.\\
The case when $ M $ has a row with $ k $ positive entries can be proved via a similar reasoning by observing that the product $ M^{\top}=M^{\top}_{i_u}\cdots M^{\top}_{i_1} $ has a column with $ k $ positive entries, and so for every $ r\in [u] $ we can select a binary matrix $ B_{r}\leq M^{\top}_{i_r}$ (entrywise) such that $ B_r\in Aut(\mathcal{M}^{\top}) $ and $ B_{1}\cdots B_{u} $ has a column with $ k $ positive entries. This implies that $ rt_k\bigl(Aut(\mathcal{M}^{\top})\bigr)\leq rt_k(\mathcal{M})$.
\end{proof}
%$  $\\

\textbf{Our contribution.}
This paper comes as an extended version of the one published at the conference \emph{Developments in Language Theory 2019} \cite{AzfarCatalano}. With respect to the conference version, the entire Sections \ref{sec:improv_ub} and \ref{sec:num_F} have been added; also other small parts and some notation have been changed.
In this work we provide an analytical upper bound on $rt_k(\mathcal{M})  $ that holds for any primitive set $ \mathcal{M} $ of $ n\times n $ NZ matrices. This upper bound is a function of $ n $ and $ k$, and it proves in particular that the $k$-rendezvous time $ rt_k(\mathcal{M}) $ is upper bounded by a linear function in $ n $ for any fixed $ k\leq\sqrt{n} $, problem that is still open for synchronizing automata. Our result also implies that for any fixed $ k\leq\sqrt{n} $,
$ \min \bigl\lbrace rt_k\bigl(Aut(\mathcal{M})\bigr), rt_k\bigl(Aut(\mathcal{M}^{\top})\bigr)\bigr\rbrace $ is upper bounded by a linear function in $ n $. This is presented in Section \ref{sec:krt}; in particular, in Subection \ref{sec:num_B} we report some numerical experiments and we show that this first technique for upper bounding $ rt_k(\mathcal{M}) $ cannot be much improved as it is. 
We then present in Section \ref{sec:improv_ub} a second upper bound for the $ k $-RT that improves the first one for any $ 2\leq k\leq n $. We report some numerical experiments in Section \ref{sec:num_F} comparing our two theoretical upper bounds on the $ k $-RT with the real one (or an approximation when it becomes too hard to compute) for some examples of primitive sets. Finally, as the second upper bound cannot be written in closed form, in the same section we present some graphs picturing its behavior, showing that when $ n $ is not too big with respect to $ k $ the second upper bound significantly improves on the first one.

\section{Notation and preliminaries}\label{sec:notation}
The set $ \lbrace 1,\dots ,n\rbrace $ is represented by $ [n] $. The \emph{support} of a nonnegative vector $ v $ is the set $supp(v)= \lbrace i: v_i>0\rbrace $ and the \emph{weight} of a nonnegative vector $ v $ is the cardinality of its support.

Given a matrix $ A $, we denote by $ A_{*j} $ its $ j $-th column and by $ A_{i*}$ its $ i $-th row. % $ A^T $ represents its transpose,
%A \emph{binary} matrix is a matrix having entries in $ \lbrace 0,1\rbrace $.
% We say that a nonnegative matrix is NZ if it has at least one positive entry in every row and every column. 
A \emph{permutation} matrix is a binary matrix having exactly one positive entry in every row and every column. We remind that an $n\times n$ matrix $ A $ is called \emph{irreducible} if for any $i,j \in [n]$, there exists a natural number $k$ such that $(A^k)_{ij}>0$. A matrix $A$ is called \emph{reducible} if it is not irreducible.
 
 Given $ \mathcal{M} $ a set of matrices, we denote by $ \mathcal{M}^d $ the set of all the products of at most $ d $ matrices from $ \mathcal{M}$. A set of matrices $ \mathcal{M}=\lbrace M_1,\dots ,M_m\rbrace $ is \emph{reducible} if the matrix $ \sum_{i}M_i $ is reducible, otherwise it is called \emph{irreducible}. %A set of nonnegative matrices is called \emph{primitive} if there exists a product of these matrices (with repetitions allowed) that is entrywise positive. 
 Irreducibility is a necessary but not sufficient condition for a matrix set to be primitive (see \cite{ProtVoyn}, Section 1). Given a directed graph $ D=(V,E) $, we denote by $ v\rightarrow w $ the directed edge leaving $ v $ and entering in $ w $ and by $ v\rightarrow w\in E $ the fact that the edge $ v\rightarrow w  $ belongs to the digraph $ D $. A directed graph is \emph{strongly connected} if there exists a directed path from any vertex to any other vertex.
%\begin{definition}\label{defn:associatedD_M}
%Given a set $ M $ of $ n\times n $ nonnegative matrices, the \emph{associated graph} of $ M $ is the labeled directed graph $ \mathscr{D}_M=(V_M,E_M ) $, where $ V_M=[n] $ and $ i\rightarrow j\in E_M $ iff there exists a matrix $ C\in M $ such that $ C_{ij}>0$. We label the edge $ i\rightarrow j\in E_M $ by all the matrices $ C\in M $ such that $ C_{ij}>0$.
%\end{definition}
%We remind that a directed graph is \emph{strongly connected} if there exists a directed path from any vertex to any other vertex. It is easy to see that a matrix set $ M $ is irreducible if and only if $ \mathscr{D}_M $ is strongly connected.

\begin{lemma}\label{lem:prem_col}
Let $ \mathcal{M} $ be an irreducible set of $ n\times n $ NZ matrices, $ A\in \mathcal{M} $ and $ i,j\in [n] $. Then there exists a matrix $ B\in \mathcal{M}^{n-1} $ such that $ supp(A_{*i}) \subseteq supp((AB)_{*j}) $.
\end{lemma}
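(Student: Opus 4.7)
The plan is to reduce the column-support inclusion to a single-entry condition on $B$, and then to build $B$ from the irreducibility of $\mathcal{M}$. The key observation is that whenever $B_{ij}>0$, the desired inclusion comes for free: for every $r\in supp(A_{*i})$ one has $A_{ri}>0$, and hence
\[
(AB)_{rj}=\sum_{s=1}^{n}A_{rs}B_{sj}\;\geq\;A_{ri}B_{ij}\;>\;0,
\]
which places $r$ in $supp((AB)_{*j})$. Thus the whole lemma reduces to exhibiting some $B\in\mathcal{M}^{n-1}$ whose $(i,j)$-entry is positive.

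To produce such a $B$, I would pass to the support digraph $G=([n],E)$ associated to $\mathcal{M}$, where $p\to q\in E$ iff $(M_k)_{pq}>0$ for some $k$. The irreducibility of $\mathcal{M}$ is exactly irreducibility of $\sum_k M_k$, equivalently strong connectedness of $G$. When $i\neq j$, strong connectedness yields a simple directed path $i=v_0\to v_1\to\cdots\to v_t=j$ with $t\leq n-1$; choosing, for each edge $v_s\to v_{s+1}$, an index $l_s$ with $(M_{l_s})_{v_s v_{s+1}}>0$ gives a product $B=M_{l_0}M_{l_1}\cdots M_{l_{t-1}}\in\mathcal{M}^{n-1}$ satisfying $B_{ij}\geq\prod_{s=0}^{t-1}(M_{l_s})_{v_s v_{s+1}}>0$, as required. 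When $i=j$, one simply takes $B$ to be the empty product, i.e.\ $B=I\in\mathcal{M}^{n-1}$, for which $B_{ii}=1$.

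The argument is short and presents no real obstacle, provided one adopts the convention that $\mathcal{M}^{n-1}$ contains the empty product (the identity matrix); this really is needed for the diagonal case, since if $\mathcal{M}$ consists of a single cyclic permutation of $[n]$, every cycle of $G$ through a fixed vertex $i$ has length divisible by $n$, so no nonempty product of fewer than $n$ matrices from $\mathcal{M}$ can achieve $B_{ii}>0$. One may also remark that the NZ hypothesis is not really exploited in this step—only irreducibility is—so its role is presumably to ensure that the conclusion of the lemma remains useful when it is iterated or combined with further column-chaining arguments later in the paper.
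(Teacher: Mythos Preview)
Your argument is correct and follows essentially the same route as the paper: both reduce the statement to finding $B\in\mathcal{M}^{n-1}$ with $B_{ij}>0$, pass to the support digraph of $\mathcal{M}$, and use strong connectedness (equivalent to irreducibility) to extract a path from $i$ to $j$ of length at most $n-1$ whose labels give the desired product. Your treatment is in fact slightly more careful than the paper's, since you separate out the case $i=j$ and point out that the empty product is genuinely needed there (your cyclic-permutation example shows this); the paper's proof leaves this case implicit.
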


\begin{proof}
We consider the labeled directed graph $ \mathscr{D}_{\mathcal{M}}=(V, E) $ where $ V=[n] $ and $ i\rightarrow j\in E $ iff there exists a matrix $ A\in \mathcal{M} $ such that $ A_{ij}>0$. We label the edge $ i\rightarrow j\in E $ by all the matrices $ A\in \mathcal{M} $ such that $ A_{ij}>0$.  %It is easy to see that a matrix set $ \mathcal{M} $ is irreducible if and only if $ \mathscr{D}_{\mathcal{M}} $ is strongly connected.
Notice that a path in $ \mathscr{D}_{\mathcal{M}} $ from vertex $ k $ to vertex $ l $ having the edges sequentially labeled by the matrices $ A_{s_1},\dots ,A_{s_r} \in \mathcal{M} $ means that $ (A_{s_1}\cdots A_{s_r})_{kl}>0 $. Since $ \mathcal{M} $ is irreducible, it follows that $ \mathscr{D}_{\mathcal{M}} $ is strongly connected and so any pair of vertices in $ \mathscr{D}_{\mathcal{M}} $ is connected by a path of length at most $ n-1 $. Consider a path connecting vertex $ i $ to vertex $ j $ whose edges are sequentially labeled by the matrices $ A_{s_1},\dots ,A_{s_t} $ and let $ B=A_{s_1}\cdots A_{s_t} $. Clearly $ B\in \mathcal{M}^{n-1} $; furthermore it holds that $ B_{ij}>0 $ and so $ supp( A_{*i}) \subseteq supp\bigl( (AB)_{*j}\bigr) $.
%Consider the associated graph $ \mathscr{D}_M=(V_M, E_M) $ of Definition \ref{defn:associatedD_M}: notive that a path in $ \mathscr{D}_M $ from vertex $ k $ to vertex $ l $ having the edges sequentially labels by the matrices $ A_{s_1},\dots ,A_{s_r} $ from $ M $ means that $ (A_{s_1}\cdots A_{s_r})_{kl}>0 $. Since $ M $ is irreducible, $ \mathscr{D}(M) $ is strongly connected and so, since $ E_M $ has cardinality $ n $, any pair of vertices are connected by a path of length at most $ n-1 $. Consider a path connecting vertex $ i $ to vertex $ j $ whose edges are sequentially labeled by the matrices $ A_{s_1},\dots ,A_{s_t} $ from $ M $ and let $ B=A_{s_1}\cdots A_{s_t} $. Clearly $ B\in M^{n-1} $; furthermore it holds that $ B_{ij}>0 $ and so $ supp\lbrace A_{*i}\rbrace \subseteq supp\lbrace (AB)_{*j}\rbrace $. 
\end{proof}
The following definition will be crucial for the results in the next sections.

\begin{definition}\label{defn:sg}
Let $ \mathcal{M} $ be an irreducible set of $ n\times n $ NZ matrices. We define the \emph{pair digraph} of the set $ \mathcal{M} $ as the labeled directed graph $ \mathcal{PD}(\mathcal{M})=(\mathcal{V},\mathcal{E} ) $ where $ \mathcal{V}=\lbrace (i,j): 1\leq i\leq j\leq n\rbrace $ and $ (i,j)\rightarrow (i',j') \in \mathcal{E} $ if and only if there exists $ A\!\in\! \mathcal{M} $ such that 
\begin{equation}\label{eq:sg}
 A_{ii'}>0  \text{ and }  A_{jj'}>0, \text{ or } A_{ij'}>0 \text{ and } A_{ji'}>0. 
\end{equation}
An edge $ (i,j)\!\rightarrow\! (i',j')\! \in\!\mathcal{E}$ is labeled by every matrix $ A\!\in\! \mathcal{M} $ for which Eq.\ (\ref{eq:sg}) holds. A vertex of type $ (s,s) $ is called a \emph{singleton}.
\end{definition}

\begin{lemma}\label{lem:sg}
Let $ \mathcal{M} $ be a finite set of $ n\times n $ NZ matrices and let $ \mathcal{PD}(\mathcal{M})=(\mathcal{V},\mathcal{E} ) $ be its pair digraph. Let $ i,j,k\in [n] $ and suppose that there exists a path in $ \mathcal{PD}(\mathcal{M})$ from the vertex $ (i,j) $ to the singleton $ (k,k) $ having the edges sequentially labeled by the matrices $ A_{s_1},\dots ,A_{s_l} \in \mathcal{M}$. Then it holds that for every $ A\in \mathcal{M} $, \[ supp(A_{*i})\cup supp( A_{*j}) \subseteq supp( (AA_{s_1}\cdots A_{s_l})_{*k}) \,. \] Suppose now that $ \mathcal{M} $ is irreducible. Then it holds that $ \mathcal{M} $ is primitive if and only if for any $ (i,j)\in\mathcal{V} $ there exists a path in $ \mathcal{PD}(\mathcal{M})$ from $ (i,j) $ to some singleton.
\end{lemma}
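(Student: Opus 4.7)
The plan is to dispose of the two assertions separately. For the support-containment claim, I would unpack a path $(i,j)=(i_0,j_0)\to\cdots\to(i_l,j_l)=(k,k)$ in $\mathcal{PD}(\mathcal{M})$: each edge forces one of the matchings ``$i_{p-1}\to i_p,\, j_{p-1}\to j_p$'' or ``$i_{p-1}\to j_p,\, j_{p-1}\to i_p$'' to consist of positive entries of $A_{s_p}$. By induction on $p$, I would pick the correct matching at each step to produce companion sequences $u_0=i,u_1,\ldots,u_l=k$ and $v_0=j,v_1,\ldots,v_l=k$ with $\{u_p,v_p\}=\{i_p,j_p\}$, along which every $(A_{s_p})_{u_{p-1},u_p}$ and $(A_{s_p})_{v_{p-1},v_p}$ is positive. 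Then for any $A\in\mathcal{M}$ and any $r\in supp(A_{*i})\cup supp(A_{*j})$, chaining $A_{ri}$ (or $A_{rj}$) with the positive entries along the appropriate sequence yields a positive contribution to $(AA_{s_1}\cdots A_{s_l})_{rk}$, proving the inclusion.

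For the $\Rightarrow$ direction of the biconditional, a positive product $W=A_{s_1}\cdots A_{s_l}$ gives $W_{ik},W_{jk}>0$ for any pair $(i,j)$ and any $k$; extracting witness index sequences and sorting each pair into $(\min,\max)$ traces a path from $(i,j)$ to $(k,k)$ in $\mathcal{PD}(\mathcal{M})$. The $\Leftarrow$ direction is the substantive part, which I would handle in two stages. \emph{Stage A}: iteratively merge column supports via the first assertion. Starting from any $M_0\in\mathcal{M}$ with a column of support $S_0\subsetneq[n]$, I pick $r\notin S_0$, use the NZ property of $M_0$ to locate another column containing $r$, and apply the hypothesis-supplied pair-digraph path from the corresponding pair to a singleton. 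Since the proof of the first assertion is insensitive to whether the left-multiplier lies in $\mathcal{M}$, this produces a product with strictly larger column support; iterating at most $n-1$ times yields $W\in\mathcal{M}^*$ with some column $W_{*k}$ of support $[n]$. \emph{Stage B}: upgrade $W$ to a fully positive product. Running the symmetric argument on $\mathcal{M}^\top$ produces $V\in\mathcal{M}^*$ with row $V_{k'*}$ all positive, while irreducibility (via Lemma \ref{lem:prem_col}) supplies $B\in\mathcal{M}^*$ with $B_{k,k'}>0$; then $(WBV)_{ij}\geq W_{ik}B_{k,k'}V_{k',j}>0$ for all $i,j$, so $WBV$ is entrywise positive. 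The pair-digraph hypothesis for $\mathcal{M}^\top$ follows from that of $\mathcal{M}$ under irreducibility---either directly, since $\mathcal{PD}(\mathcal{M}^\top)$ is the edge-reversal of $\mathcal{PD}(\mathcal{M})$ and irreducibility equates the two reachability conditions, or contrapositively via the Protasov--Voynov characterization of irreducible non-primitive sets as those preserving a non-trivial equal-block partition of $[n]$, which immediately obstructs any path in $\mathcal{PD}(\mathcal{M})$ from a cross-block pair to a singleton.

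The main obstacle I anticipate is Stage B, specifically the transfer of the pair-digraph condition from $\mathcal{M}$ to $\mathcal{M}^\top$: the hypothesis is phrased column-wise but primitivity also demands a row-wise witness, so either a symmetry argument at the pair-digraph level or the structural invariant-partition theorem is required to close the gap.
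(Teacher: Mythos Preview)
Your treatment of the first assertion and of the $\Rightarrow$ direction matches the paper's: both simply observe that a labeled path from $(i,j)$ to $(k,k)$ forces $(A_{s_1}\cdots A_{s_l})_{ik}>0$ and $(A_{s_1}\cdots A_{s_l})_{jk}>0$, whence the support inclusion is immediate.

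For the $\Leftarrow$ direction the paper does something much shorter than your Stages~A and~B: having noted that the hypothesis yields, for every $i,j$, some product $M$ and index $k$ with $M_{ik}>0$ and $M_{jk}>0$, it directly invokes Theorem~1 of Al'pin~\cite{alpin2013}, which states that this pairwise-merging property suffices for primitivity of an irreducible set. That is the whole argument.

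Your Stage~A is a correct and pleasant constructive proof that a full-column product exists. The difficulty is your option~(a) in Stage~B. The claim that ``irreducibility equates the two reachability conditions'' is not justified: what Stage~A produces (a product $W$ with column $k$ full) says precisely that in $\mathcal{PD}(\mathcal{M}^\top)$ the singleton $(k,k)$ reaches \emph{every} pair, which is the wrong direction. To run the symmetric column-building on $\mathcal{M}^\top$ you would need every pair to reach a singleton in $\mathcal{PD}(\mathcal{M}^\top)$, and I do not see how to derive that from the $\mathcal{M}$-hypothesis plus irreducibility without already knowing primitivity---the equivalence of the two reachability conditions is essentially what is being proved. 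Your option~(b) via Protasov--Voynov does work (a cross-block pair can never reach a singleton), but notice that this contrapositive already delivers the full $\Leftarrow$ implication in one stroke, so Stages~A and~B become an unnecessary detour. At that point you are citing an external structural theorem of the same depth as the paper's Al'pin reference; the approaches are then essentially equivalent, with the paper's being more direct.
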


\begin{proof}
By the definition of the pair digraph $\mathcal{PD}(\mathcal{M}) $ (Definition \ref{defn:sg}), the existence of a path from vertex $ (i,j) $ to vertex $ (k,k) $ labeled by the matrices $ A_{s_1},\dots ,A_{s_l} $ implies that $ (A_{s_1}\cdots A_{s_l})_{ik}>0 $ and $ (A_{s_1}\cdots A_{s_l})_{jk}>0 $. By Lemma \ref{lem:prem_col}, it follows that  $ supp( A_{*i})\cup supp( A_{*j}) \subseteq supp\bigl( (AA_{s_1}\cdots A_{s_l})_{*k}\bigr)  $.\\
Suppose now that $ \mathcal{M} $ is irreducible. If $ \mathcal{M} $ is primitive, then there exists a product $M$ of matrices from $\mathcal{M}$ such that for all $i,j$,  $M_{ij}>0$. By the definition of $\mathcal{PD}(\mathcal{M})$, this implies that any vertex in $\mathcal{PD}(\mathcal{M})$ is connected to any other vertex. On the other hand, if every vertex in $ \mathcal{PD}(\mathcal{M})$  is connected to some singleton, then for every $ i,j,k\in [n] $ there exists a product $ A_{s_1}\cdots A_{s_l} $ of matrices from $ \mathcal{M} $ such that $ (A_{s_1}\cdots A_{s_l})_{ik}>0 $ and $ (A_{s_1}\cdots A_{s_l})_{jk}>0 $. Theorem 1 in \cite{alpin2013} states that the following condition is sufficient for an irreducible matrix set $\mathcal{M}$ to be primitive: for all indices $i,j$, there exists an index $k$ and a product $M$ of matrices from $\mathcal{M}$ such that $M_{ik}>0$ and $M_{jk}>0$. Therefore, we conclude.
%This suffices to establish the primitivity of $ \mathcal{M} $ by Theorem 1 in \cite{alpin2013}; \textcolor{blue}{indeed such theorem states that the following condition is sufficient for an irreducible matrix set $\mathcal{M}$ to be primitive: for all indices $i,j$, there exists an index $k$ and a product $M$ of matrices from $\mathcal{M}$ such that $M_{ik}>0$ and $M_{jk}>0$. }
\end{proof}

\section{The k-rendezvous time and a recurrence relation for its upper bound}\label{sec:krt}
In this section, we define the $ k $-rendezvous time of a primitive set of $ n\times n $ NZ matrices, we find an upper bound $ U_k(n) $ on it, and we prove a recurrence relation for $ U_k(n) $. %the $ k $-rendezvous time of any primitive set of $ n\times n $ NZ matrices.
\begin{definition}
Let $ \mathcal{M} $ be a primitive set of $ n\times n $ NZ matrices and $ k $ an integer such that $ 2\leq k\leq n $. We define the \emph{$k$-rendezvous time} ($k$-RT) to be the length of the shortest product of matrices from $ \mathcal{M} $ having a column or a row with $ k $ positive entries and we denote it by $ rt_k(\mathcal{M}) $. We indicate with $ rt_k(n) $ the maximal value of $ rt_k(\mathcal{M}) $ among all the primitive sets $\mathcal{M}$ of $ n\times n $ NZ matrices.
\end{definition}
 Our goal is to find, for any $ n\geq 2 $ and $ 2\leq k\leq n $, a function $ U_k(n) $ such that $  rt_k(n)\leq U_k(n) $.
%We fix  $ n$  and $k $ integers such that $ n\geq 2 $ and $ 2\leq k\leq n-1 $. Let $\mathcal{S}^n_k$ be the set of all $n \times n$ NZ matrices having every row and column of weight\footnote{We remind that the \emph{weight} of a nonnegative vector is the number of its positive entries.} at most $k$ and at least one column of weight exactly $k$. 
\begin{definition}\label{defn:ank}
Let $ n$ and $k $ be two integers such that $ n\geq 2 $ and $ 2\leq k\leq n-1 $. We denote by $\mathcal{S}_n^k$ the set of all the $n \times n$ NZ matrices having every row and column of weight at most $k$ and at least one column of weight exactly $k$. 
For any $A \in \mathcal{S}_n^k$, let $\mathcal{C}_A$ be the set of the indices of the columns of $A$ having weight equal to $k$. We define $ a_k^n(A)=\min_{c\in \mathcal{C}_A}| \lbrace i: supp( A_{*i}) \nsubseteq  supp( A_{*c}) \rbrace|$ and $ a_k^n=\min_{A\in \mathcal{S}_n^k} a_k^n(A)$. 
\end{definition}
In other words, $ a_k^n(A)$ is the minimum over all the indices $ c\in \mathcal{C}_A $ of the number of columns of $ A $ whose support is not contained in the support of the $ c $-th column of $ A $.
Since the matrices are NZ, i.e.\ zero rows are not allowed, it holds that for any $ A\in \mathcal{S}_n^k $, $ 1\leq a_k^n\leq a_k^n(A) $. 
\begin{example}
If $ k=1 $, then $\mathcal{S}_n^1$ is the set of $ n\times n $ permutation matrices. In this case for any $A \in \mathcal{S}_n^1$, it holds that $ \mathcal{C}_A=[n]$ and $ a_1^n(A)=n-1=a_1^n $. Consider now the following matrices:
\begin{equation*}
A=\begin{pmatrix}
1&0&1&0\\
1&1&0&0\\
0&0&0&1\\
0&0&1&0
\end{pmatrix},
B=\begin{pmatrix}
1&1&1&0\\
1&1&0&0\\
0&0&0&1\\
0&0&1&0
\end{pmatrix},
C=\begin{pmatrix}
1&0&0&0\\
0&1&0&0\\
0&0&0&1\\
0&0&1&0
\end{pmatrix}.
\end{equation*}
It holds that $ A\in \mathcal{S}^2_4 $, while $ B,C\notin \mathcal{S}^2_4 $ because $ B $ has a row of weight $ 3 $ and $ C $ has no column of weight $ 2 $. Moreover:
\begin{itemize}
\item $ \mathcal{C}_A=\lbrace 1,3\rbrace $, 
\item $| \lbrace i: supp( A_{*i}) \nsubseteq  supp( A_{*1}) \rbrace|=| \lbrace 3,4\rbrace|=2$,
\item $| \lbrace i: supp( A_{*i}) \nsubseteq  supp( A_{*3}) \rbrace|=| \lbrace 1,2,4\rbrace|=3$,
\end{itemize}
and so $ a_2^4(A)=2 $. Therefore it holds that $ 1\leq a_2^4\leq 2 $. On the other hand, consider the following matrix:
\[
\hat{A}=\begin{pmatrix}
1&0&1&0\\
1&0&0&1\\
0&1&0&0\\
0&1&0&0
\end{pmatrix}
\]
We have that $ \hat{A}\in \mathcal{S}^2_4 $, $ \mathcal{C}_{\hat{A}}=\lbrace 1,2\rbrace $ and $| \lbrace i: supp( A_{*i}) \nsubseteq  supp( A_{*1}) \rbrace|=| \lbrace 2\rbrace|=1$, so $ a_2^4(\hat{A})=1 $. This implies that $ a_2^4=1$.

\end{example}

%\subsection{A recurrence relation for an upper bound on the $k$-RT}
The following theorem shows that for every $ n\geq 2 $, we can recursively define a function $ U_k(n)\geq rt_k(n) $ on $ k$ by using the term $ a_k^n $.% (see Definition \ref{defn:ank} above).
\begin{theorem}\label{recurrence}
Let $ n\geq 2 $ be an integer. The following recursive function $ U_k(n) $ is such that for all $ 2\leq k < n $, it holds that $ rt_k(n)\leq U_k(n) $:
\begin{equation}\label{eq:simple_rec}
\begin{cases}
    U_2(n) = 1 \\
    U_{k+1}(n) = U_k(n) + n(1+n-a_k^{n})/2 & \text{for } 2 \leq k \leq n-1.
  \end{cases}
\end{equation}
\end{theorem}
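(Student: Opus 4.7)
The plan is to proceed by induction on $k$. For the base case $k=2$, if every matrix in $\mathcal{M}$ had all its rows and columns of weight exactly one, then every matrix would be a permutation and so would every product, contradicting primitivity when $n\geq 2$; hence some $M\in\mathcal{M}$ has a row or column of weight at least $2$, giving $rt_2(\mathcal{M})\leq 1 = U_2(n)$.

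For the inductive step, assume $rt_k(n)\leq U_k(n)$, take $\mathcal{M}$ primitive, and apply the inductive hypothesis to obtain a product $P\in\mathcal{M}^{U_k(n)}$ with some row or column of weight at least $k$. Since $a_k^n$ and the whole argument are invariant under transposition (one can always work with $\mathcal{M}^\top$), we may assume the weight-$k$ vector is a column of $P$. If $P$ already has a row or column of weight $\geq k+1$ we are done; otherwise $P\in\mathcal{S}_n^k$, and we can pick $c\in\mathcal{C}_P$ attaining the minimum in Definition~\ref{defn:ank}, so that the set $B=\{j:supp(P_{*j})\not\subseteq supp(P_{*c})\}$ has size $|B|=a_k^n(P)\geq a_k^n$.

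The key observation is that any product $R\in\mathcal{M}^{L}$ having a column index $l$ with $R_{cl}>0$ and $R_{jl}>0$ for some $j\in B$ extends $P$ to a product $PR$ of length $U_k(n)+L$ whose $l$-th column contains $supp(P_{*c})\cup supp(P_{*j})\supsetneq supp(P_{*c})$ and so has weight at least $k+1$. By Definition~\ref{defn:sg}, such an $R$ corresponds exactly to a path of length $L$ in the pair digraph $\mathcal{PD}(\mathcal{M})$ from a vertex $(c,j)$ with $j\in B$ to a singleton, and Lemma~\ref{lem:sg} guarantees the existence of such a path thanks to the primitivity of $\mathcal{M}$.

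The main obstacle is then to bound $L\leq n(1+n-a_k^n)/2$. The pair digraph has $n(n+1)/2$ vertices, of which $n$ are singletons, and every vertex reaches a singleton. Running a reverse BFS from the $n$ singletons, each non-empty BFS level contributes at least one new vertex, so the BFS must hit one of the $\geq a_k^n$ target vertices $\{(c,j):j\in B\}$ within a controlled number of steps; a careful book-keeping of how many non-target pair-digraph vertices can be accumulated before that happens, exploiting the $a_k^n$ lower bound on the number of sources together with the total vertex count $n(n+1)/2$, should yield the claimed bound. Once $L$ is bounded in this way, concatenating $P$ with the $L$ matrices labelling the resulting path completes the induction.
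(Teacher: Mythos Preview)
Your base case and the overall inductive structure are fine, and you correctly identify the pair digraph $\mathcal{PD}(\mathcal{M})$ as the right arena for the path-length estimate. However, the counting you propose cannot deliver the bound $L\le n(1+n-a_k^n)/2$. From a single product $P$ you obtain only $|B|\ge a_k^n$ source vertices $\{(c,j):j\in B\}$. A reverse BFS from the $n$ singletons through the $n(n+1)/2$ vertices can then only guarantee a path of length at most $n(n+1)/2-n-a_k^n+1=n(n-1)/2-a_k^n+1$, and this is \emph{much} larger than $n(1+n-a_k^n)/2$ whenever $a_k^n$ is not tiny. Concretely, for $n=100$ and $k=2$ one has $a_2^{100}=97$, so the required increment is $n(n+1-a_k^n)/2=200$, whereas your BFS argument only gives $4854$. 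No ``careful book-keeping'' can close this gap: with $a_k^n$ sources the vertex count simply does not force the short path you need.

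The missing idea is Lemma~\ref{lem:prem_col}. Before running the pair-digraph argument, the paper spends an extra $n-1$ steps to move the weight-$k$ column to \emph{every} index $i\in[n]$, obtaining for each $i$ a matrix $W_i\in\mathcal{M}^{rt_k(\mathcal{M})+n-1}$ whose $i$-th column has weight $k$ (and all rows/columns still have weight $\le k$, otherwise one is already done). Each $W_i$ contributes $a_k^n$ source vertices $(i,c_i^1),\dots,(i,c_i^{a_k^n})$, giving $n\,a_k^n$ sources in total and hence at least $\lceil n\,a_k^n/2\rceil$ distinct ones. Now the BFS/counting argument yields a path of length at most $n(n+1)/2-n-\lceil n\,a_k^n/2\rceil+1\le n(n-1-a_k^n)/2+1$, and adding the $n-1$ spent on Lemma~\ref{lem:prem_col} gives exactly $n(n+1-a_k^n)/2$. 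In short, you need $\sim n\,a_k^n/2$ sources, not $a_k^n$, and the mechanism that manufactures them is the preliminary application of Lemma~\ref{lem:prem_col}.
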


\begin{proof}
We prove the theorem by induction.
Let $ k=2 $. Any primitive set of NZ matrices must have a matrix with a row or a column with two positive entries, as otherwise it would be made of just permutation matrices and hence it would not be primitive. This trivially implies that $rt_2(n)= 1\leq U_2(n)$.\\
Suppose now that $ rt_k(n)\leq U_k(n) $, we show that $ rt_{k+1}(n)\leq U_{k+1}(n) $. We remind that $ \mathcal{M}^d $ denotes the set of all the products of matrices from $ \mathcal{M} $ having length $ \leq d $. If in $\mathcal{M}^{rt_k(\mathcal{M}) + n-1}$ there exists a product having a column or a row with $ k+1 $ positive entries then $ rt_{k+1}(\mathcal{M})\leq rt_k(\mathcal{M}) + n-1 \leq U_{k+1}(n) $. Suppose now that this is not the case. This means that in $\mathcal{M}^{rt_k(\mathcal{M}) + n-1}$ every matrix has all the rows and columns of weight at most $ k $. Let $ A\in \mathcal{M}^{rt_k(\mathcal{M})} $ be a matrix having a row or a column of weight $ k $, and suppose it is a column. The case when $ A $ has a row of weight $ k $ will be studied later. 
By Lemma \ref{lem:prem_col} applied on the matrix $ A $, for every $i\in [n]$ there exists a matrix $ W_i\in \mathcal{M}^{rt_k(\mathcal{M}) + n-1} $ having the $ i $-th column of weight $ k $ (and all the other columns and rows of weight $ \leq k $). Every $ W_i $ has at least $ a_k^n $ columns (see Definition \ref{defn:ank}) whose support is not contained in the support of the $ i $-th column of $ W_i $: we pick $ a_k^n $ indices of these columns and we denote them by $c_i^1, c_i^2, \dots, c_i^{a_k^n}$. %let $c_i^1, c_i^2, \dots, c_i^{a_k^n}$ be the indices of these columns. 
Notice that any product $ B $ of matrices from $ \mathcal{M} $ of length $ l $ such that $ B_{is}>0 $ and $ B_{c_i^js}>0 $ for some $ s\in [n] $ and $ j\in [a_k^n] $ would imply that $ W_iB $ has the $ s $-th column of weight at least $ k+1 $ and so $ rt_{k+1}(\mathcal{M})\leq rt_k(\mathcal{M}) + n-1+l $. We now want to minimize this length $ l $ over all $ i,s\in [n] $ and $ j\in [a_k^n] $: we will prove that there exists $ i,s\in [n] $ and $ j\in [a_k^n] $ such that $ l\leq n(n-1-a_k^n)/2 +1 $. 
To do this, we consider the pair digraph $ \mathcal{PD}(\mathcal{M})=(\mathcal{V}, \mathcal{E}) $ (see Definition \ref{defn:sg}) and the vertices
\begin{equation}\label{eq:list}
(1, c_1^1), (1, c_1^2),\dots ,(1, c_1^{a^n_k}),
		(2, c_2^1), \dots ,(2, c_2^{a^n_k}), \dots
		,(n, c_n^1), \dots ,(n, c_n^{a^n_k}).
\end{equation}
%\begin{gather}
	%\begin{aligned}
	%	(1, c_1^1), & (1, c_1^2), & \dots\,\,\,\,\,\, &,(1, c_1^{a_k})\\
	%	(2, c_2^1),  & (2, c_2^2), & \dots\,\,\,\,\,\, &,(2, c_2^{a_k})\\
	%	\vdots \quad & \quad\vdots  & & \quad \vdots\\
	%	(n, c_n^1), & (n, c_n^2), & \dots\,\,\,\,\,\, &,(n, c_n^{a_k})
	%\end{aligned}
%\end{gather}
By Lemma \ref{lem:sg}, for each vertex in Eq.(\ref{eq:list}) there exists a path in $ \mathcal{PD}(\mathcal{M}) $ connecting it to a singleton. By the same lemma, a path of length $ l $ from $ (i,c_i^{j}) $ to a singleton $ (s,s) $ would result in a product $ B_j $ of matrices from $ \mathcal{M} $ of length $ l $ such that $ W_iB_j $ has the $ s $-th column of weight at least $ k+1 $. We hence want to estimate the minimal length among the paths connecting the vertices in Eq.(\ref{eq:list}) to a singleton. 
Notice that Eq.(\ref{eq:list}) contains at least $\left\lceil na_k^n/2 \right\rceil$ different elements, since each element occurs at most twice. 
It is clear that the shortest path from a vertex in the list (\ref{eq:list}) to a singleton does not contain any other element from that list.
The vertex set $ \mathcal{V} $ of $ \mathcal{PD}(\mathcal{M}) $ has cardinality $ n(n+1)/2 $ and it contains $ n $ vertices of type $ (s,s) $.  
 It follows that the length of the shortest path connecting some vertex from the list (\ref{eq:list}) to some singleton is at most of $n(n+1)/2-n - \left\lceil na_k^n/2 \right\rceil + 1\leq  n(n-1-a_k^n)/2+1 $. In view of what was said before, we have that there exists a product $ B $ of matrices from $ \mathcal{M} $ of length $ \leq n(n-1-a_k^n)/2+1 $ and $ i\in [n] $ s.t.\ 
 $ W_iB_j $ has a column of weight at least $ k+1 $. Since $ W_iB_j $ belongs to $\mathcal{M}^{rt_k(\mathcal{M}) + n-1+ n(n-1-a_k^n)/2+1}  $, it follows that $ rt_{k+1}(\mathcal{M})\leq rt_k(\mathcal{M})+ n(n+1-a_k^n)/2\leq U_{k+1}(n) $. \\
Suppose now $ A\!\in\! \mathcal{M}^{rt_k(\mathcal{M})} $ has a row of weight $ k $. We can use the same argument as above on the matrix set $ \mathcal{M}^{\top} $ made of the transpose of all the matrices in $ \mathcal{M} $. 
\end{proof}
Notice that the above argument stays true if we replace $a_k^n$ by a function $b(n,k)$ such that for all $ n\geq 2 $ and $ 2\leq k\leq n-1 $, it holds that $ 1\leq b(n,k)\leq a_k^n  $. It follows that Eq.(\ref{eq:simple_rec}) still holds true if we replace $ a_k^n $ by $ b(n,k) $. 

%\subsection{Solving the recurrence}
%Since $\lceil(n-k)/k\rceil \geq (n-k)/k $ and $(n-k)/k \geq 1$ for $k \leq \lfloor n/2\rfloor$, the recursion (\ref{eq:simple_rec}) with $ a_k^n $ replaced by $\max\{(n-k)/k, 1\}$ reads as:
%\begin{equation}\label{first recc}
 % B_{k+1}(n) =
 % \begin{cases}
 % 1 &\text{if }k=1\\
 %   B_k(n) + n(1 + n(k-1)/2k)& \text{if } 2\leq k \leq \lfloor n/2  \rfloor\\
 %       B_k(n) + n^2/2 & \text{if }\lfloor n/2  \rfloor+1\leq k\leq n-1 \\
 % \end{cases}.
%\end{equation}
%The above recursion is solved by the following expression for $ B_k(n) $; the details of the computation can be found in the Appendix.
%\begin{equation}\label{eq:solve_firstrecc}
%B_{k}(n) =
 % \begin{cases}
  %  1 + \dfrac{(k-2)(n^2+2n)}{2}- \dfrac{n^2}{2}\sum\limits_{i=2}^{k-1} \dfrac{1}{i} & \text{if }2\leq k \leq \lfloor \frac{n}{2}  \rfloor\\
   % B_{\lfloor \frac{n}{2}  \rfloor}(n) + \dfrac{(k-\lfloor \frac{n}{2}  \rfloor)n^2}{2} & \text{if }\lfloor \frac{n}{2}  \rfloor+1\leq k\leq n\\
 % \end{cases}.
%\end{equation} 
%In the next section we will provide a better lower bound for $ a_k^n $ than the one in Lemma \ref{(n-k)/k} that will lead to a linear expression of $B_{k}(n)  $ for $ k\leq \sqrt{n} $, thus proving a linear upper bound on the k-rendezvous time for any fixed $ k\leq \sqrt{n} $.

\subsection{Solving the recurrence}\label{sec:impro}
We now find an analytic expression for a lower bound on $a_k^n$ and we then solve the recurrence (\ref{eq:simple_rec}) in Theorem \ref{recurrence} by using this lower bound. We then show that this is the best estimate on $ a_k^n $ we can hope for.

\begin{lemma}\label{lem:lower_bound_akn}
Let $n $, $ k $ be two integers such that $ n\geq 2 $ and $2 \le k \le n-1$, and let $a_k^n$ as in Definition \ref{defn:ank}. It holds that $a_k^n \ge \max\lbrace n-k(k-1)-1,\lceil(n-k)/k\rceil, 1\rbrace$.
\end{lemma}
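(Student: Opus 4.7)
The plan is to prove each of the three lower bounds separately by fixing an arbitrary $A\in\mathcal{S}_n^k$ and an arbitrary index $c\in\mathcal{C}_A$, and bounding from below the number of columns whose support is not contained in $S:=\mathrm{supp}(A_{*c})$ (note $|S|=k$). Throughout the argument I would use only two structural facts about $A$: (i) every row and every column has weight between $1$ and $k$ (the NZ condition together with $A\in\mathcal{S}_n^k$), and (ii) the column $c$ has weight exactly $k$.

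\textbf{The bound $a_k^n\ge 1$.} If every column of $A$ had support contained in $S$, then no positive entry of $A$ would lie in any row indexed by $[n]\setminus S$; since $|S|=k<n$ this would contradict the NZ condition (no zero rows). Hence at least one column has support not contained in $S$.

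\textbf{The bound $a_k^n\ge\lceil(n-k)/k\rceil$.} Consider the $n-k$ rows outside $S$; by the NZ condition each such row contains at least one positive entry. Every column has weight at most $k$, so any single column can hit at most $k$ of these rows. A column that hits any row outside $S$ automatically has support not contained in $S$. Therefore at least $\lceil(n-k)/k\rceil$ columns must have support not contained in $S$.

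\textbf{The bound $a_k^n\ge n-k(k-1)-1$.} I would count positive entries inside the $k$ rows indexed by $S$ in two ways. On one hand, since every row of $A$ has weight at most $k$, the total number of positive entries lying in those rows is at most $k\cdot k=k^2$. On the other hand, every column whose support is contained in $S$ contributes its full weight (which is $\ge 1$ by NZ) to this count, while column $c$ alone contributes exactly $k$. Letting $X$ denote the number of columns with support contained in $S$, these two observations give $k+(X-1)\le k^2$, i.e.\ $X\le k(k-1)+1$. Consequently, the number of columns with support \emph{not} contained in $S$ is at least $n-k(k-1)-1$.

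Taking the maximum of the three lower bounds and then the minimum over $c\in\mathcal{C}_A$ and $A\in\mathcal{S}_n^k$ yields the claim. The only delicate step is the double-counting in the third bound; the key observation is that column $c$, being of weight exactly $k$, already accounts for $k$ of the at most $k^2$ positive entries available inside rows of $S$, which is precisely what upgrades the naive estimate $n-k^2$ to $n-k(k-1)-1$.
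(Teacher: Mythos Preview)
Your proof is correct and follows essentially the same approach as the paper: all three bounds are obtained by the same counting arguments (rows outside $S$ for $\lceil(n-k)/k\rceil$ and $1$, and the budget of at most $k^2$ positive entries inside the rows of $S$ for $n-k(k-1)-1$). Your double-counting presentation of the third bound is slightly more explicit than the paper's one-line version, but the content is identical.
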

\begin{proof}
We have that $ a_k^n\geq 1 $ since $ k\leq n-1 $ and the matrices are NZ.
Let now $A \in \mathcal{S}^k_n$ (see Definition \ref{defn:ank}) and let $ a $ be one of its columns of weight $ k $. Let $ \zeta=supp( a) $; by assumption, the rows of $ A $ have at most $ k $ positive entries, so there can be at most $ (k-1)k $ columns of $ A $ different from $ a $ whose support is contained in $ \zeta $. Therefore, since $ A $ is NZ, there must exist at least $n - k(k-1) - 1 $ columns of $ A $ whose support is not contained in $ \zeta $  and so $a_k^n \ge  n-k(k-1)-1$.\\
Let again $A \in \mathcal{S}^k_n$ and let $ a $ be one of its columns of weight $ k $. Let $ \xi=[n]\setminus supp( a) $; $ \xi $ has cardinality $ n-k $ and since $ A $ is NZ, for every $ s\in \xi $ there exists $ s'\in [n] $ such that $ A_{ss'}>0 $. By assumption each column of $A$ has weight of at most $k$, so there must exist at least $ \lceil (n-k)/k\rceil $ columns of $ A $ different from $ a $ whose support is not contained in $ supp( a) $.
It follows that $a_k^n \ge \lceil(n-k)/k\rceil$.
\end{proof}
%The above lemma permits us to improve the bound we had found for the $k$-RT for all $k \le \lfloor \sqrt{n}  \rfloor$. Indeed, for such $k$, the bound becomes linear in $n$. \\
In view of the fact that the following inequalities hold:
\begin{enumerate}
\item $\lceil(n-k)/k\rceil \geq (n-k)/k $,
\item $n-k(k-1)-1 \geq (n-k)/k$ for $k \leq \lfloor\sqrt{n}\rfloor$,
\item $(n-k)/k \geq 1$ for $k \leq \lfloor n/2\rfloor$,
\end{enumerate}
the recursion (\ref{eq:simple_rec}) with $ a_k^n $ replaced by $\max\{n-k(k-1)-1,(n-k)/k, 1\}$ now reads as
%Let $k^* := \lfloor \sqrt{n}  \rfloor$, and $ k^{**} := \lfloor n/2  \rfloor$. By setting $a_k = \max\{n-k(k-1)-1,\frac{n-k}{k}, 1\}$ and observing that $n-k(k-1)-1 \geq \frac{n-k}{k}$ for $k \leq \sqrt{n}$, we can improve the recurrence (\ref{first recc}) as follows:
\begin{equation}\label{second rec}
   B_{k+1}(n) =
  \begin{cases}
  1 &\text{if }k=1,\\
     B_k(n) + n(1+ k(k-1)/2)& \text{if }2\leq k \leq \lfloor \sqrt{n}  \rfloor, \\
     B_k(n) + n(1 + n(k-1)/2k)& \text{if }\lfloor \sqrt{n}  \rfloor+1  \leq k \leq \lfloor n/2  \rfloor, \\
     B_k(n) + n^2/2 & \text{if }\lfloor n/2  \rfloor+1 \leq k\leq n-1, \\
  \end{cases}\,\,  
\end{equation}
where we have denoted by $ B_k(n) $ the function solving (\ref{eq:simple_rec}) with $ a_k^n=\max\{n-k(k-1)-1,(n-k)/k, 1\}$.
The following proposition shows the solution of the recursion (\ref{second rec}).
\begin{proposition}\label{prop:rtk}
Equation (\ref{second rec}) is fulfilled by the following function:
%Let $s := k - k^*$, then the expression for $\textcolor{blue}{\tilde B}_k$ that solves recurrence \eqref{second rec} is given by : 
\begin{equation}\label{eq:finalresult}
 B_{k}(n)\!=\!
  \begin{cases}
    \dfrac{n(k^3 - 3k^2 + 8k - 12)}{6} + 1 & \text{if }2\leq k \leq \lfloor \sqrt{n}  \rfloor,\\
     B_{\lfloor \sqrt{n}  \rfloor}(n) +\dfrac{n(n+2)(k-\lfloor\sqrt{n}\rfloor)}{2}-\dfrac{n^2}{2}\!\!\! \sum\limits_{i=\lfloor\sqrt{n}\rfloor}^{k-1} \!\!\dfrac{1}{i} & \text{if }  \lfloor \sqrt{n}  \rfloor+1 \leq k \leq \lfloor \frac{n}{2}  \rfloor ,\\
     B_{\lfloor \frac{n}{2}  \rfloor}(n) + \dfrac{(k-\lfloor \frac{n}{2} \rfloor)n^2}{2} & \text{if }\lfloor \frac{n}{2}  \rfloor+1\leq k\leq n\, .
  \end{cases}
\end{equation}
Therefore, for any constant $ k$ s.t. $k\leq \sqrt{n}  $, the k-rendezvous time $ rt_k(n) $ grows at most linearly in $ n $.
\end{proposition}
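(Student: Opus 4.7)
The plan is to verify formula (\ref{eq:finalresult}) by induction on $k$ (equivalently, by telescoping) within each of the three regimes, using the boundary value $B_2(n)=1$ and matching the three cases of (\ref{second rec}) at the transition points $k=\lfloor\sqrt{n}\rfloor$ and $k=\lfloor n/2\rfloor$. Since the recursion is linear and the pieces are defined so that each range starts from the closed-form value computed in the previous range, there is no real compatibility issue to check; only the summations inside each regime need to be evaluated.

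For the first regime $2\leq k\leq\lfloor\sqrt n\rfloor$, I would unfold the recursion as
\[
B_k(n)=B_2(n)+\sum_{i=2}^{k-1}n\bigl(1+\tfrac{i(i-1)}{2}\bigr)=1+n(k-2)+\tfrac{n}{2}\sum_{i=2}^{k-1}i(i-1).
\]
The key computation is the identity
\[
\sum_{i=2}^{k-1}i(i-1)=\sum_{i=1}^{k-1}i^2-\sum_{i=1}^{k-1}i=\frac{(k-1)k(2k-1)}{6}-\frac{(k-1)k}{2}=\frac{k(k-1)(k-2)}{3},
\]
so that
\[
B_k(n)=1+\frac{n(k-2)\bigl[6+k(k-1)\bigr]}{6}=1+\frac{n(k-2)(k^2-k+6)}{6}=1+\frac{n(k^3-3k^2+8k-12)}{6},
\]
which is the first line of (\ref{eq:finalresult}). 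The linearity claim follows immediately, since for $k$ fixed this expression is of the form $1+c_k\cdot n$ with $c_k=(k^3-3k^2+8k-12)/6$, and the hypothesis $k\leq\sqrt n$ places us in this regime for all sufficiently large $n$, so by Theorem \ref{recurrence} we get $rt_k(n)\leq U_k(n)\leq B_k(n)=O(n)$.

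For the second regime $\lfloor\sqrt n\rfloor+1\leq k\leq\lfloor n/2\rfloor$, I would start from $B_{\lfloor\sqrt n\rfloor}(n)$ and telescope:
\[
B_k(n)=B_{\lfloor\sqrt n\rfloor}(n)+\sum_{i=\lfloor\sqrt n\rfloor}^{k-1}n\Bigl(1+\tfrac{n(i-1)}{2i}\Bigr)=B_{\lfloor\sqrt n\rfloor}(n)+n(k-\lfloor\sqrt n\rfloor)+\tfrac{n^2}{2}\sum_{i=\lfloor\sqrt n\rfloor}^{k-1}\Bigl(1-\tfrac{1}{i}\Bigr).
\]
Collecting the $(k-\lfloor\sqrt n\rfloor)$ terms as $(k-\lfloor\sqrt n\rfloor)\bigl(n+n^2/2\bigr)=n(n+2)(k-\lfloor\sqrt n\rfloor)/2$ yields the second line of (\ref{eq:finalresult}). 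For the third regime $\lfloor n/2\rfloor+1\leq k\leq n$ there is nothing subtle: the increment is the constant $n^2/2$ and summing it $(k-\lfloor n/2\rfloor)$ times from $B_{\lfloor n/2\rfloor}(n)$ gives the third line.

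I do not foresee any obstacle beyond correctly manipulating the sums; the only thing one has to be careful about is bookkeeping at the boundaries $k=\lfloor\sqrt n\rfloor$ and $k=\lfloor n/2\rfloor$, where the closed form in one regime must equal the initial value used to start telescoping in the next. Since each formula is \emph{defined} by adding to $B_{\lfloor\sqrt n\rfloor}(n)$ or $B_{\lfloor n/2\rfloor}(n)$ (rather than rederived from $B_2(n)$), the matching is automatic and no extra verification is needed.
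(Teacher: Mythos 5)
Your proposal is correct and takes essentially the same route as the paper: the recurrence is solved regime by regime by telescoping the increments, starting from $B_2(n)=1$ and then from the boundary values $B_{\lfloor\sqrt n\rfloor}(n)$ and $B_{\lfloor n/2\rfloor}(n)$, so the cases match automatically. The only cosmetic difference is in the first regime, where you evaluate $\sum_{i=2}^{k-1}i(i-1)=k(k-1)(k-2)/3$ directly, whereas the paper fits a cubic ansatz $C_k(n)=\alpha k^3+\beta k^2+\gamma k+\delta$ to the increments of $C_k(n)=B_k(n)/n$ and determines the coefficients; both computations yield the same closed form, and your observation that $k\leq\sqrt n$ forces $k\leq\lfloor\sqrt n\rfloor$ (hence the first line applies and is linear in $n$) correctly establishes the final claim.
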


\begin{proof}
If $ 2\leq k \leq \lfloor \sqrt{n}  \rfloor $, let $C_k(n) = B_k(n)/n$. By Eq.(\ref{second rec}), it holds that $C_{k+1}(n)-C_k(n) = 1 + k(k-1)/2$. By setting $C_k(n) = \alpha k^3 + \beta k^2 + \gamma k + \delta$, it follows that $3\alpha k^2 + (3\alpha + 2\beta)k + \alpha + \beta + \gamma = k^2/2 - k/2 + 1$. Since this must be true for all $k$, by equating the coefficients we have that $C_k(n) = k^3/6 - k^2/2 + 4k/3 + \delta$. Imposing the initial condition $ B_2(n)=1$ gives finally the desired result $ B_k(n) = n(k^3 - 3k^2 + 8k - 12)/6 + 1$.\\
If $ \lfloor \sqrt{n}  \rfloor+1  \leq k \leq \lfloor n/2  \rfloor $, let again $C_k(n) =  B_k(n)/n$. By Eq.(\ref{second rec}), it holds that $C_{k+1}(n)-C_k(n) = 1 + n(k-1)/2k$ and so $C_{k}(n) = C_{\lfloor\sqrt{n}\rfloor}(n) + (k-2)(1+n/2) -(n/2)\sum_{i=\lfloor\sqrt{n}\rfloor}^{k-1}i^{-1}$.
Since $  C_{\lfloor\sqrt{n}\rfloor}(n)= B_{\lfloor\sqrt{n}\rfloor}(n)/n $, it follows that $  B_{k}(n) =  B_{\lfloor\sqrt{n}\rfloor}(n) + (k-\lfloor\sqrt{n}\rfloor)n(n+2)/2 -(n^2/2)\sum_{i=\lfloor \sqrt{n}\rfloor}^{k-1}i^{-1} $.\\
If $ \lfloor n/2  \rfloor+1\leq k\leq n-1 $, by Eq.(\ref{second rec}) it is easy to see that $  B_k(n)= B_{\lfloor n/2  \rfloor}(n)+(k-\lfloor n/2  \rfloor)n^2/2 $, which concludes the proof.
\end{proof}
%\begin{corollary}
%Let $ n\!\geq\! 2 $ integer, $ \mathcal{M} $ a primitive set of $ n\times n $ binary NZ matrices and $ Aut(\mathcal{M}) $ and $Aut(\mathcal{M}^T)$ the automata defined in Definition 	\ref{def:assoc_autom}. Then it holds that for any fixed $ k\leq \sqrt{n} $, $\min \bigl\lbrace rt_k\bigl(Aut(\mathcal{M})\bigr), rt_k\bigl(Aut(\mathcal{M}^T)\bigr)\bigr\rbrace$ is linear in $ n $.
%\end{corollary}
%\begin{proof}
%Straightforward by Proposition \ref{prop:krt_prim_aut} and Proposition \ref{prop:rtk}.
%\end{proof}
%\subsection{Impossibility of significantly improving the bound on $a_k$ }
We now show that $ a_k^n\!=\!\max\{n-k(k-1)-1,\lceil(n-k)/k\rceil, 1\}$, and so we cannot improve the upper bound on $ rt_k(n) $ by improving our estimate of $ a_k^n $.
\begin{lemma}
Let $n$ and $ k $ be two integers such that $ n\geq 2 $ and $2 \le k \le n-1$. It holds that: 
 \begin{equation*}
1\leq a_k^n \le u(n,k):=
\begin{cases}
   n - k(k-1) - 1  & \text{if } n - k(k-1)-1\geq \lceil (n-k)/k\rceil, \\
    \lceil (n-k)/k\rceil  & \text{otherwise. }
\end{cases}\,\, 
 \end{equation*}
 \end{lemma}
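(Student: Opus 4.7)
My plan is to exhibit, for each pair $(n,k)$ with $2\le k\le n-1$, a matrix $A \in \mathcal{S}^k_n$ that achieves $a_k^n(A) = u(n,k)$; since $a_k^n \leq a_k^n(A)$, this yields the upper bound. The lower bound $a_k^n \geq 1$ was already noted right after Definition \ref{defn:ank}, as a consequence of $k \leq n-1$ and the NZ property. In both cases of the construction I fix $S = \{1,\ldots,k\}$ and let the first column $A_{*1}$ have support exactly $S$, so that $1 \in \mathcal{C}_A$ and column $1$ is the witness realising $a_k^n(A)$.

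In Case 1, when $n-k(k-1)-1 \geq \lceil(n-k)/k\rceil$, I saturate the row-weight capacity inside $S$: I add $k(k-1)$ further columns, each of weight one, with entries distributed so that every row of $S$ receives exactly $k-1$ additional entries (this uses $k(k-1)$ distinct columns, which fits since the case hypothesis forces $n \geq k^2+1$). The remaining $n - k(k-1) - 1$ columns are supported in $\{k+1, \ldots, n\}$ and are chosen to cover each of those $n-k$ rows at least once; this is feasible because the case hypothesis rewrites as $n-k \leq k(n-k(k-1)-1)$ (equivalently $n \geq k^2$), so the covering columns can each have weight between $1$ and $k$. By construction, precisely the last $n-k(k-1)-1$ columns of $A$ have support not contained in $S$, hence $a_k^n(A) = n-k(k-1)-1$.

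In Case 2, when $n-k(k-1)-1 < \lceil(n-k)/k\rceil$, I start by partitioning $\{k+1,\ldots,n\}$ into $\lceil(n-k)/k\rceil$ blocks of size at most $k$; each block becomes the support of one column, so each row in $\{k+1,\ldots,n\}$ is hit exactly once and each of these columns has weight at most $k$. The first column has support $S$ as before, and the remaining $n - 1 - \lceil(n-k)/k\rceil$ columns are weight-one with entries in $S$, distributed so that no row of $S$ gains more than $k-1$ extra entries. This distribution is feasible precisely because the case hypothesis rearranges to $n - 1 - \lceil(n-k)/k\rceil \leq k(k-1)$. The only columns with support outside $S$ are the initial $\lceil(n-k)/k\rceil$, giving $a_k^n(A) = \lceil(n-k)/k\rceil$.

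The main obstacle is the arithmetic bookkeeping: in each case one must check that the proposed placement of entries simultaneously respects the row-weight bound $k$, the column-weight bound $k$, and the NZ condition on both rows and columns, and that the case inequality translates cleanly into the feasibility inequality (in Case 1, $n-k(k-1)-1 \geq \lceil(n-k)/k\rceil \Rightarrow n \geq k^2$; in Case 2, the dual inequality rearranges to the row-capacity bound inside $S$). Once this is settled, $a_k^n \leq a_k^n(A) \leq u(n,k)$ is immediate, and combining with Lemma \ref{lem:lower_bound_akn} also yields the announced identity $a_k^n = \max\{n-k(k-1)-1,\lceil(n-k)/k\rceil,1\}$.
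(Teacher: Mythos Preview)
Your proposal is correct and follows essentially the same approach as the paper: both arguments exhibit an explicit matrix $A\in\mathcal{S}^k_n$ with first column supported on $S=\{1,\dots,k\}$, a block of weight-one columns inside $S$ saturating the row-weight budget, and covering columns outside $S$ for the remaining rows, with the two case hypotheses translating into the feasibility of the respective allocations. One small slip: the Case~1 hypothesis gives only $n\ge k^2$ (at $n=k^2$ both sides equal $k-1$), not $n\ge k^2+1$ as you write in your first paragraph---but you use the correct bound $n\ge k^2$ two lines later, and either suffices for fitting the $k(k-1)$ weight-one columns.
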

 \begin{proof}
We need to show that for every $n \geq 2$ and $2 \le k \le n-1$, there exists a matrix $A \in \mathcal{S}_n^k$ such that $a_k^n(A) = u(n,k)$ (see Definition \ref{defn:ank}).
We define the matrix $ C_i^{m_1\times m_2} $ as the $ m_1\times m_2$ matrix having all the entries of the $ i $-th column equal to $ 1 $ and all the other entries equal to $ 0 $, and the matrix $ R_i^{m_1\times m_2} $ as the $ m_1\times m_2$ matrix having all the entries of the $ i $-th row equal to $ 1 $ and all the other entries equal to $ 0 $. We indicate by $ \mathbf{0}^{m_1\times m_2} $ the $ m_1\times m_2$ matrix having all its entries equal to zero and by $ \mathbf{I}^{m\times m} $ the $ m\times m$ identity matrix.
Let $v_k^n = \lceil(n-k)/k\rceil+1$ and $ q=n \mod k $.\\
Suppose that $ n - k(k-1)-1\geq \lceil (n-k)/k\rceil $ and set $ \alpha=n - k(k-1)-1- \lceil (n-k)/k\rceil $.
Then the following matrix $ \hat{A} $ is such that $a_k^n(\hat{A}) = n - k(k-1)-1=u(n,k)$:
\begin{equation*}
\hat{A}=\left[
\begin{array}{c|cccc|c}
C_1^{k\times v_k^n} & R_1^{k\times (k-1)} & R_2^{k\times (k-1)} &\!\!\! \cdots \!\!\!& R_{k}^{k\times (k-1)} &  \\
C_2^{k\times v_k^n} &  & &  & &    \\
\vdots &  &  \mathbf{0}&\!\!\!\!\!\!\!\!\!\!\!\!^{(n-k)\times [k(k-1)]}& &  D   \\
C_{v_k^n-1}^{k\times v_k^n} &  & &  & &     \\
C_{v_k^n}^{q\times v_k^n} &  & &  & &     \\
\end{array}
\right], D=\left[
\begin{array}{c}
 \mathbf{0}^{k\times\alpha} \\
 \mathbf{I}^{\alpha\times\alpha}   \\
 \mathbf{0}^{(n-k-\alpha)\times \alpha}     \\
\end{array}
\right].
\end{equation*}
Indeed by construction, the first column of $ \hat{A} $ has exactly $ k $ positive entries. The columns of $ \hat{A} $ whose support is not contained in $ \hat{A}_{*1} $ are the columns $ \hat{A}_{*i} $ for $ i=2,\dots, v_k^n $ and all the columns of $ D $. In total we have $ \lceil(n-k)/k\rceil + \alpha= n - k(k-1)-1 $ columns, so it holds that $ a_k^n(\hat{A})=n - k(k-1)-1 $.

Suppose that $ n - k(k-1)-1\leq \lceil (n-k)/k\rceil $. Then the following matrix $ \tilde{A} $ is such that $a_k^n(\tilde{A}) = \lceil(n-k)/k\rceil=u(n,k)$:
\begin{equation*}
\tilde{A}=\left[
\begin{array}{c|ccccc}
C_1^{k\times v_k^n} & R_1^{k\times (k-1)} & R_2^{k\times (k-1)} & \cdots & R_{k-1}^{k\times (k-1)} & R_{k}^{k\times (n-v_k^n-(k-1)^2)} \\
\hline
C_2^{k\times v_k^n} &  & &  & &     \\
\vdots &  & &  & &     \\
C_{v_k^n-1}^{k\times v_k^n} &  & &  \mathbf{0}^{(n-k)\times(n-v_k^n)} & &     \\
C_{v_k^n}^{q\times v_k^n} &  & &  & &     \\
\end{array}
\right].
\end{equation*}
Indeed by construction, the first column of $ \tilde{A} $ has exactly $ k $ positive entries and the columns of $ \tilde{A} $ whose support is not contained in $ \tilde{A}_{*1} $ are the columns $ \tilde{A}_{*i} $ for $ i=2,\dots, v_k^n $. Therefore it holds that $ a_k^n(\tilde{A})=v_k^n-1=\lceil(n-k)/k\rceil $.
 \end{proof}

%\section{Numerical results}
\subsection{Numerical results for the upper bound $ B_k(n) $}\label{sec:num_B}
We now present some numerical results that compare the theoretical bound $ B_k(n)$ on $ rt_k(n) $ of Eq.(\ref{eq:finalresult}) with either the exact $k$-RT or an heuristic approximation of it when the computation of the exact value is not computationally feasible for some primitive sets. In Figure \ref{fig:test} we compare our bound with the real $ k $-RT of the primitive sets $\mathcal{M}_{CPR}$ and $\mathcal{M}_K$ reported here below:
\begin{align*}
&\mathcal{M}_{CPR} = 
\left\{
\begin{pmatrix} 0 & 0 & 1 & 0\\ 1 & 1 & 0 & 0\\ 1 & 0 & 0 & 0\\ 0 & 0 & 0 & 1 \end{pmatrix},
\begin{pmatrix} 1 & 0 & 0 & 0\\ 0 & 0 & 1 & 0\\ 0 & 0 & 0 & 1\\ 0 & 1 & 0 & 0 \end{pmatrix}
\right\},\\
&\mathcal{M}_{K} = 
\left\{
\begin{pmatrix} 1 & 0 & 0 & 1 & 0 & 0\\ 0 & 1 & 0 & 0 & 0 & 0\\ 0 & 0 & 1 & 0 & 0 & 0\\ 0 & 0 & 0 & 0 & 1 & 0\\ 0 & 0 & 0 & 1 & 0 & 0\\ 0 & 0 & 0 & 0 & 0 & 1 \end{pmatrix},
\begin{pmatrix} 0 & 0 & 0 & 0 & 1 & 0\\ 0 & 0 & 1 & 0 & 0 & 0\\ 0 & 0 & 0 & 1 & 0 & 0\\ 0 & 1 & 0 & 0 & 0 & 0\\ 0 & 0 & 0 & 0 & 0 & 1\\ 1 & 0 & 0 & 0 & 0 & 0 \end{pmatrix}
\right\}.
\end{align*}
The sets $\mathcal{M}_K$ and $\mathcal{M}_{CPR}$ are primitive sets of matrices that are based, respectively, on the Kari automaton \cite{Kari2001} and the \v{C}ern\'{y}-Pirick\'{a}-Rozenaurov\'{a} automaton \cite{CernyPiricka}, which are well known synchronizing automata with large (quadratic) reset threshold. The sets \ $\mathcal{M}_K$ and $\mathcal{M}_{CPR}$ have been created by adding a $ 1 $ in one single column of the matrix corresponding to one particular letter of the Kari and the \v{C}ern\'{y}-Pirick\'{a}-Rozenaurov\'{a} automaton respectively, in order to make the matrix set primitive.
 We can see that for small values of $k$, the upper bound is fairly close to the actual value of $rt_k(\mathcal{M})$.
%\vspace{-0.3cm}
\begin{figure}[h]
%\vspace{-0.5cm}
%\hspace{-36pt}
\centering
\subfigure[][$n=4$, $\mathcal{M} = \mathcal{M}_{CPR}$]{\includegraphics[scale=0.15]{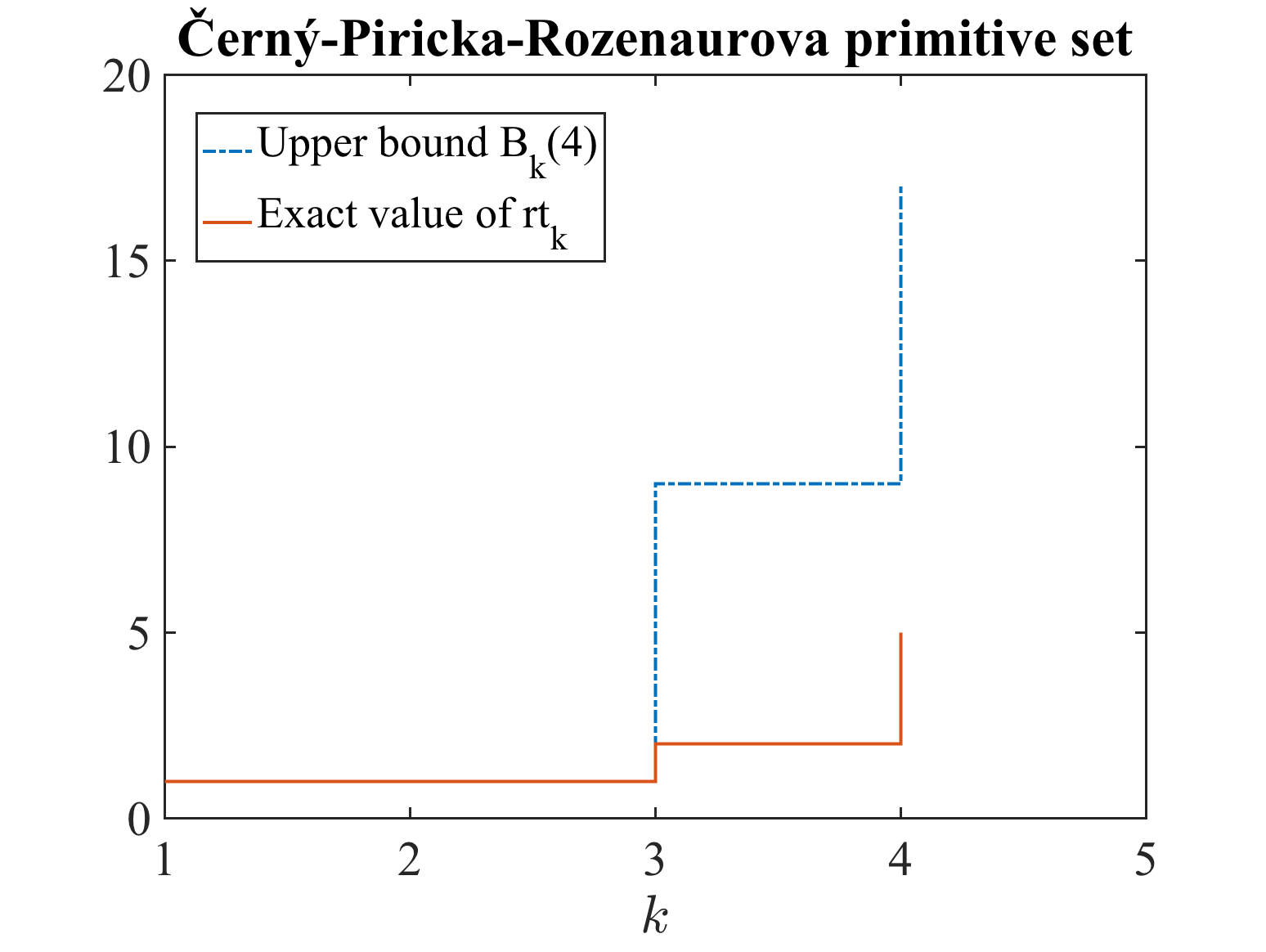}}\subfigure[][$n=6$, $\mathcal{M} = \mathcal{M}_{K}$]{\includegraphics[scale=0.15]{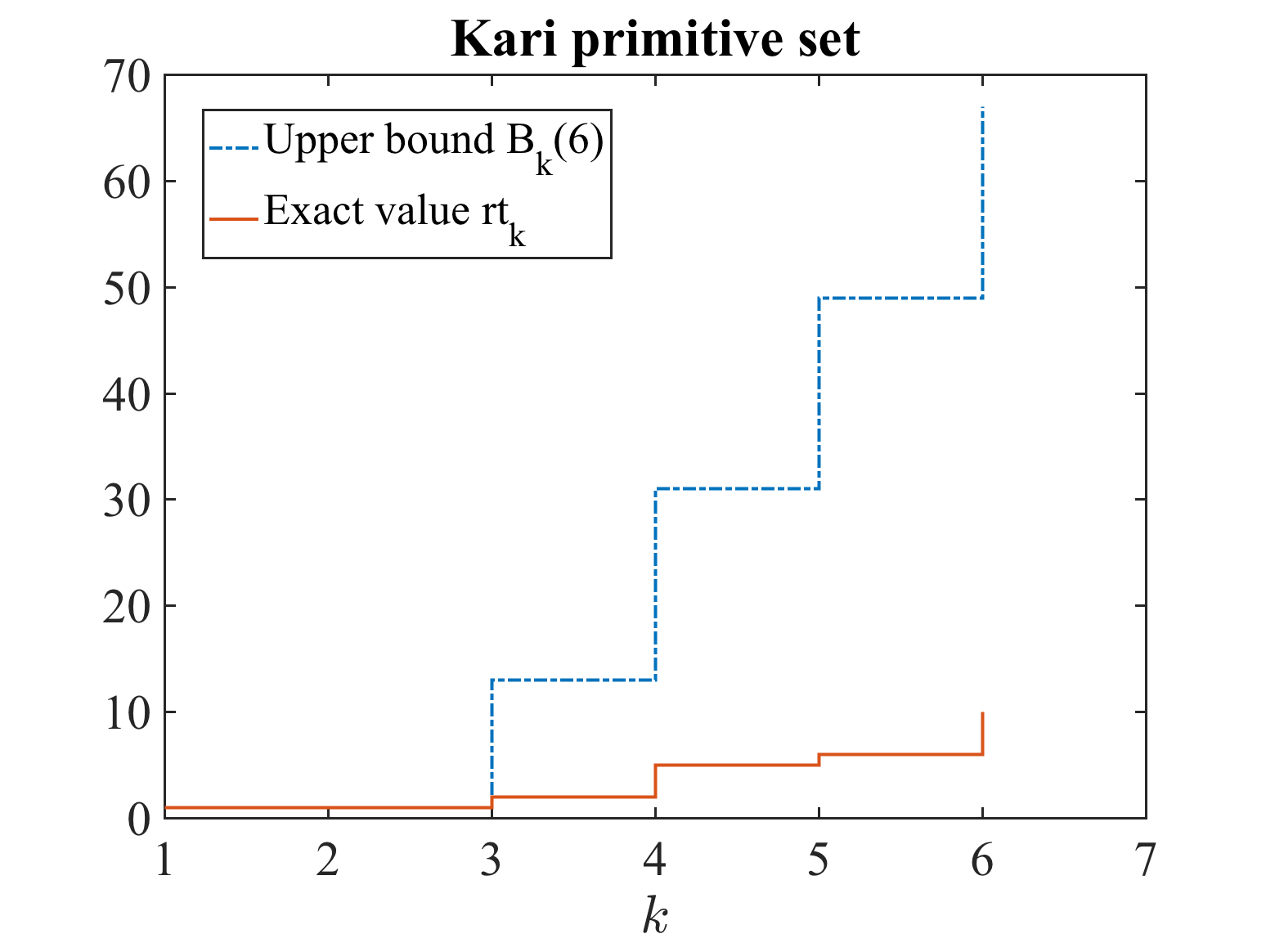}}
\caption{Comparison between the bound $ B_k(n)$, valid for all primitive NZ sets, and $rt_k(\mathcal{M})$ for (a) $\mathcal{M} = \mathcal{M}_{CPR}$ and (b) $ \mathcal{M} = \mathcal{M}_{K}$.}
\label{fig:test}
\end{figure}
%\vspace{-0.5cm}
When $ n $ is large, computing the $k$-RT for every $ 2\leq k\leq n $ becomes hard, so we compare our upper bound on the $ k $-RT with a method for approximating it. The \emph{Eppstein heuristic} is a greedy algorithm developed by Eppstein in \cite{Epp} for approximating the reset threshold of a synchronizing automaton. Given a primitive set $ \mathcal{M} $ of binary NZ matrices, we can apply a slightly modified Eppstein heuristic to obtain, for any $k$, an upper bound on $ rt_k(\mathcal{M}) $. 
\begin{algorithm}[h!]\label{alg}
\caption{Pseudo-code for the modified Eppstein heuristic}
\SetAlgoNoLine
\BlankLine
\BlankLine
\KwIn{A primitive matrix set $\mathcal{M}$}
\KwOut{A matrix $A$ of elements from $\mathcal{M}$ with a positive column}
%\BlankLine
%\BlankLine
%\BlankLine
%\BlankLine
  $A \longleftarrow$ arg$\max_{X \in \mathcal{M}}\max_{i\in [n]} |\textrm{supp}(X_{\ast i})|$;
  
  $i \longleftarrow $ arg$\max_{j\in [n]} |\textrm{supp}(A_{\ast j})|$
  
  $S \longleftarrow \textrm{supp}(A_{\ast i})$
%  \BlankLine
%\BlankLine
%\BlankLine
%\BlankLine
 
 \While{$S \neq [n]$}{$\mathcal{C} \longleftarrow \{j\in [n] : \textrm{supp}(A_{\ast j}) \nsubseteq S\}$;
 
    $j^{\ast} \longleftarrow $arg$\min_{j\in \mathcal{C}} d_{\mathcal{PD}(\mathcal{M})}[(i,j),(i,i)]$ \tcc*{See Remark 3.7.}
    
    $A_{p_1},\dots,A_{p_l} \longleftarrow$ labels of shortest path from $(i,j)$ to $(i,i)$;
    
    $A \longleftarrow A A_{p_1} \cdots A_{p_l}$;
    
    $S \longleftarrow \textrm{supp}(A_{\ast i})$}
 % \BlankLine
 % \BlankLine
  \KwRet{A} ;
\end{algorithm}

This modified Eppstein heuristic is formalized in Algorithm 1, where for any nodes $(i,j)$ and $(k,l)$ in $\mathcal{PD(M)}$ (see Definition \ref{defn:sg}), we denote by $d_{\mathcal{PD(M)}}[(i,j),(k,l)]$ the length of the shortest path from $(i,j)$ to $(k,l)$ in $\mathcal{PD(M)}$. The algorithm looks for the matrix $ A $ in the set having the column with the maximal number of positive entries; then by making use of the pair graph and Lemma \ref{lem:sg}, it looks for the shortest product $ B $ of matrices in the set such that the number of positive entries of $ (AB)_{*i} $ is strictly greater than the one in $ A_{*i} $. It iterates this procedure until obtaining a matrix with a positive column.
Since the algorithm increases the weight of the column $i$ at each iteration of the while loop, it provides an upper bound on the $k$-RT for all $2\leq k\leq n$ by producing a (reasonably short) product whose $i$-th column is of weight $\geq k$.
At the same time, we also check the weights of the rows of $A$ in case the algorithm happens to produce a larger row weight than the maximal column weight, thus improving the bound on the $k$-RT.

\begin{remark}
In our implementation we look for the shortest path to a specific singleton $(i,i)$, whereas it would in general be better to find the shortest path to any singleton. However, one can show that in the case where among the columns of the matrices of the set $\mathcal{M}$ there is only one column with two positive entries, the two implementations are equivalent. Since this is the case for all the matrix sets used in the numerical experiments of this paper, the choice of implementation was based on considerations of simplicity.
\end{remark}
%\vspace{-0.8cm}
%\vspace{-0.5cm}

We now consider the primitive sets with quadratic exponent presented by Catalano and Jungers in \cite{CatalanoJALC}, Section 4; here we denote these sets by  $\mathcal{M}_{C_n}$, where $n$ is the matrix dimension. Figure \ref{fig:test2} compares our theoretical upper bound $ B_k(n) $ with the results of the Eppstein heuristic on the $k$-RT of $\mathcal{M}_{C_n}$ for $ n=10,15,20,25 $ and $ 2\leq k\leq n $. 
We can see again that for small values of $k$ our generic upper bound is fairly close to the Eppstein heuristic of $rt_k(\mathcal{M}_{C_n})$.

Finally, Figure \ref{fig:test3} compares the theoretical upper bound $ B_k(n) $ with the results of the Eppstein heuristic on the $k$-RT of the family $\mathcal{M}_{C_n}$ for fixed $k=4$ and $21\leq n\leq 30$. It can be noticed that $ B_k(n)$ does not increase very rapidly as compared to the Eppstein approximation.
% As a matter of fact, it also provides an upper bound on $ rt_k(\mathcal{A}) $ for any synchronizing automaton $ \mathcal{A} $ on $ n $ states and any $2\leq k\leq n $.
%\vspace{-0.5cm}

\begin{figure}[h]
%\hspace{-36pt}
\centering 
\subfigure[][$n=10$, $\mathcal{M} = \mathcal{M}_{C_{10}}$]{\includegraphics[scale=0.18]{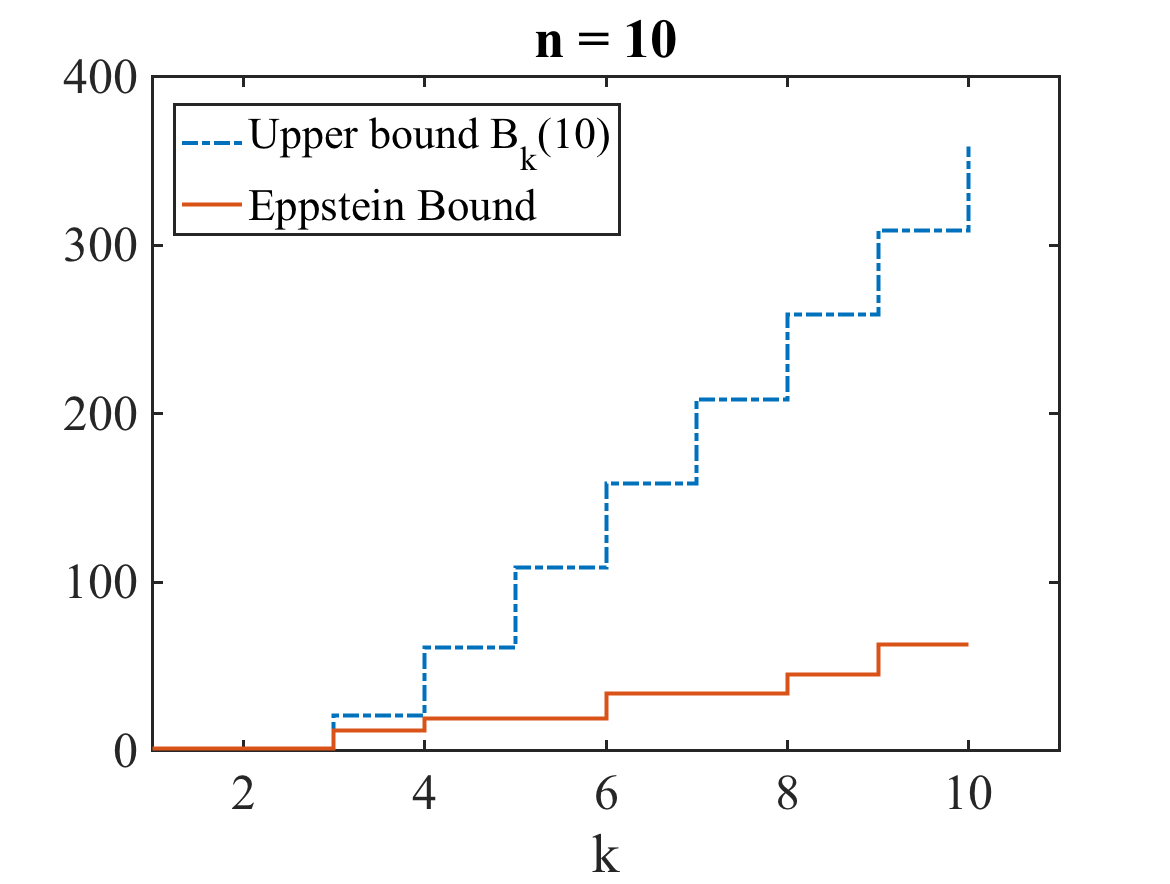}}
\subfigure[][$n=15$, $\mathcal{M} = \mathcal{M}_{C_{15}}$]{\includegraphics[scale=0.18]{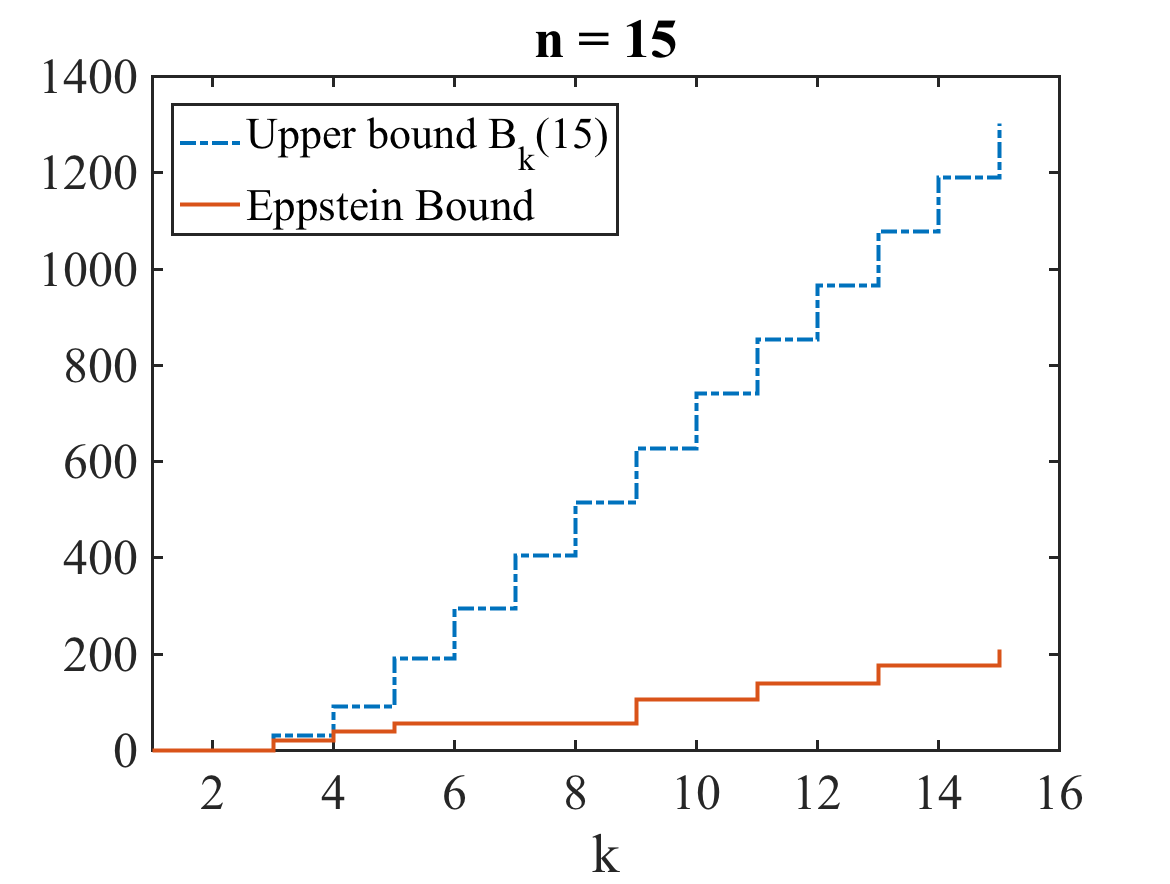}}\\
\subfigure[][$n=20$, $\mathcal{M} = \mathcal{M}_{C_{20}}$]{\includegraphics[scale=0.18]{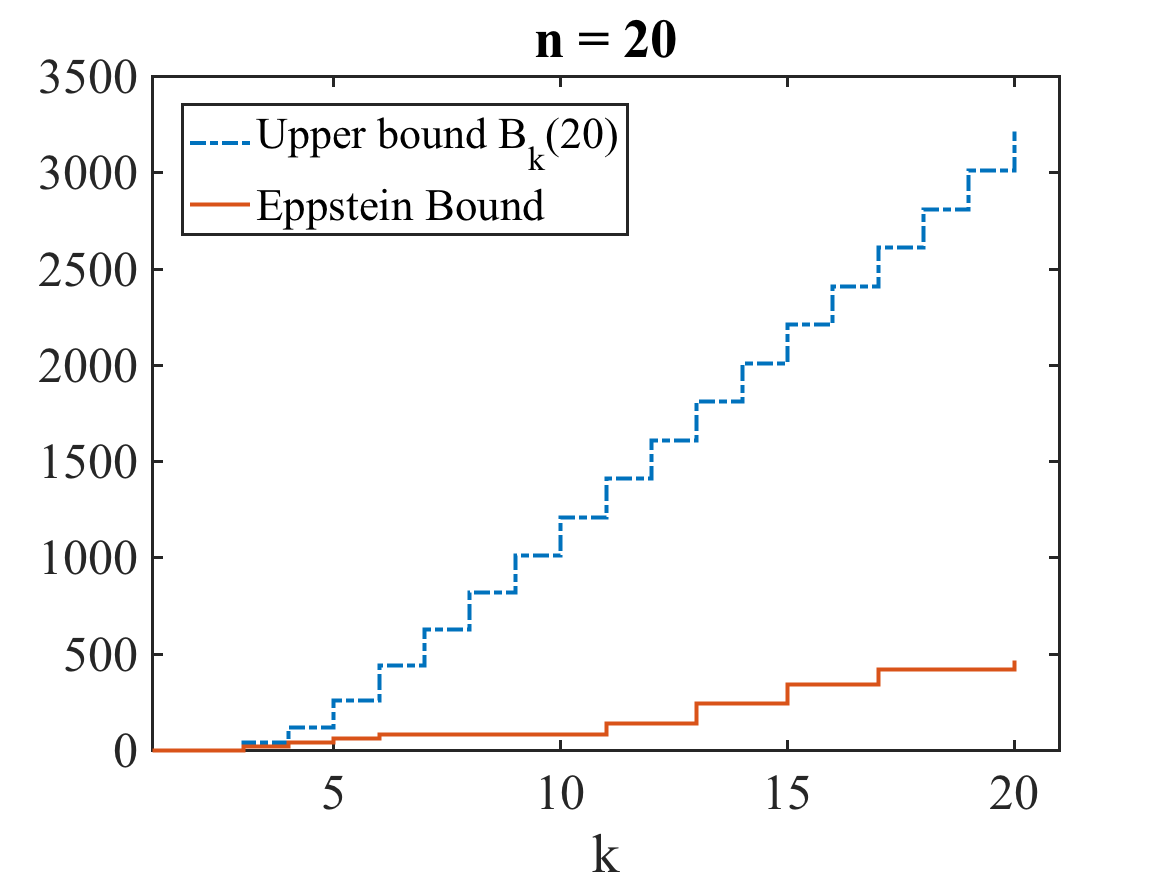}}
\subfigure[][$n=25$, $\mathcal{M} = \mathcal{M}_{C_{25}}$]{\includegraphics[scale=0.18]{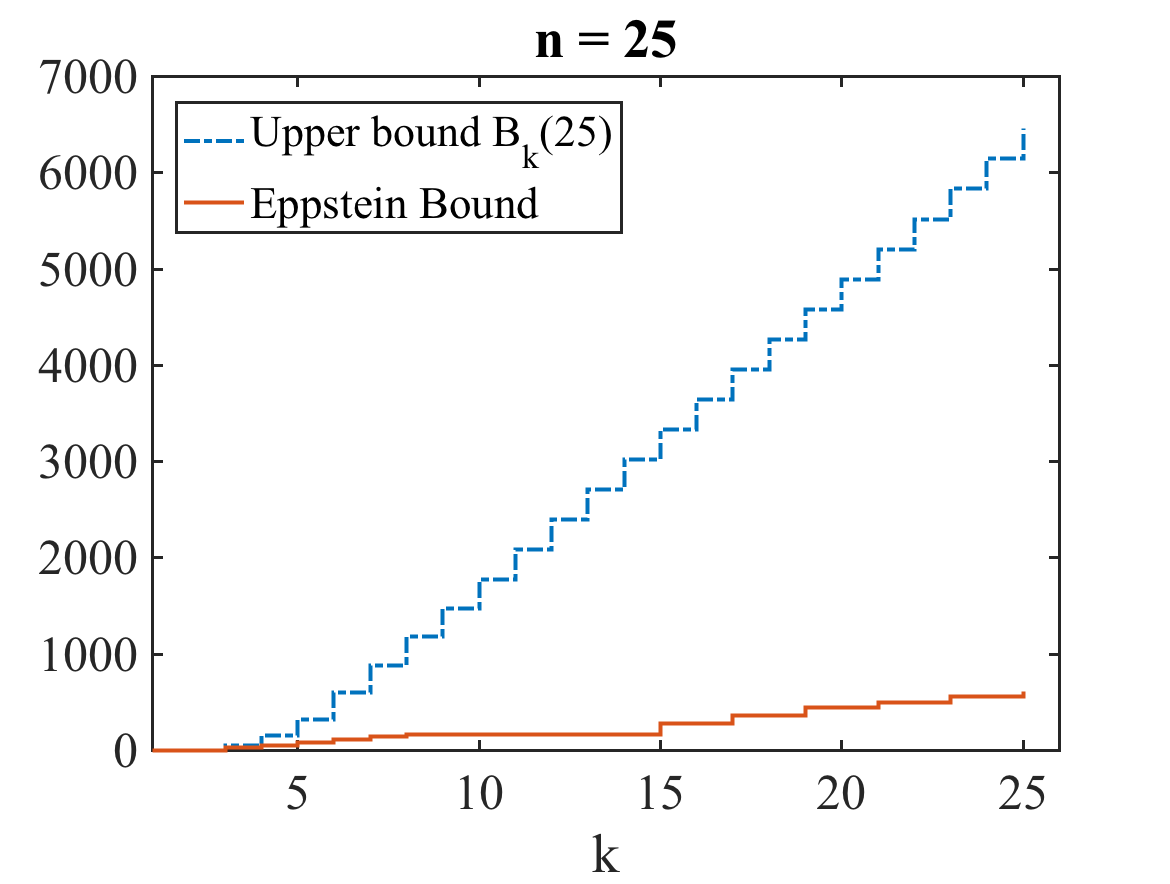}}  
\caption{Comparison between $B_k(n)$ and the Eppstein approx. of $rt_k(\mathcal{M})$, for (a) $\mathcal{M} = \mathcal{M}_{C_{10}}$, (b) $\mathcal{M} = \mathcal{M}_{C_{15}}$, (c) $\mathcal{M} = \mathcal{M}_{C_{20}}$, (d) $\mathcal{M} = \mathcal{M}_{C_{25}}$. We recall that $B_k(n)$ is a generic bound valid for all primitive NZ sets, while the Eppstein bound is computed on each particular set.}
\label{fig:test2}
%\vspace{-0.4cm}
\end{figure}
\begin{figure}[h!]
%\vspace{-0.5cm}
  \centering
  \includegraphics[scale=0.15]{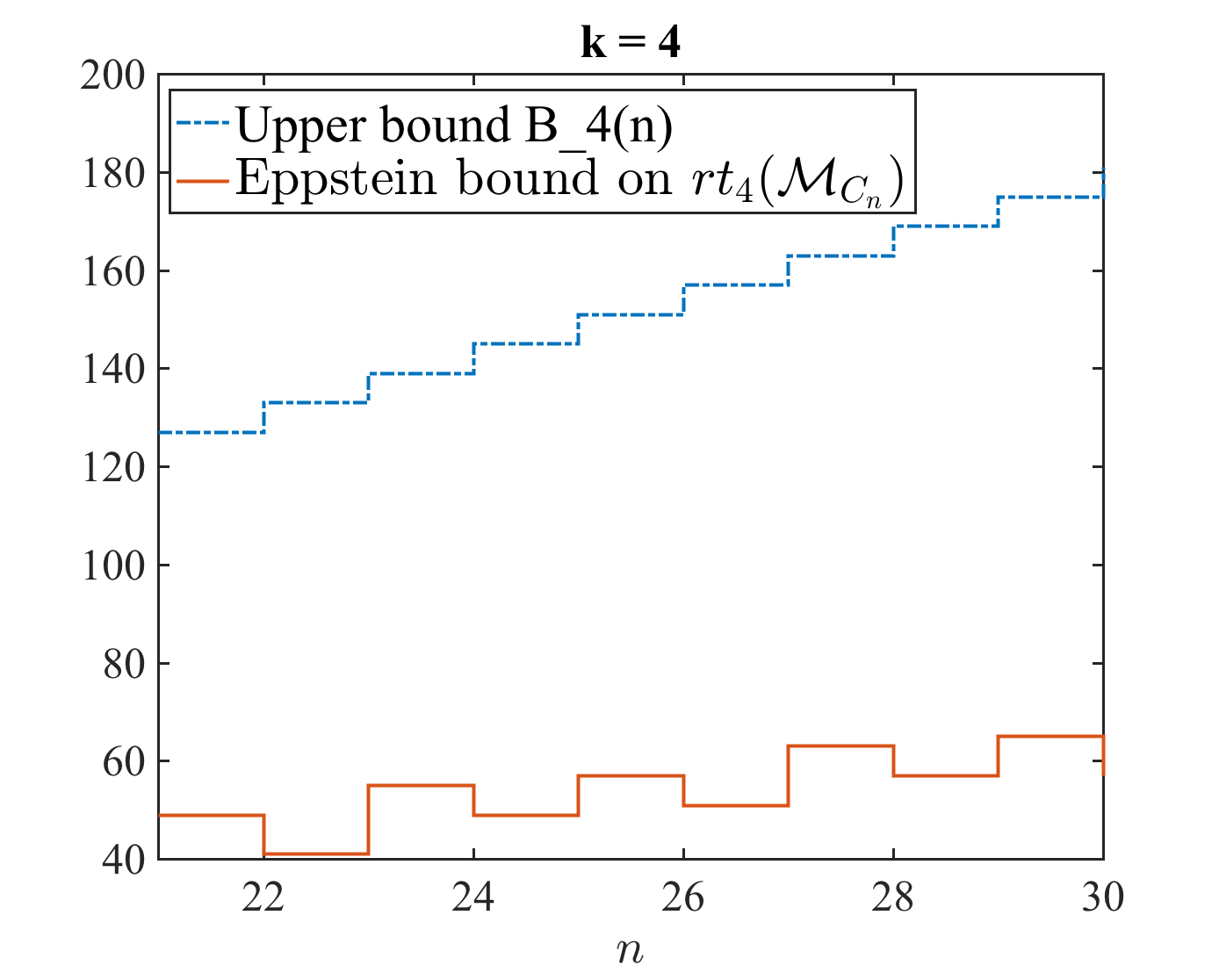}
\caption{Comparison between $B_k(n)$ and the Eppstein approx. of $rt_k(\mathcal{M}_{C_n})$ for $k=4$. We recall that $ B_k(n)$ is a generic bound valid for all primitive NZ sets, while the Eppstein bound is computed on each particular set.}
\label{fig:test3}
%\vspace{-0.5cm}
\end{figure}

\section{Improving the upper bound on $ rt_k(n) $}\label{sec:improv_ub}
In this section we propose a method for improving the upper bound $ B_k(n) $ in Eq.\ (\ref{eq:finalresult}) on $ rt_k(n) $. The idea behind this improvement is the following: 
Eq.\ (\ref{eq:finalresult}) comes from the recursive formulation (\ref{eq:simple_rec}), where we start from a matrix $ A $ with a column (or a row) of weight $ k $ and we look for the shortest product $ D $ that makes $ AD $ having a column (or a row) of weight at least $ k+1 $. This is done by `\textit{merging}' two columns of $ A $ by post-multiplying with matrix $ D $; namely, if $ c $ is the index of a column of $ A $ of weight $ k $ and $ i$ is the index of a column of $ A $ such that $supp(A_{*i})\nsubseteq supp(A_{*c}) $, we look for a (short) product $ D $ such that $ D_{ij}>0 $ and $ D_{cj}>0 $ for some $ j\in [n] $. In this way, since $ A $ and $ D $ are NZ, the $ j $-th column of $ AD $ has at least $ k+1 $ positive entries. What is not exploited in this approach is that if $ \vert supp( A_{*i})\setminus supp( A_{*c})\vert = l\geq 1$, then the $ j $-th column of $ AD $ has $ k+l $ positive entries, thus potentially providing an upper bound on the $ (k+l) $-RT as well as on $ (k+1) $-RT. 

In this section we take advantage of this fact to improve the upper bound $ B_k(n) $. Unfortunately, this refined upper bound comes as a solution of a much more complex recursive equation that is numerically solvable but that does not seem to be easily expressed in closed form. Therefore, although it is possible to prove that this new upper bound is smaller than or equal to $ B_k(n) $ for all values of $ n $ and $ 2\leq k\leq n $, we do not provide an analytic formula for it. Extensive numerical experiments show the behaviour of this new upper bound on $ rt_k(n) $ are presented in Section \ref{sec:num_F}.

We start with a technical definition, similar to Definition \ref{defn:ank}, that will be useful for finding the new upper bound. 
\begin{definition}\label{defn:Spkn}
Let $ n $, $ k $ and $ p $ be three integers s.t.\ $ n\geq 3 $, $ 2\leq k<n $, and $ 1\leq p\leq \min\lbrace k,n-k\rbrace $. We denote by $ \mathcal{S}^{k}_{n,p} $ the subset of matrices $ A $ in $ \mathcal{S}^{k}_n $ (see Definition \ref{defn:ank}) such that for any column index $ c\in\mathcal{C}_A $ and $ i\neq c $ it holds that $ \vert \lbrace supp(A_{*i})\setminus supp(A_{*c})\rbrace\vert \leq p$, and there exists $ c'\in\mathcal{C}_A $ and $ j\neq c' $ such that $ \vert \lbrace supp(A_{*j})\setminus supp(A_{*c'})\rbrace\vert = p$. For any matrix $ A\in  \mathcal{S}^{k}_{n,p}$, we denote by $ \mathcal{C}^p_A $ the subset of columns $ c $ in $ \mathcal{C}_A $ for which there exists $ i\neq c $ such that  $ \vert supp(A_{*i})\setminus supp(A_{*c})\vert = p$. \\ Finally, for any $ A\in  \mathcal{S}^{k}_{n,p}$ we set 
\begin{align*}
a^n_{{k},p}(A)=\min_{c\in \mathcal{C}^p_A} \vert\lbrace i: supp(A_{*i})\nsubseteq  supp(A_{*c})  \rbrace \vert\, ; \qquad a^n_{{k},p}=\min_{A\in \mathcal{S}^{k}_{n,p} }a^n_{{k},p}(A).
\end{align*}
\end{definition}
In words, $ a^n_{{k},p} $ describes the minimal number of columns that can be summed up to a column of weight $ k $ in any matrix in $ \mathcal{S}^{k}_{n,p} $ in order to increase its weight.

\begin{remark}\label{rem:S_npk}
Notice that $\mathcal{S}^{k}_n=\bigcup_{p=1}^{\min\lbrace k,n-k\rbrace} \mathcal{S}^{k}_{n,p}$ and that for any $ p $ such that $ 1\leq p\leq \min\lbrace k,n-k\rbrace $, it holds that $ a^n_{k,p} \geq a^n_k  $. By mimicking the reasoning used in the proof of Lemma \ref{lem:lower_bound_akn}, it is also easy to see that $ a^n_{k,p} \geq \lceil(n-k)/p\rceil$. By the same lemma, this implies that 
\begin{equation}\label{eq:ank_tilde}
a^n_{k,p} \geq \hat{a}^n_{k,p}:= \max \lbrace n-k(k-1)-1, \lceil (n-k)/p\rceil, 1\rbrace.
\end{equation}
\end{remark}
Let $ n\geq 3 $ and $ h $, $ k $ be two integers such that $ 2\leq h,k\leq n $.% and $ h\leq k\leq n $.
 We set:
\begin{equation}\label{eq:=Okn}
O^k_{h}(n)=\max_{A\in \mathcal{S}^{h}_n}\min\lbrace d\in\mathbb{N}:\exists D\in\mathcal{M}^d \text{ s.t. } AD \text{ has a column of weight} \geq k\rbrace.
\end{equation}
Notice that $ O^k_{h}(n)=0 $ any time $ h\geq k $. In view of Eq.\ (\ref{eq:=Okn}), for any $ h $ and $ k $ s.t.\ $ 2\leq h\leq k\leq n$ it holds that \[ rt_k(n)\leq U_{h}(n)+ O^k_{h}(n)\leq B_{h}(n)+ O^k_{h}(n), \] where $ U_{h}(n) $ is defined in Eq.\ (\ref{eq:simple_rec}) and $ B_{h}(n) $ is defined in Eq.\ (\ref{second rec}). Since $B_{h}(n)+ O^k_{h}(n)= B_{k}(n) $  if $ k=h $, it follows that

\begin{equation}\label{eq:upp_ohk_min}
rt_k(n)\leq \min_{2\leq h\leq k}\lbrace B_{h}(n)+ O^k_{h}(n)\rbrace\leq B_{k}(n).
\end{equation}
Therefore the function $ \min_{2\leq h\leq k}\lbrace B_{h}(n)+ O^k_{h}(n)\rbrace$ improves the upper bound $B_{k}(n)  $. The rest of the section is devoted to finding a way to approximate $ O^k_{h}(n) $ for any $ n $, $ h $ and $ k $. 

Let $ p $ be an integer such that $ 1\leq p\leq \min\lbrace k,n-k\rbrace $. We set

\begin{equation*}
O^k_{h,p}(n)=\max_{A\in \mathcal{S}^{h}_{n,p}}\min \lbrace d\in\mathbb{N}: \exists D\in\mathcal{M}^d \text{ s.t. } AD \text{ has a column of weight} \geq k\rbrace\, . 
\end{equation*}
Notice again that $ O^k_{h,p}(n)=0 $ any time $ h\geq k $. It follows from Remark \ref{rem:S_npk} that 
\begin{equation}\label{eq:bound_Okn}
O^k_{h}(n)=\max_{ 1\leq p\leq \min\lbrace h,n-h\rbrace} O^k_{h,p}(n).
\end{equation}
The following result uses this fact to obtain a recursive upper bound for $ O^k_{h}(n) $ when $ h<k $ (for $h\geq k$ we already know that $ O^k_{h}(n)=0 $).

%it hence sufficies to provide an upper bound on $ \mathcal{O}^k_{\bar{k},p}(n) $ for any $ 1\leq p\leq \min\lbrace n,n-\bar{k}\rbrace $.

\begin{proposition}\label{prop:O_kn}
Let $ n $, $ k $ and $ h $ integers such that $ 2\leq h< k\leq n $. It holds that
\begin{equation}
O^k_{h}(n)\leq \max_{1\leq p\leq \min\lbrace h,n-h\rbrace}\min\left\lbrace  O^k_{h+p}(n)+\frac{n}{2}(n-1);O^k_{h+1}(n)+\frac{n}{2}(n+1-\hat{a}^n_{h,p})\right\rbrace.
\end{equation}
\end{proposition}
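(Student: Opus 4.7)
The plan is as follows. By Eq.~(\ref{eq:bound_Okn}), $O^k_h(n)=\max_p O^k_{h,p}(n)$ where $p$ ranges over $1\le p\le\min\{h,n-h\}$, so since the right-hand side of the proposition is itself a max over $p$, it suffices to prove the per-$p$ estimate
\[
O^k_{h,p}(n)\le\min\!\left\{O^k_{h+p}(n)+\tfrac{n(n-1)}{2},\ O^k_{h+1}(n)+\tfrac{n(n+1-\hat{a}^n_{h,p})}{2}\right\}.
\]
I would fix $p$ and an arbitrary $A\in\mathcal{S}^h_{n,p}$ and produce two different short extensions of $A$ to a product with a column of weight at least $k$, whose lengths match the two terms on the right.

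For the first term, I would exploit the defining property of $\mathcal{S}^h_{n,p}$ (Definition~\ref{defn:Spkn}): there exist $c'\in\mathcal{C}^p_A$ and $j\neq c'$ with $|supp(A_{*j})\setminus supp(A_{*c'})|=p$, so that $|supp(A_{*c'})\cup supp(A_{*j})|=h+p$. In the pair digraph $\mathcal{PD}(\mathcal{M})$, Lemma~\ref{lem:sg} guarantees a path from the vertex $(c',j)$ to some singleton, and a multi-source BFS from the $n$ singletons in the $n(n+1)/2$-vertex digraph bounds the shortest such path by $n(n-1)/2$ (at most $|V|-|S|$ levels). Lemma~\ref{lem:sg} then yields a product $D_1$ of matrices from $\mathcal{M}$, of length at most $n(n-1)/2$, such that $AD_1$ has a column of weight $\ge h+p$. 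If $h+p\ge k$ we are done (and indeed $O^k_{h+p}(n)=0$); otherwise the definition of $O^k_{h+p}(n)$ furnishes a further extension of length $\le O^k_{h+p}(n)$ reaching weight $\ge k$.

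For the second term, I would copy almost verbatim the pair-digraph argument in the proof of Theorem~\ref{recurrence}, but with the sharper count $\hat{a}^n_{h,p}$ in place of $a^n_k$. Since $A\in\mathcal{S}^h_{n,p}$, Remark~\ref{rem:S_npk} guarantees $a^n_{h,p}(W_i)\ge\hat{a}^n_{h,p}$ for each Lemma~\ref{lem:prem_col} translate $W_i\in A\cdot\mathcal{M}^{n-1}$ with $i$-th column of weight $h$. Collecting $\hat{a}^n_{h,p}$ ``bad'' columns per $i$ yields at least $\lceil n\hat{a}^n_{h,p}/2\rceil$ distinct pair-digraph vertices, and the same BFS counting as in Theorem~\ref{recurrence} gives a path of length at most $n(n-1-\hat{a}^n_{h,p})/2+1$ from this list to a singleton. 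Concatenating produces a product of total length $n-1+n(n-1-\hat{a}^n_{h,p})/2+1=n(n+1-\hat{a}^n_{h,p})/2$ such that post-multiplying $A$ by it yields a column of weight $\ge h+1$; a further extension of length $\le O^k_{h+1}(n)$ completes the task.

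The main technical subtlety, which I expect to be the only non-routine point, is that $O^k_{h+p}(n)$ and $O^k_{h+1}(n)$ are defined as maxima over matrices whose maximum row/column weight is \emph{exactly} $h+p$ or $h+1$, while the intermediate products built above may have larger maximum weight. I would handle this either by a short monotonicity check (reaching weight $\ge k$ cannot become harder from a matrix possessing a column of weight $>h+p$) or, equivalently, by halting the construction the instant a column of weight $\ge k$ appears. Finally, as in the last paragraph of the proof of Theorem~\ref{recurrence}, the symmetric case in which $A$ has a row (rather than a column) of weight $h$ is handled by applying the whole argument to the transposed set $\mathcal{M}^\top$.
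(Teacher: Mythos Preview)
Your proposal is correct and follows the same approach as the paper: reduce via Eq.~(\ref{eq:bound_Okn}), handle the first term by merging one designated pair of columns through the pair digraph, and handle the second by replaying the argument of Theorem~\ref{recurrence} with $\hat a^{n}_{h,p}$ in place of $a^{n}_k$. You are in fact more detailed than the paper, which dispatches the second inequality in a single sentence (``in view of Eq.~(\ref{eq:ank_tilde}) and by mimicking the proof of Theorem~\ref{recurrence}''), and you correctly anticipate the domain subtlety for $O^{k}_{h+p}(n)$; the closing remark about $\mathcal{M}^{\top}$ is unnecessary here, since $O^{k}_{h}(n)$ is defined purely in terms of columns.
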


\begin{proof}
By Eq.\ (\ref{eq:bound_Okn}), it suffices to prove that $ O^k_{h,p}(n)\leq   O^k_{h+p}(n)+n(n-1)/2$ and $O^k_{h,p}(n)\leq O^k_{h+1}(n)+n(n+1-\hat{a}^n_{h,p})/2$. To prove the first inequality, we need to show that for any matrix $ A\in \mathcal{S}^{h}_{n,p} $ there exists a product $ D $ of length at most $ n(n-1)/2 $ such that $ AD $ has a column of weight at least $ h+p$. Let $ A\in \mathcal{S}^{h}_{n,p} $ and $ c\in \mathcal{C}^p_A $; $ A_{*c} $ has  $ h  $ positive entries. By the definition of $ \mathcal{C}^p_A  $ (see Definition \ref{defn:Spkn}), there exists an index $ i $ such that $ \vert supp(A_{*i})\setminus supp(A_{*c}) \vert=p $. In the pair digraph $ \mathcal{PD}(\mathcal{M}) $ (see Definition \ref{defn:sg}) there exists a path from the vertex $ (i,c) $ to a singleton of length at most $ (n(n+1)/2)-n=n(n-1)/2 $, in view of Lemma \ref{lem:sg} and the number of vertices in the pair digraph. By the same lemma, this means that there exists a product $ D $ of length at most $ n(n-1)/2 $ such that $ AD $ has a column of weight $\geq h+p$. By the definition of $ O^k_{h+p}(n) $ in Eq.\ (\ref{eq:=Okn}) it follows that $ O^k_{h,p}(n)\leq    O^k_{h+p}(n)+n(n-1)/2$. \\
To prove the second inequality, we need to show that for any matrix $ A\in \mathcal{S}^{h}_{n,p} $ there exists a product $ D $ of length at most $ n(n+1-\hat{a}^n_{h,p})/2 $ such that $ AD $ has a column of weight at least $ h+1$. This can be done in view of Eq.\ (\ref{eq:ank_tilde}) and by mimicking the proof of Theorem \ref{recurrence}. $ $
\end{proof}
The following theorem provides an upper bound $ \tilde U^k_{h}(n) $ on $ O^k_{h}(n) $ defined by a recurrence relation.

\begin{theorem}\label{thm:U_hk}
Let $ n $ and $ k $ be two integers such that $ 2\leq k\leq n $. We define the function $\tilde U^k_{h}(n)  $ for $  h\geq 2  $ in the following way:\\
if $ h\geq k $,
\[
\tilde U^k_{h}(n)=0\, ;
\]
if $ 2\leq h <k $,
\[
\tilde U^k_{h}(n)= \max\limits_{1\leq p\leq \min\lbrace h,n-h\rbrace}\min\left\lbrace   \tilde U^k_{h+p}(n)+\frac{n}{2}(n-1); \tilde U^k_{h+1}(n)+\frac{n}{2}(n+1-\hat{a}^n_{h,p})\right\rbrace\, .
\]
Then it holds that $ \tilde U^k_h(n)\geq O^k_h(n) $ for all $ h\geq 2 $.
In particular it holds that for every $ 2\leq k\leq n $:
\begin{equation}\label{eq:upp_rt_h}
rt_k(n)\leq F_{k}(n):= \min_{2\leq h\leq k}\lbrace B_{h}(n)+ \tilde U^k_{h}(n)\rbrace \leq B_{k}(n) .
\end{equation}
\end{theorem}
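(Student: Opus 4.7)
\textbf{Proof plan for Theorem \ref{thm:U_hk}.} The strategy is a two‐step argument. First I would establish the bound $\tilde U^k_h(n)\geq O^k_h(n)$ for every $h\geq 2$ by \emph{downward} (reverse) induction on $h$, taking advantage of the fact that the recurrence defining $\tilde U^k_h(n)$ only refers to values $\tilde U^k_{h+p}(n)$ and $\tilde U^k_{h+1}(n)$ with strictly larger index; since $\tilde U^k_h(n)=0$ whenever $h\geq k$, the recurrence is well-defined and can be unwound from $h=k$ downward to $h=2$.

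For the base case, note that whenever $h\geq k$ every matrix $A\in \mathcal{S}^h_n$ already possesses a column of weight $h\geq k$, so the empty product $D\in\mathcal{M}^0$ witnesses $O^k_h(n)=0=\tilde U^k_h(n)$. For the inductive step at some $2\leq h<k$, suppose $\tilde U^k_{h'}(n)\geq O^k_{h'}(n)$ for every $h'>h$. Proposition \ref{prop:O_kn} gives
\[
O^k_h(n)\leq \max_{1\leq p\leq \min\lbrace h,n-h\rbrace}\min\left\lbrace O^k_{h+p}(n)+\tfrac{n}{2}(n-1);\, O^k_{h+1}(n)+\tfrac{n}{2}(n+1-\hat a^n_{h,p})\right\rbrace,
\]
and applying the inductive hypothesis inside both terms of the $\min$ (the functions $x\mapsto x+c$ and both $\min$ and $\max$ are monotone in their arguments) yields exactly the right-hand side of the recurrence defining $\tilde U^k_h(n)$. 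Hence $O^k_h(n)\leq \tilde U^k_h(n)$, closing the induction.

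With the first claim in hand, the second statement follows immediately from Eq.\ (\ref{eq:upp_ohk_min}): for any $2\leq h\leq k$ we have $rt_k(n)\leq B_h(n)+O^k_h(n)\leq B_h(n)+\tilde U^k_h(n)$, and minimising over $h$ yields $rt_k(n)\leq F_k(n)$. The final inequality $F_k(n)\leq B_k(n)$ is obtained by specialising $h=k$ in the minimum: then $\tilde U^k_k(n)=0$ by the base case, so the $h=k$ summand equals $B_k(n)$.

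The only genuinely delicate point is verifying that the recurrence for $\tilde U^k_h(n)$ is legitimate as a definition (no circularity) and that the monotonicity used in the inductive step really applies term-by-term inside the $\max_p\min$ expression; both are routine, so no serious obstacle is anticipated, the work having been concentrated in Proposition \ref{prop:O_kn}.
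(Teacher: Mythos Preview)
Your proposal is correct and follows exactly the approach the paper intends: the paper's own proof is the single sentence ``The theorem easily follows from Proposition \ref{prop:O_kn} and Eq.\ (\ref{eq:upp_ohk_min}),'' and what you have written is precisely the routine downward induction on $h$ (using Proposition \ref{prop:O_kn} at the inductive step) together with the specialisation $h=k$ in Eq.\ (\ref{eq:upp_ohk_min}) that this sentence is meant to summarise.
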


\begin{proof}
The theorem easily follows from Proposition \ref{prop:O_kn} and Eq. (\ref{eq:upp_ohk_min}).
\end{proof}

We conclude this section by noting that despite the fact that the function $ \tilde U^k_h(n) $ defined in Theorem \ref{thm:U_hk} seems difficult to obtain in closed form, it can be easily implemented and computed by dynamic programming. The numerical behavior of the function $ F_k(n) $ in Eq.\ (\ref{eq:upp_rt_h}), which represents the improved upper bound on $ rt_k(n) $, is the subject of the next section. These numerical results suggest the two following conjectures, for which we do not have a formal proof yet.
\begin{conjecture}\label{conj}
Let $ n $ and $ k $ be two integers such that $ 2\leq k\leq n $ and consider the function $ F_k(n) $ defined in Eq.\ (\ref{eq:upp_rt_h}). Then it holds that $F_{k}(n)= \min_{2\leq h\leq k}\lbrace B_{h}(n)+ \tilde U^k_{h}(n)\rbrace= B_{2}(n)+ \tilde U^k_{2}(n) $, which implies that
\begin{equation}\label{eq:conj}
rt_k(n)\leq 1+ \tilde U^k_{2}(n).
\end{equation}
\end{conjecture}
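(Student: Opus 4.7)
My goal is to show that $T^k_h(n) := B_h(n) + \tilde U^k_h(n)$ is nondecreasing in $h$, which places the minimum at $h = 2$ and yields the conjecture. By telescoping, it suffices to prove the single-step inequality
\[
\tilde U^k_h(n) - \tilde U^k_{h+1}(n) \;\leq\; B_{h+1}(n) - B_h(n)
\qquad \text{for every } 2 \leq h < k.
\]
Equation (\ref{eq:conj}) would then follow as an immediate corollary, using the fact that $B_2(n) = 1$.

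\textbf{Main argument.} I would start from the recurrence of Theorem \ref{thm:U_hk}. Writing $\tilde U^k_h(n) = \max_p \min\{A_p, C_p\}$ with $A_p = \tilde U^k_{h+p}(n) + \frac{n(n-1)}{2}$ and $C_p = \tilde U^k_{h+1}(n) + \frac{n(n+1-\hat a^n_{h,p})}{2}$, the trivial bound $\min\{A_p,C_p\} \leq C_p$ gives
\[
\tilde U^k_h(n) - \tilde U^k_{h+1}(n) \;\leq\; \max_p \frac{n(n+1-\hat a^n_{h,p})}{2} \;=\; \frac{n\bigl(n+1 - \min_p \hat a^n_{h,p}\bigr)}{2}.
\]
The key algebraic step I would verify is the identity $\min_p \hat a^n_{h,p} = a^n_h$. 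Since $\hat a^n_{h,p} = \max\{n-h(h-1)-1, \lceil(n-h)/p\rceil, 1\}$ is nonincreasing in $p$ (only the middle argument depends on $p$), the minimum is attained at $p^{\ast} = \min\{h,n-h\}$; substituting $p^{\ast}$ recovers exactly the expression that Lemma \ref{lem:lower_bound_akn}, combined with its matching upper-bound companion, shows to be equal to $a^n_h$. Finally, since the recurrence (\ref{second rec}) for $B$ is obtained by substituting the lower bound $\max\{n-h(h-1)-1,(n-h)/h,1\} \leq a^n_h$ (no ceiling) in place of $a^n_h$, we get $B_{h+1}(n) - B_h(n) \geq n(n+1-a^n_h)/2$, which closes the chain.

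\textbf{Main obstacle.} My concern is that replacing $\min\{A_p,C_p\}$ by $C_p$ is quite coarse, and so an a priori bound of the form $n(n+1-a^n_h)/2$ should not be expected to match $B_{h+1}(n) - B_h(n)$ unless the $B$-recurrence was designed to use exactly that quantity, which (reassuringly) it was. I would therefore devote the verification effort to: (i) confirming $\min_p \hat a^n_{h,p} = a^n_h$ in the regime $h > n/2$, where the first argument of $\hat a^n_{h,p}$ becomes negative and both sides should collapse to $1$; (ii) handling the boundary $h = k-1$, where $A_p = n(n-1)/2$ for every admissible $p$; and (iii) numerically checking the chain $T^k_2(n) \leq T^k_3(n) \leq \cdots \leq T^k_k(n)$ against the tables of Section \ref{sec:num_F}. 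Should any of these checks fail, the most likely culprit is a boundary interaction between the ceiling appearing in $\hat a^n_{h,p}$ and the no-ceiling lower bound used in the $B$-recurrence; conversely, if the checks pass, the conjecture would in fact elevate to a theorem through a rather short argument, and the reason the authors leave it open may simply be that they focused on numerical evidence rather than a direct comparison of the two recurrences.
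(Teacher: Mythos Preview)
The statement you are attempting to prove is explicitly labeled a \emph{conjecture} in the paper, and the authors write that they ``do not have a formal proof yet''; only numerical evidence (Figures~\ref{fig:comparisonB_k_fixed} and~\ref{fig:comparisonB_n_fixed}) is offered in support. So there is no proof in the paper against which to compare.

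Your argument, however, is correct and does settle the conjecture. The single-step inequality $\tilde U^k_h(n)-\tilde U^k_{h+1}(n)\le B_{h+1}(n)-B_h(n)$ indeed follows from the trivial bound $\min\{A_p,C_p\}\le C_p$, once one checks that the quantity plugged into the $B$-recurrence, namely $\max\{n-h(h-1)-1,\,(n-h)/h,\,1\}$ (no ceiling), is always at most $\min_{p}\hat a^n_{h,p}$. Your three stated concerns can all be dismissed: (i) in the regime $h>n/2$ one has $p^{\ast}=n-h$, so $\hat a^n_{h,p^{\ast}}=\max\{n-h(h-1)-1,1\}$; since $(n-h)/h<1$ there, the increment in~(\ref{second rec}) also uses $\max\{n-h(h-1)-1,1\}$, and the comparison is an equality; (ii) at $h=k-1$ nothing special happens, as the bound $\min\{A_p,C_p\}\le C_p$ is applied pointwise in $p$ and is indifferent to $A_p$ being constant; (iii) the ceiling in $\hat a^n_{h,p}$ can only help, since $\lceil(n-h)/p\rceil\ge (n-h)/p$, so the feared ``boundary interaction'' works in your favour. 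In fact you do not even need the detour through $a^n_h$: comparing $\min_p\hat a^n_{h,p}$ directly with the no-ceiling expression used in~(\ref{second rec}) is immediate in both the $h\le n/2$ and $h>n/2$ regimes.

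In short, you have produced a valid and short proof of a statement the authors left open. The likely reason it was left as a conjecture is simply that the authors focused on the numerical comparison and did not notice that discarding the $A_p$ branch already collapses the $\tilde U$-recursion back to the $B$-recursion, which forces $h\mapsto B_h(n)+\tilde U^k_h(n)$ to be nondecreasing.
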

Conjecture \ref{conj} suggests that the knowledge of the function $  \tilde U^k_{h}(n)$ for $ h=2 $ is enough to establish a (better) upper bound on $ rt_k(n) $, for every $ 2\leq k\leq n $. Since $\tilde U^k_{2}(n)  $ is an upper bound on $ O^k_2(n) $ (defined in Eq.\ (\ref{eq:=Okn})), Conjecture \ref{conj} also suggests that \emph{summing up} columns (or rows) of weight two in a matrix should be the quickest way to obtain a column (or row) of weight $\geq  k $, for any $ k $. For further (and future) improvements on the upper bound on $ rt_k(n) $, we could then think to better exploit the usage of columns of weight two.

 Numerical simulations also suggest that the two bounds $ B_k(n) $ and $ F_k(n) $ should coincide for $ n $ big enough (and fixed $ k $); this is formalized in the following conjecture.

\begin{conjecture}\label{conj2}
Let $ k \in \mathbb{N}$ and $ n_k=2k^2-8k+12 $. Then:
\begin{enumerate}
\item if $ 2\leq k\leq 6 $, for all $ n>k^2 $ it holds that $ F_k(n)=B_k(n) $;
\item if $ k> 6 $, for all $ n>n_k $ it holds that $ F_k(n)=B_k(n) $.
\end{enumerate}
\end{conjecture}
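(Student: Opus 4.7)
The plan is to prove, by downward induction on $h$, the stronger claim that whenever $n$ exceeds the stated threshold we have $\tilde U^k_h(n) = B_k(n)-B_h(n)$ for every $h\in\{2,\dots,k\}$. Summing the recurrence for $B$ from $h$ to $k$ then forces $B_h(n)+\tilde U^k_h(n)=B_k(n)$ uniformly in $h$, so the minimum defining $F_k(n)$ in Eq.~(\ref{eq:upp_rt_h}) equals $B_k(n)$. Both regimes $n>k^2$ and $n>n_k$ imply $n>k^2$ (note that $n_k>k^2$ precisely when $k>6$), so the first case of the closed form (\ref{eq:finalresult}) applies at every level $h\in[2,k]$, and in particular $B_{h+1}(n)-B_h(n)=n(h^2-h+2)/2$ throughout. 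The base case $h=k$ is trivial, and the upper bound $\tilde U^k_h(n)\le B_k(n)-B_h(n)$ comes essentially for free: any $p$ satisfies $\hat a^n_{h,p}\ge n-h(h-1)-1$, so by the inductive hypothesis for $h+1$,
\[
 B(p)\;\le\;(B_k(n)-B_{h+1}(n))+\tfrac{n}{2}(h^2-h+2)\;=\;B_k(n)-B_h(n),
\]
which yields $\min\{A(p),B(p)\}\le B_k(n)-B_h(n)$ for every $p$.

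The substantive step is the matching lower bound, which amounts to exhibiting a single $p^*=p^*(h,n,k)$ satisfying both $A(p^*)\ge B_k(n)-B_h(n)$ and $B(p^*)\ge B_k(n)-B_h(n)$. The B-condition rewrites as $\lceil(n-h)/p^*\rceil\le n-h(h-1)-1$, forcing the middle term in the max defining $\hat a^n_{h,p^*}$ not to dominate; the A-condition is $n(n-1)/2\ge B_{\min(h+p^*,k)}(n)-B_h(n)$. For $h\le k-2$ I plan to take $p^*=2$: the B-condition then reduces to the mild inequality $n\ge 2h^2-3h+3$, while the A-condition, using the telescoping identity $B_{h+2}(n)-B_h(n)=n(h^2+2)$, becomes $n\ge 2h^2+5$. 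The latter is binding at $h=k-2$ and yields exactly $n\ge 2(k-2)^2+5=2k^2-8k+13$, i.e.\ $n>n_k$. At the top level $h=k-1$ any $p^*\ge 2$ forces $h+p^*>k$, collapsing Option A to the flat value $n(n-1)/2$; a short calculation shows that $p^*=3$ realises the B-condition whenever $n\ge(3k^2-10k+12)/2$, a bound comfortably implied by $n>n_k$ as soon as $k>6$.

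The case split in the statement is now transparent: the genuine inductive bottleneck sits at level $h=k-2$ and produces the threshold $n>n_k$. For $k>6$ we have $n_k>k^2$ and $n>n_k$ dominates; for $k\le 6$ we have $n_k\le k^2$, and the inductive constraints are all absorbed by the requirement $n>k^2$, which is used only to guarantee that the first case of (\ref{eq:finalresult}) applies for every $h\in[2,k]$. The main obstacle I anticipate is the careful bookkeeping at the boundary levels $h=k-1$ and $h=k-2$: one must verify that the prescribed $p^*$ always lies in the admissible range $[1,\min\{h,n-h\}]$ right at the threshold, and track the ceilings in $\hat a^n_{h,p}$ so that no off-by-one issue spoils the borderline values of $n$. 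A short computer verification for the first small values $k\in\{7,8,9\}$ should suffice to rule out any such boundary artefacts and to confirm that the thresholds $k^2$ and $n_k$ cannot be relaxed within the framework of Theorem~\ref{thm:U_hk}.
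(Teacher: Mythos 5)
The statement you are addressing is presented in the paper as a \emph{conjecture}: the authors explicitly write that they ``do not have a formal proof yet'' and support it only with numerical experiments (Figure \ref{fig:nk}). There is therefore no proof in the paper to compare against, and your proposal, if written out in full, would upgrade the conjecture to a theorem. The strategy looks sound to me. Since $F_k(n)\le B_k(n)$ is already known from Theorem \ref{thm:U_hk}, the whole content is the lower bound $\tilde U^k_h(n)\ge B_k(n)-B_h(n)$ for $2\le h\le k$, and your downward induction delivers it: the upper-bound half is immediate from $\hat a^n_{h,p}\ge n-h(h-1)-1$, and for the lower bound the witnesses $p^*=2$ at levels $h\le k-2$ and $p^*=3$ at $h=k-1$ work, the binding constraint being the Option-A inequality $n(n-1)/2\ge B_{h+2}(n)-B_h(n)=n(h^2+2)$ at $h=k-2$, i.e.\ $n\ge 2(k-2)^2+5=n_k+1$. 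This reproduces the conjectured threshold exactly and is absorbed by $n>k^2$ precisely when $(k-2)(k-6)\le 0$, which matches the case split at $k=6$. I spot-checked the induction at $k=7$, $n=55=n_k+1$: one indeed gets $\tilde U^7_h(55)=B_7(55)-B_h(55)$ for every $h$, with the minimum in Option A attained with equality at $h=5$. What remains to turn the plan into a proof: (i) the cases $k=2,3$, where $p^*=3$ is inadmissible at $h=k-1$ (trivial for $k=2$; $p^*=2$ suffices for $k=3$); (ii) an explicit statement that only the lower-bound direction of the inductive hypothesis is needed and that it extends to levels above $k$ via $\tilde U^k_{h'}(n)=0$; (iii) replacing the appeal to ``computer verification'' by the purely algebraic boundary checks, which your ceiling estimate $\lceil x/p\rceil\le (x+p-1)/p$ already makes routine since the B-conditions are never binding; and (iv) noting that your closing claim that the thresholds ``cannot be relaxed'' is neither established by the argument nor required by the statement, which only asserts equality above the threshold.
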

Notice that $ n_k>k^2 $ for all $ k>6 $.
Conjecture \ref{conj2}, if true, states that our two upper bounds on the $ k $-RT coincide when $ k $ is fixed and $ n $ is big enough. Equivalently, our two upper bounds on the $ k $-RT coincide when $ n $ is fixed and $ k $ is small enough. 
This would imply that the linear upper bound on $ rt_k(n) $ for fixed $ k\leq \sqrt{n} $ in Eq.\ (\ref{second rec}) is not improved by our new techinque. On the other hand, for values of $ n $ and $ k $ not fulfilling requirements 1.\ and 2.\ of Conjecture \ref{conj2}, numerical results seem to suggest that $ F_k(n)<B_k(n) $, that is our new techinque strictly improves the bound $ B_k(n) $. This will be shown in detail in the next section, in particular in Figures \ref{fig:comparisonB_k_fixed} and \ref{fig:comparisonB_n_fixed}.
Finally, Figure \ref{fig:nk} shows for each fixed $ 6<k\leq 50 $, the smallest value of $ n<5000 $ found such that $ F_k(n')=B_k(n') $ for all $ n' $ s.t.\ $ n<n'<5000 $; it clearly supports Conjecture \ref{conj2}.
\begin{figure}
\centering
\includegraphics[scale=0.16]{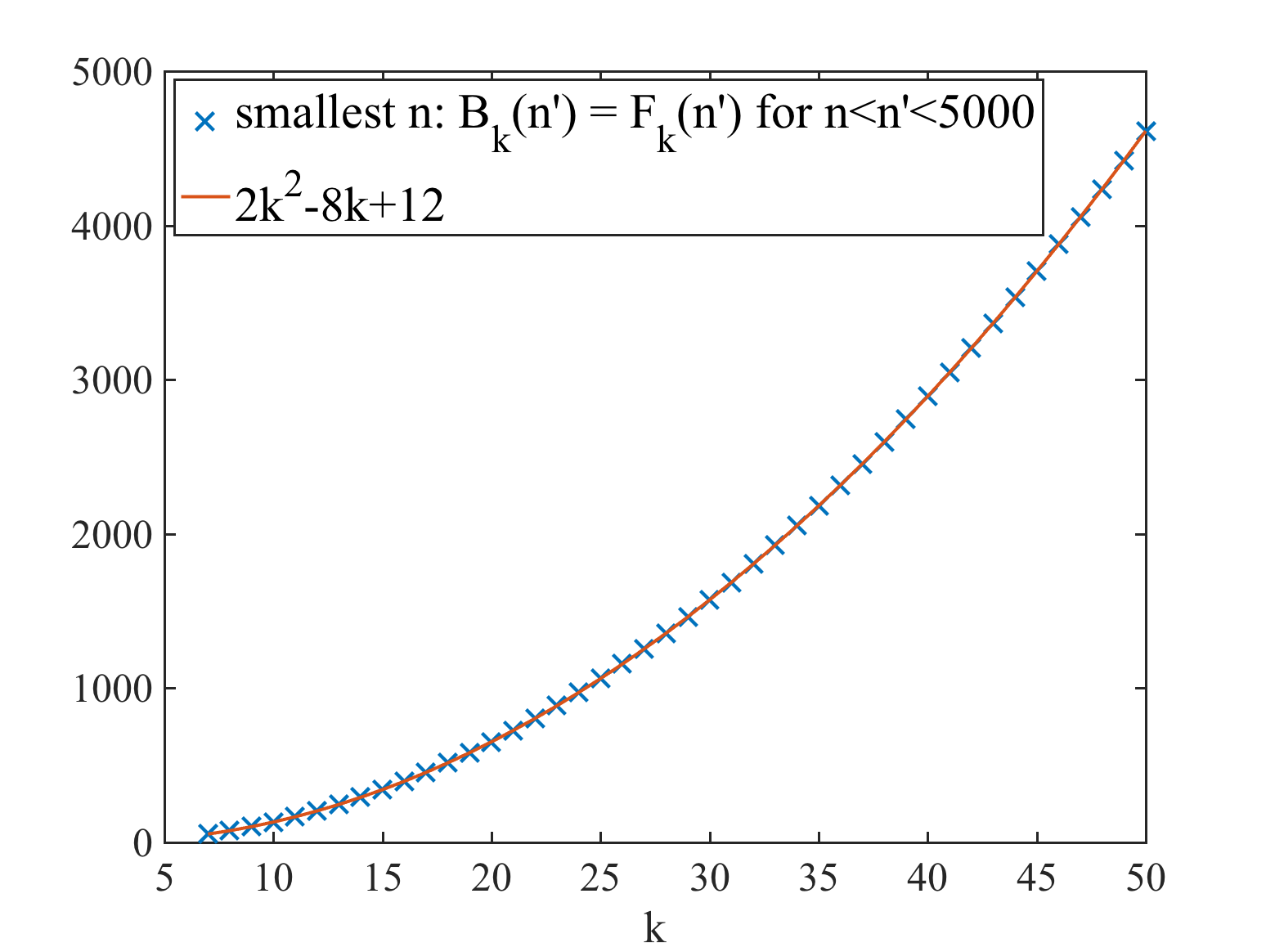}
\caption{Comparison between the function $ 2k^2-8k+12 $ and the values of the smallest $ n $ for which resulted that $ F_k(n')=B_k(n') $ for all $ n' $ such that $ n<n'<5000 $.}
\label{fig:nk}
\end{figure}

\section{Numerical results for the upper bound $ F_h(n) $}\label{sec:num_F}
In this section, we first present the analogue of Figures \ref{fig:test}, \ref{fig:test2} and \ref{fig:test3} for the new upper bound $ F_k(n) $, and then we show some graphs picturing the behavior of $ F_k(n) $, first for fixed $ k $ and then for fixed $ n $.
Figure \ref{fig:test_new} compares $ F_k(n) $ with $ B_k(n) $ and with the real $ k $-RT of the primitive sets $\mathcal{M}_{CPR}$ and $\mathcal{M}_{K}$ (see Section \ref{sec:num_B}). We can see that for $k\leq \sqrt{n}$, the upper bound $ F_k(n) $ coincides with $ B_k(n) $ and it is fairly close to the actual value of $rt_k(\mathcal{M})$, while for $k> \sqrt{n}$, $ F_k(n) $ strictly improves on $ B_k(n) $ in almost all cases.
\begin{figure}
\centering
\subfigure[][$n=4$, $\mathcal{M} = \mathcal{M}_{CPR}$]{\includegraphics[scale=0.15]{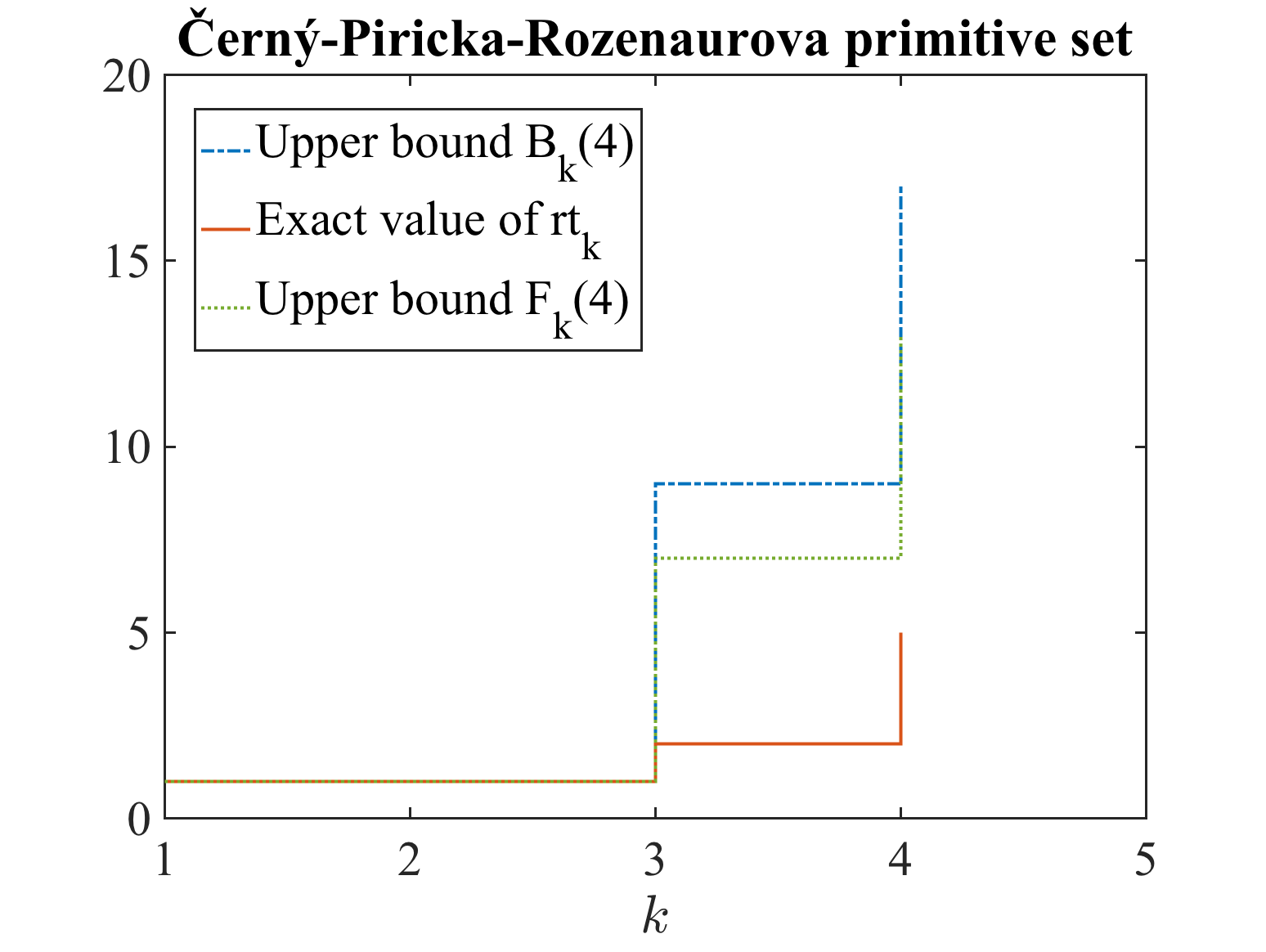}}\subfigure[][$n=6$, $\mathcal{M} = \mathcal{M}_{K}$]{\includegraphics[scale=0.15]{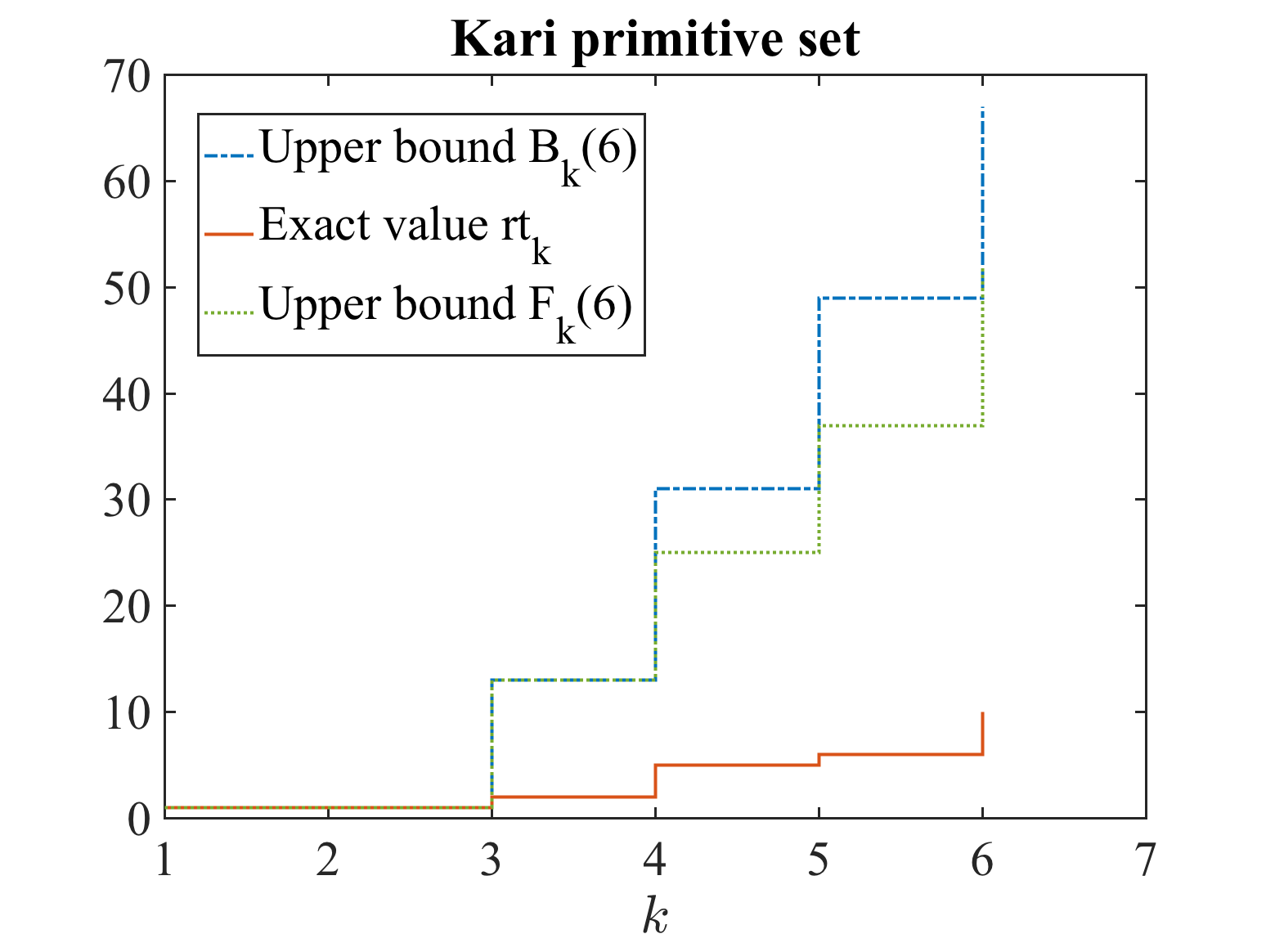}}
\caption{Comparison between the upper bounds $ F_k(n) $ and $ B_k(n)$, valid for all primitive NZ sets, and $rt_k(\mathcal{M})$ for (a) $\mathcal{M} = \mathcal{M}_{CPR}$ and (b) $ \mathcal{M} = \mathcal{M}_{K}$.}
\label{fig:test_new}
\end{figure}
Figure \ref{fig:test2_new} compares the upper bounds $ F_k(n) $ and $B_k(n) $, and the results of the Eppstein heuristic on the $k$-RT of the primitive sets $\mathcal{M}_{C_{10}}$, $\mathcal{M}_{C_{15}}$, $\mathcal{M}_{C_{20}}$ and $\mathcal{M}_{C_{25}}$ (see Section \ref{sec:num_B}). It can be noticed that for $ k $ big enough, $ F_k(n) $ substantially improves on $B_k(n) $.
\begin{figure}[h!]
%\hspace{-36pt}
\centering 
\subfigure[][$\mathcal{M} = \mathcal{M}_{C_{10}}$]{\includegraphics[scale=0.18]{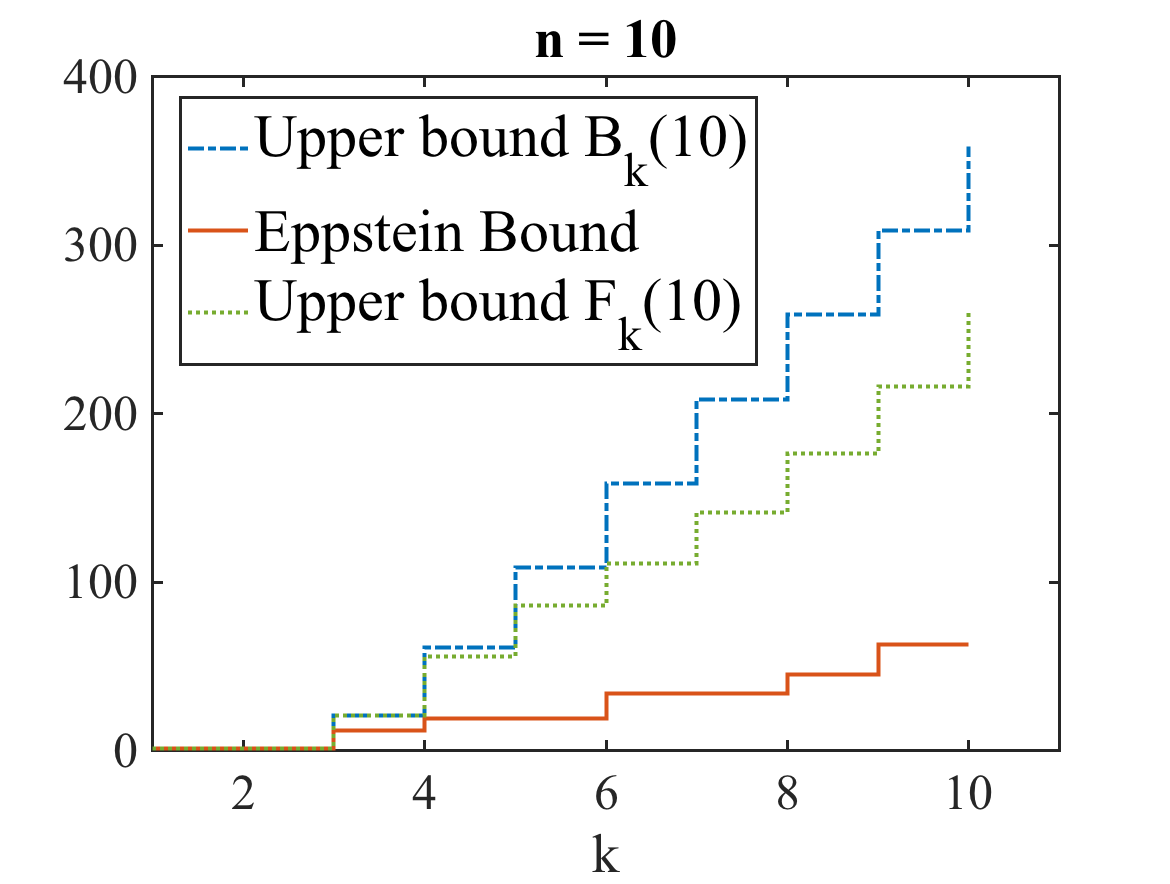}}\subfigure[][$\mathcal{M} = \mathcal{M}_{C_{15}}$]{\includegraphics[scale=0.18]{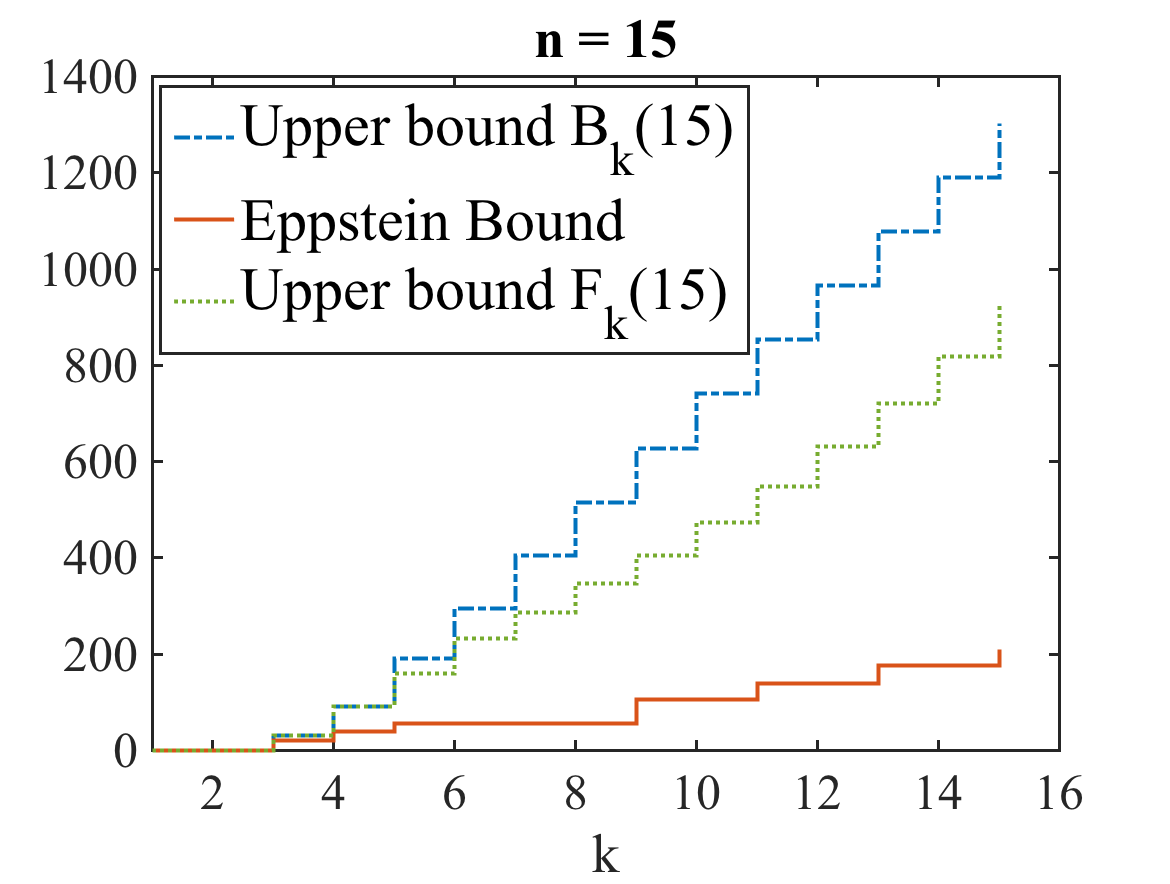}}\\
\subfigure[][$\mathcal{M} = \mathcal{M}_{C_{20}}$]{\includegraphics[scale=0.18]{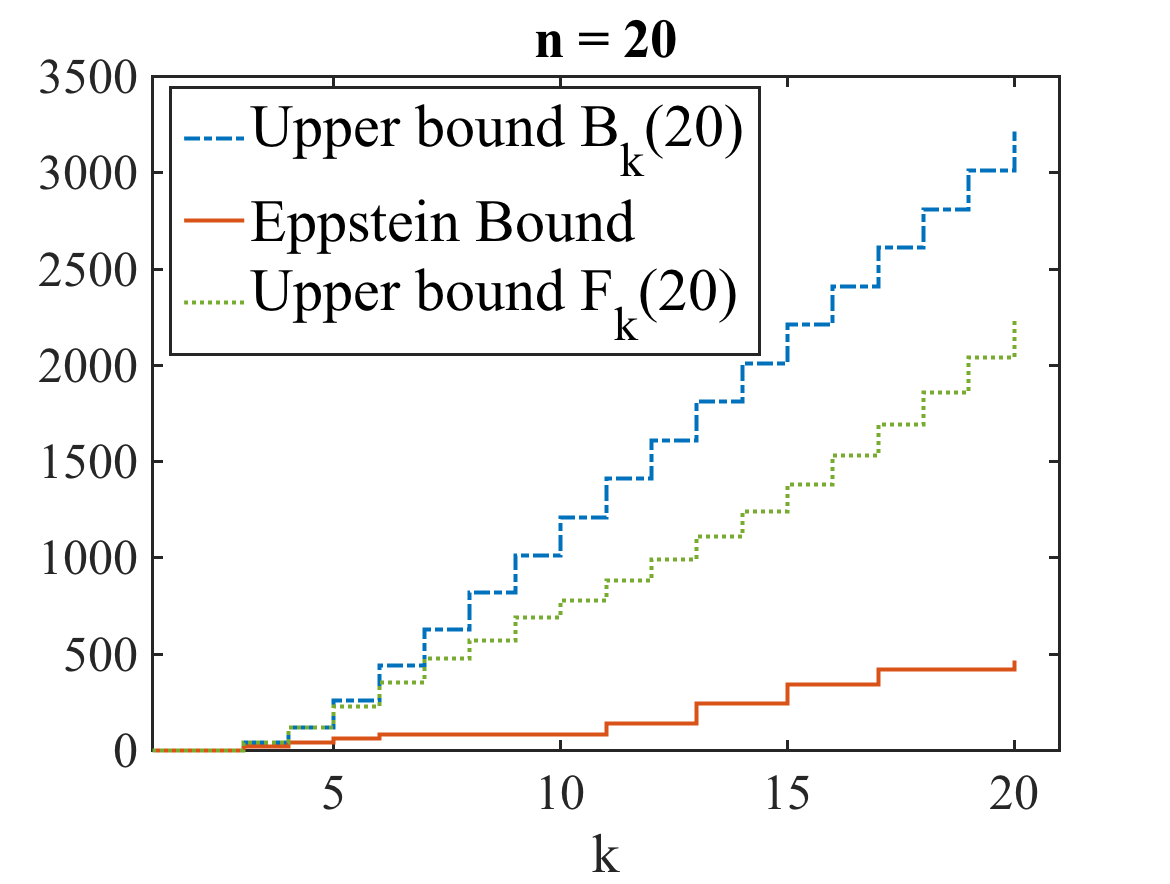}}\subfigure[][$\mathcal{M} = \mathcal{M}_{C_{25}}$]{\includegraphics[scale=0.18]{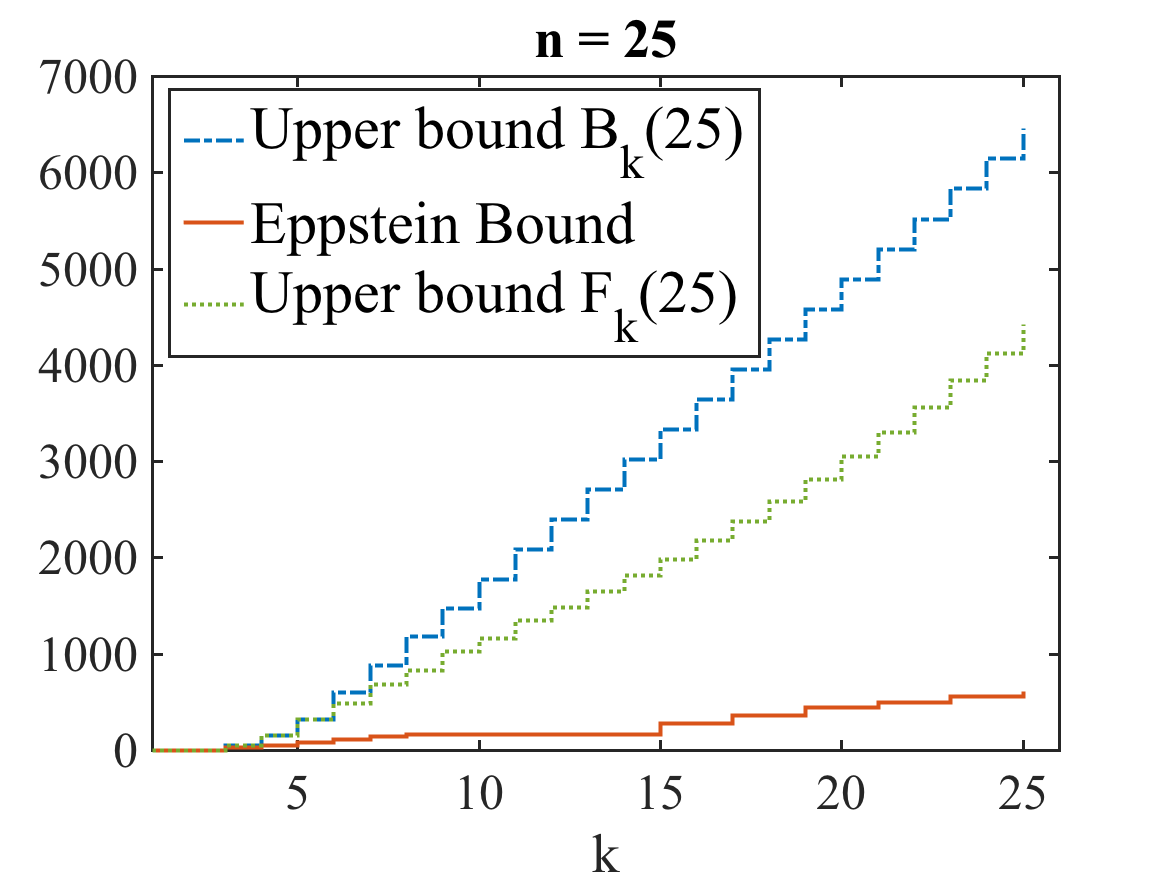}}  
\caption{Comparison between the bounds $ F_k(n) $ and $B_k(n)$, and the Eppstein approximation of $rt_k(\mathcal{M})$, for (a) $\mathcal{M} = \mathcal{M}_{C_{10}}$, (b) $\mathcal{M} = \mathcal{M}_{C_{15}}$, (c) $\mathcal{M} = \mathcal{M}_{C_{20}}$, (d) $\mathcal{M} = \mathcal{M}_{C_{25}}$.}
\label{fig:test2_new}
\vspace{-0.3cm}
\end{figure}
Regarding Figure \ref{fig:test3}, since $ k$ is fixed equal to $4 $ and $ n $ ranges from $ 21$ to $30 $, in view of Conjecture \ref{conj2} (and confirmed by numerical experiments) the values of $ F_k(n) $ coincide with the values of $ B_k(n) $.  

Figure \ref{fig:comparisonB_k_fixed} compares the new upper bound $ F_k(n) $ on the $ k $-RT with the bound $ B_k(n) $ obtained in Section \ref{sec:impro} for fixed values of $ k $ ($ k=10,20,30,40,50,100 $). The pictures show the values of $ n $ for which $ F_k(n) $ is a real improvement of $ B_k(n) $, as for $ n $ big enough the two bounds seem to coincide (see also Conjecture \ref{conj2}). Furthermore for every $ k $ and $ n $ we obtained that the minimum of (\ref{eq:upp_rt_h}) is reached at $ h=2 $, that is $F_{k}(n)= B_{2}(n)+ \tilde U^k_{2}(n) $, thus supporting Conjecture \ref{conj}.

Figure \ref{fig:comparisonB_n_fixed} compares $ F_k(n) $ with $ B_k(n) $ for fixed values of $ n $ ($ n=10,50,100,200,500,1000 $) and $ k $ ranging from $ 2 $ to $ n $. We again observe that for small values of $ k $ the two bounds coincide (see also Conjecture \ref{conj2}), while for bigger values of $ k $ the bound $ F_k(n) $ is a substantial improvement of $ B_k(n)  $. Again, for every $ k $ and $ n $ we obtained that the minimum of (\ref{eq:upp_rt_h}) is reached at $ h=2 $, that is $F_{k}(n)= B_{2}(n)+ \tilde U^k_{2}(n) $, supporting Conjecture \ref{conj}.

Finally, Figure \ref{fig:nRT_comparisons} compares the upper bounds $ F_k(n) $ and $ B_k(n) $ for $ k=n $, that is the upper bound on the length of the smallest product having a column or a row with all positive entries. Since by Proposition \ref{prop:krt_prim_aut} any generic upper bound on the reset threshold of a synchronizing automaton is an upper bound on $ rt_n(n) $, in Figure \ref{fig:nRT_comparisons} we also represent the function $(15617 n^3 + 7500 n^2 + 9375 n - 31250)/93750$, which is the upper bound on the reset threshold of a synchronizing automaton on $ n $ states found by Szyku{\l}a in \cite{Szykula}. Szyku{\l}a's upper bound has been recently improved by Shitov \cite{Shitov}, who proved that we can upper bound the reset threshold of a synchronizing automaton by a function $ \alpha n^3+o(n^3) $ with $ \alpha\leq 0.1654 $ (while in Szykula's upper bound $ \alpha \approx 0.1664 $). We decided to picture Szyku{\l}a's bound because it has a precise analytical expression, while Shitov's one does not, and because the difference between the two bounds is negligible with respect to our purposes. Indeed Figure \ref{fig:nRT_comparisons} shows that up to now, our techniques fall short of improving the upper bound on the $ n $-RT, as more efficient bounds are already known. %This was somehow expected, as E' DIFFICILE TROVARE UPPER BOUND RESET THRESHOLD, CI STANNO PROVANDO IN TANTI DA TATNI ANNI.

\begin{figure}[p]
%\hspace{-36pt}
\centering 
\subfigure[][$ k=10 $]{\includegraphics[scale=0.16]{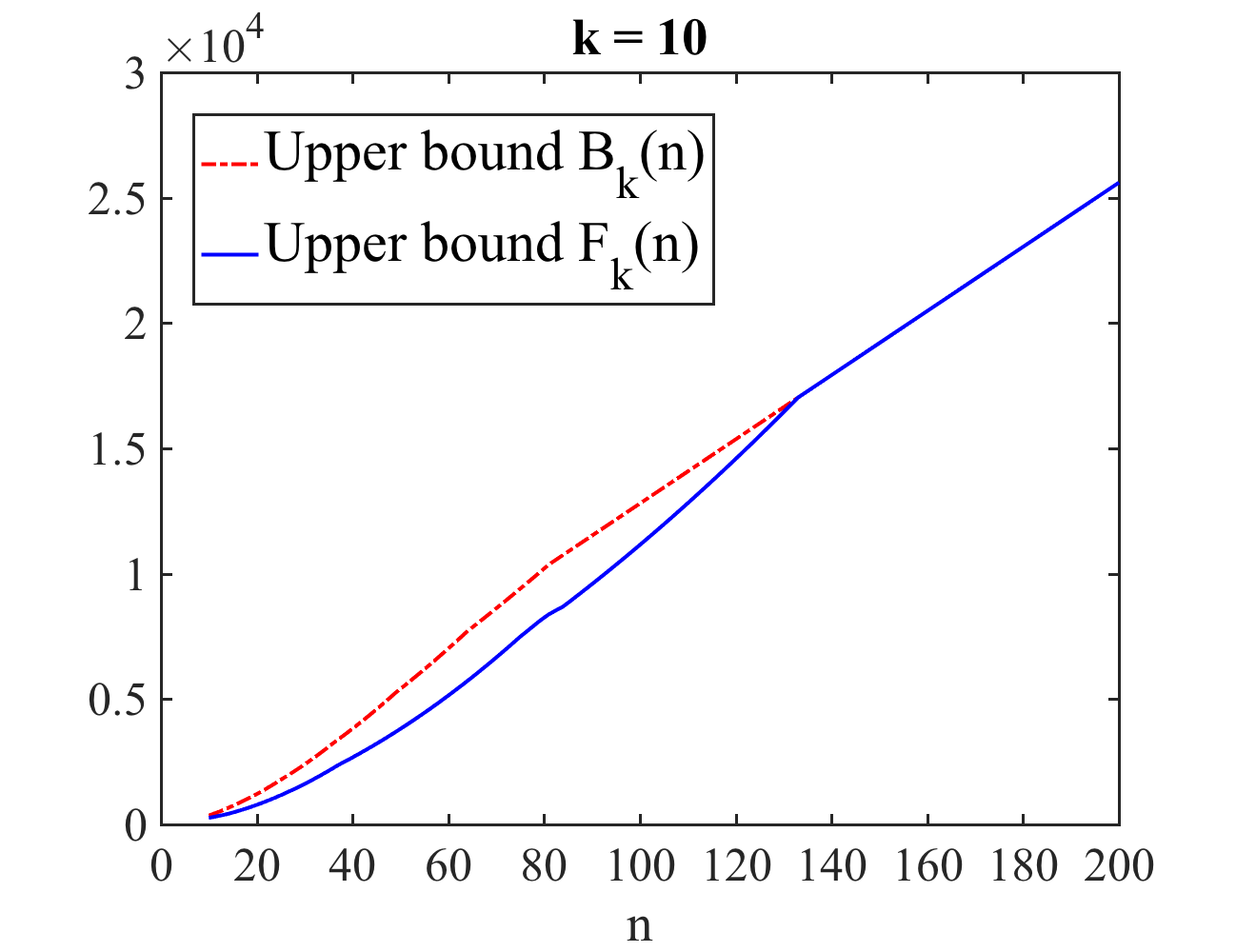}}
  \subfigure[][$ k=20 $]{\includegraphics[scale=0.16]{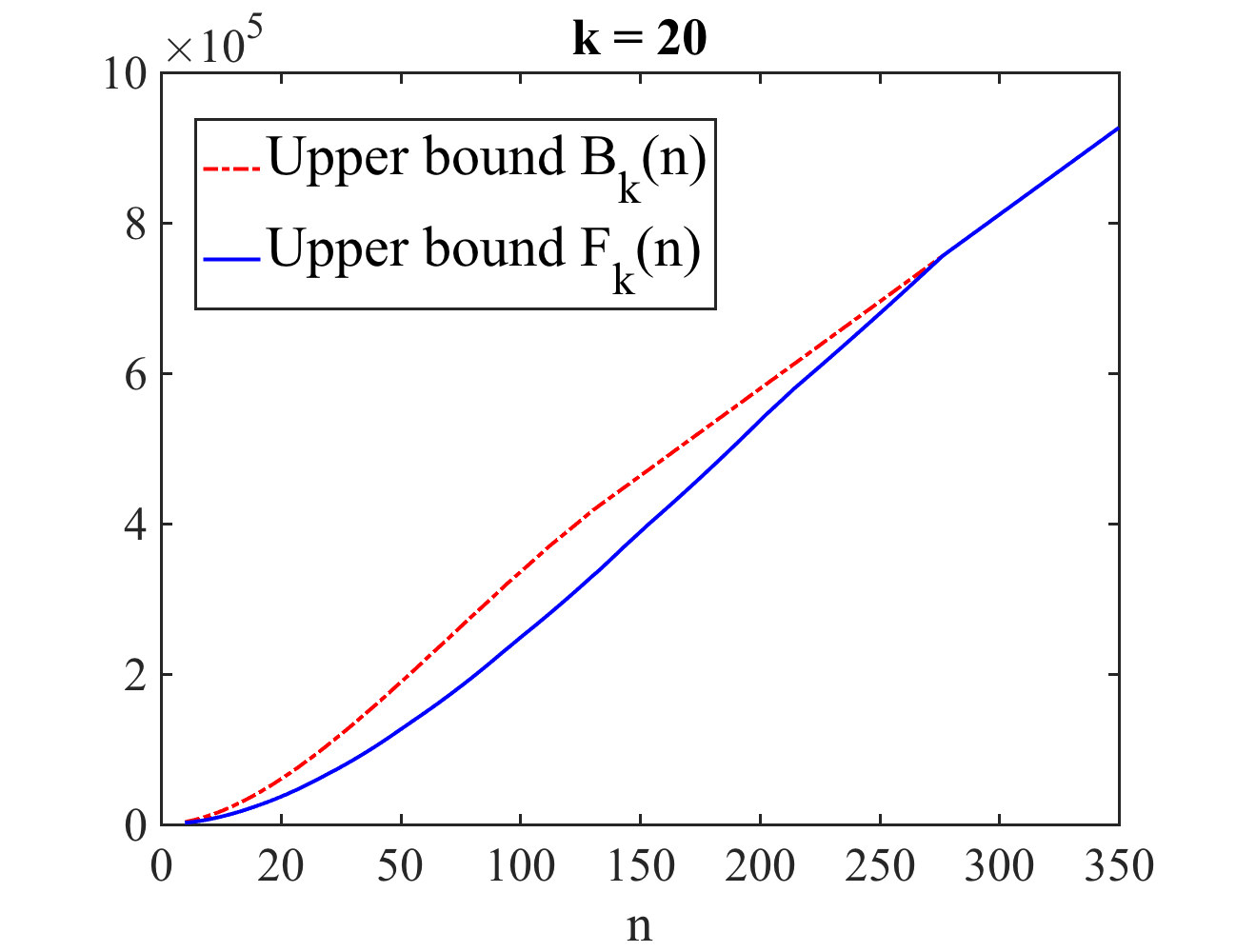}}\\
  \subfigure[][$ k=30 $]{\includegraphics[scale=0.16]{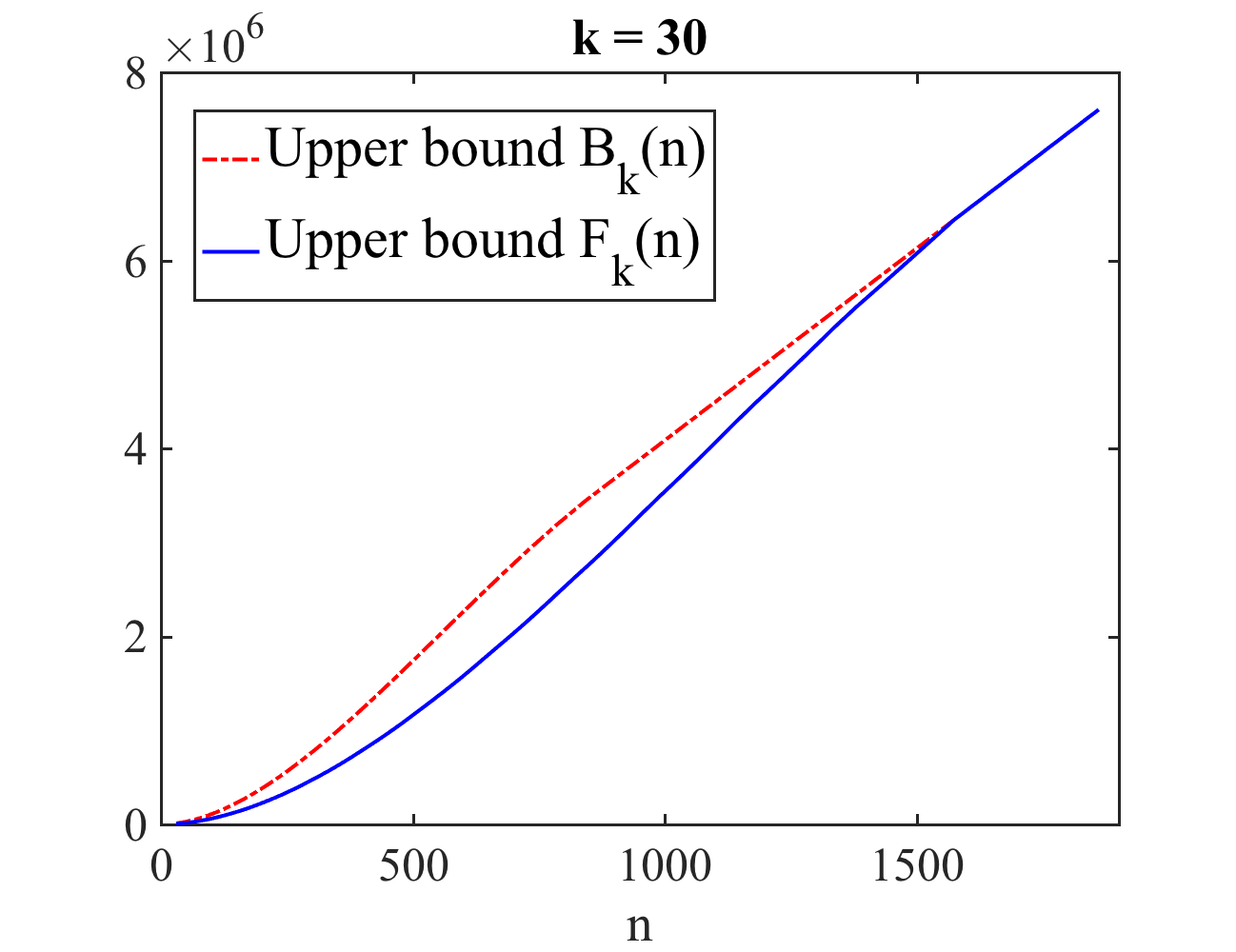}}
  \subfigure[][$ k=40 $]{\includegraphics[scale=0.16]{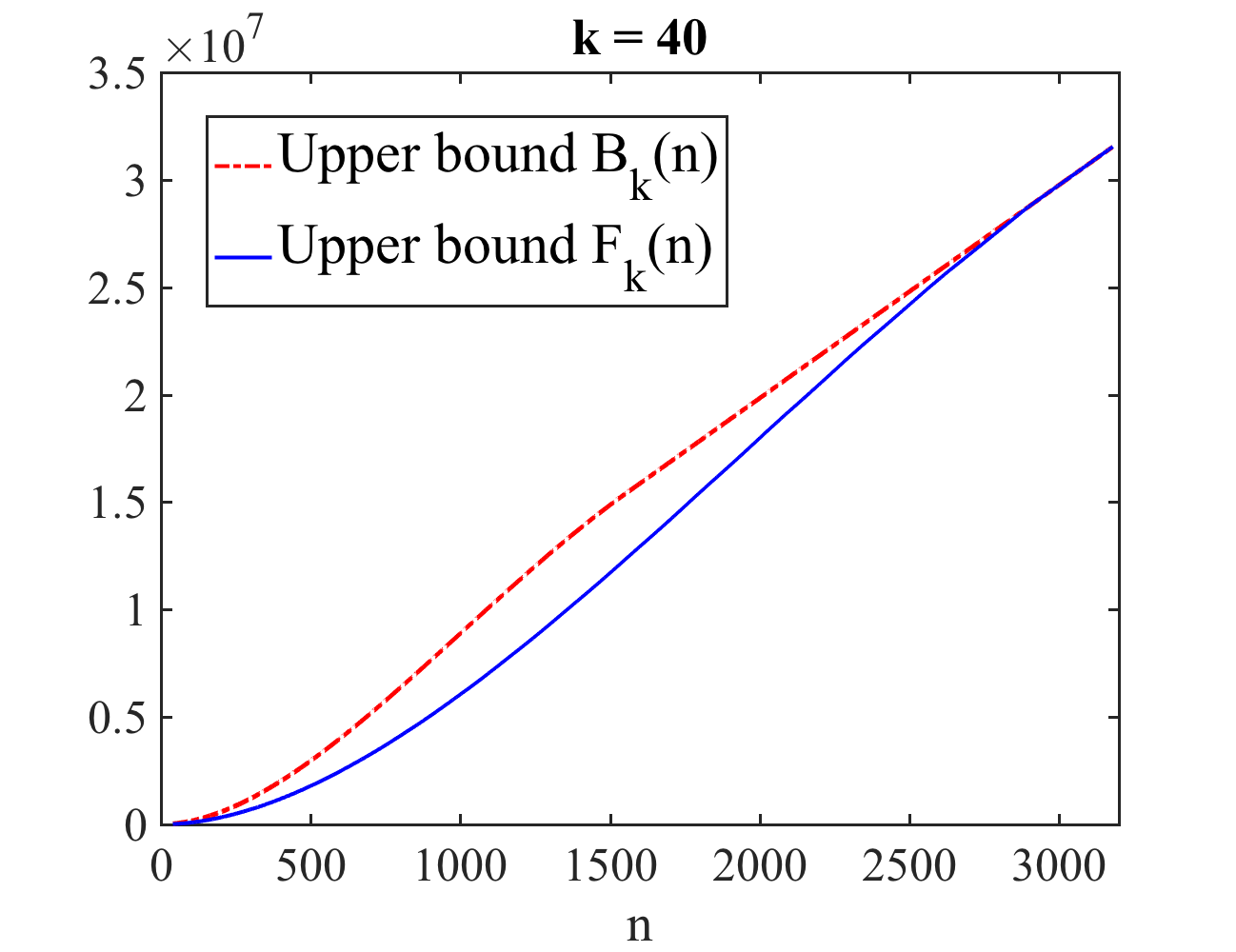}}\\
  \subfigure[][$ k=50 $]{\includegraphics[scale=0.16]{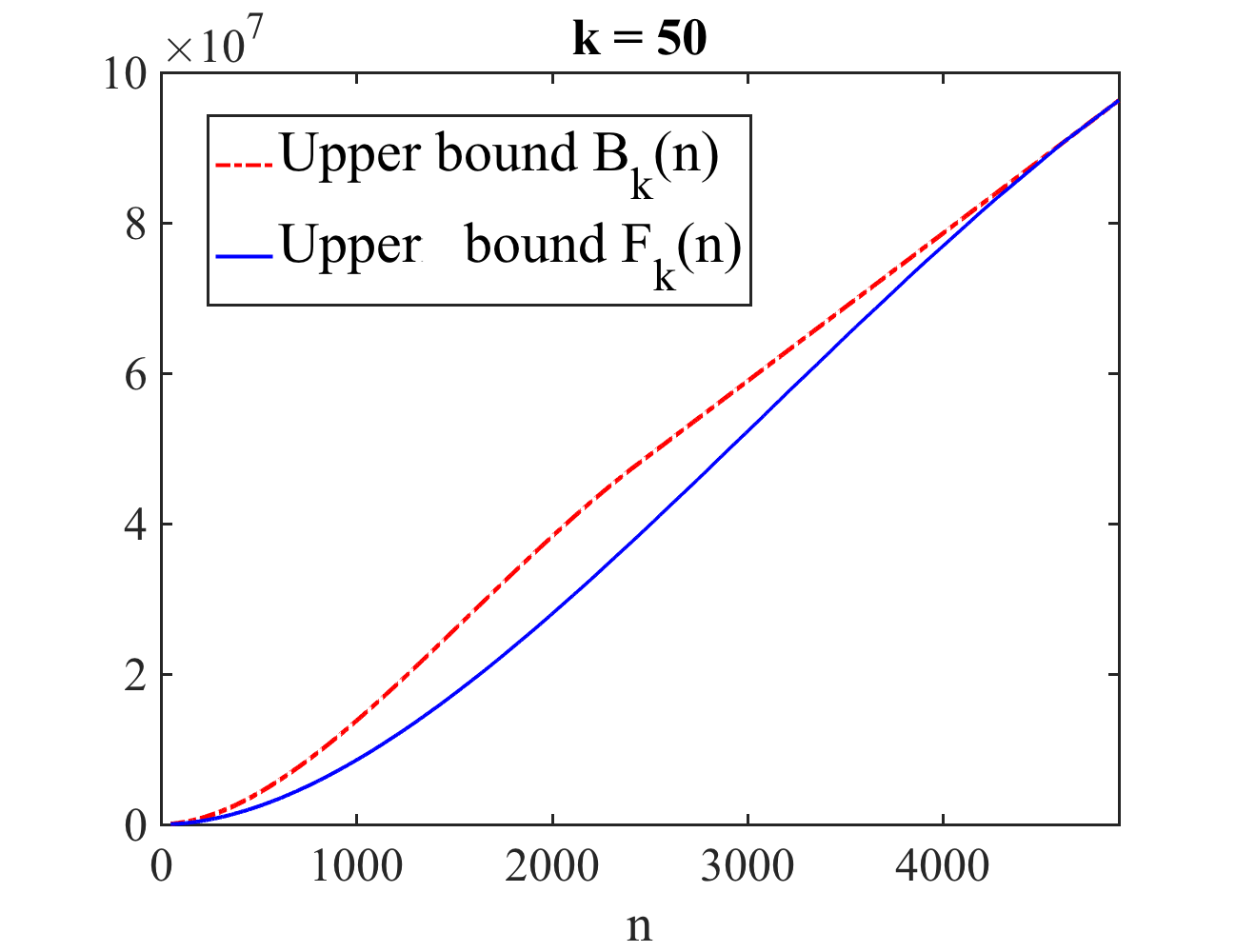}}
  \subfigure[][$ k=100 $]{\includegraphics[scale=0.16]{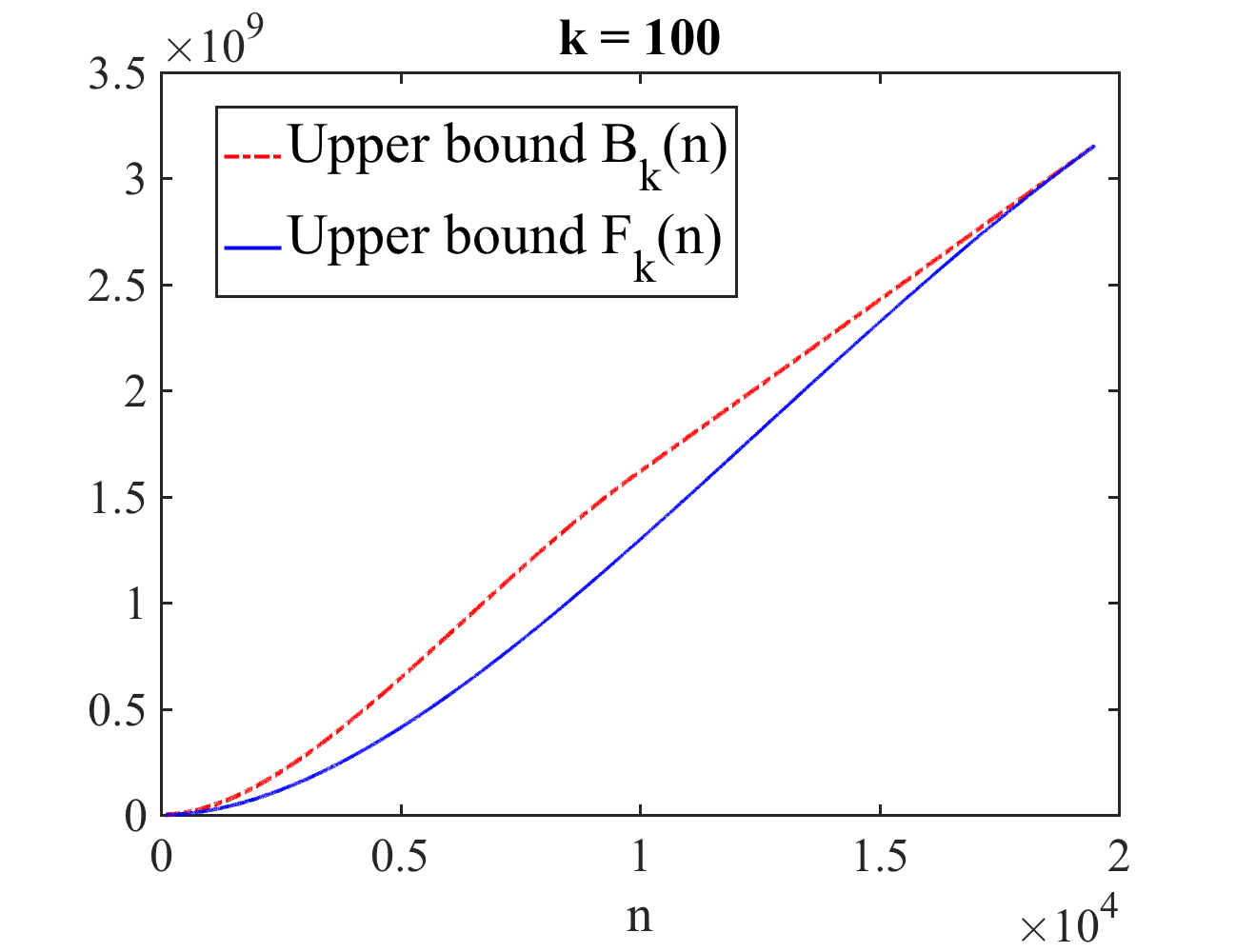}}
\caption{Comparison between the bounds $ F_k(n) $ and $ B_k(n)$ for fixed values of $ k $.}
\label{fig:comparisonB_k_fixed}
%\vspace{-0.4cm}
\end{figure}

\begin{figure}[p]
%\hspace{-36pt}
\centering 
\subfigure[][$ n=10 $]{\includegraphics[scale=0.145]{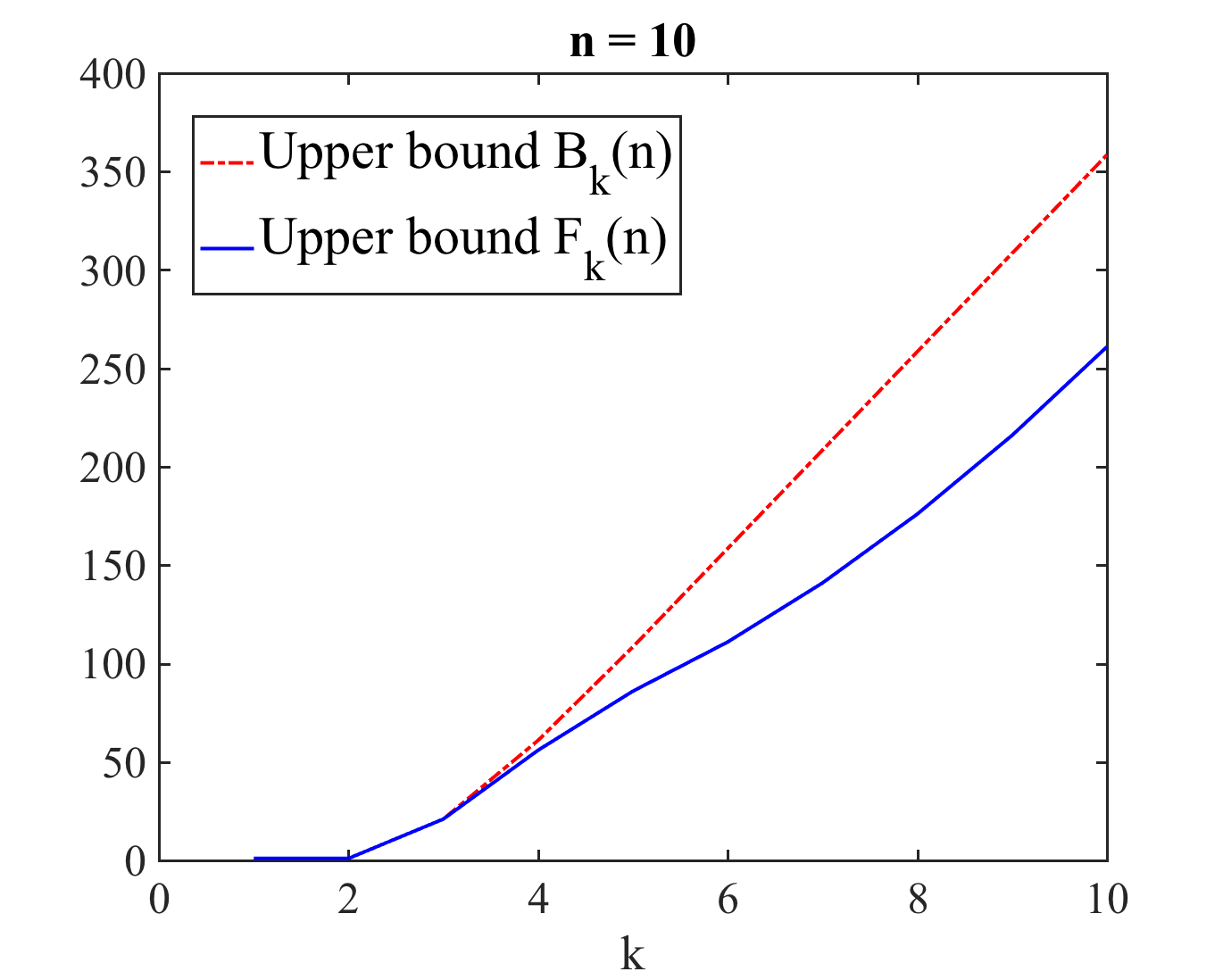}}
  \subfigure[][$ n=50 $]{\includegraphics[scale=0.145]{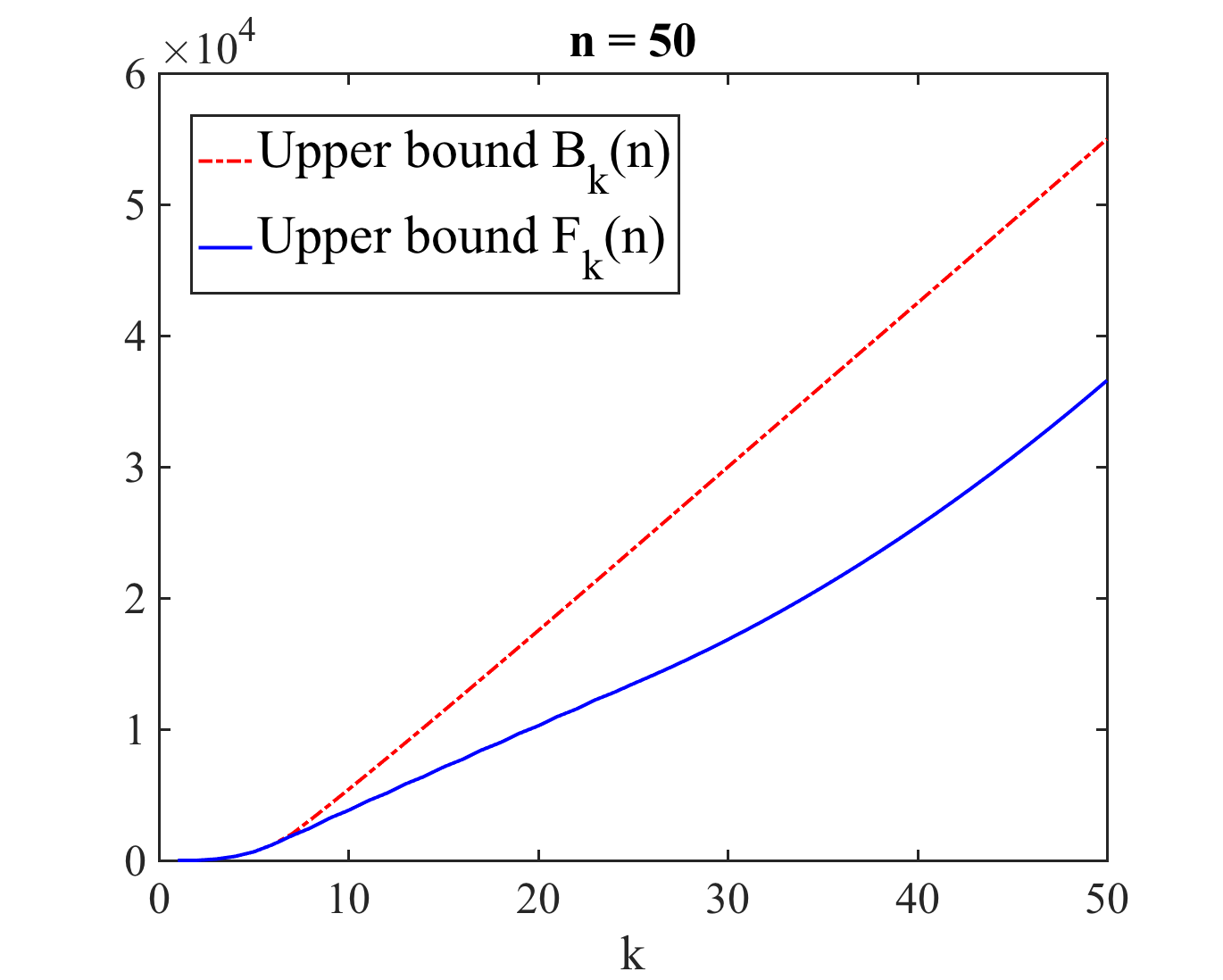}}\\
  \subfigure[][$ n=100 $]{\includegraphics[scale=0.145]{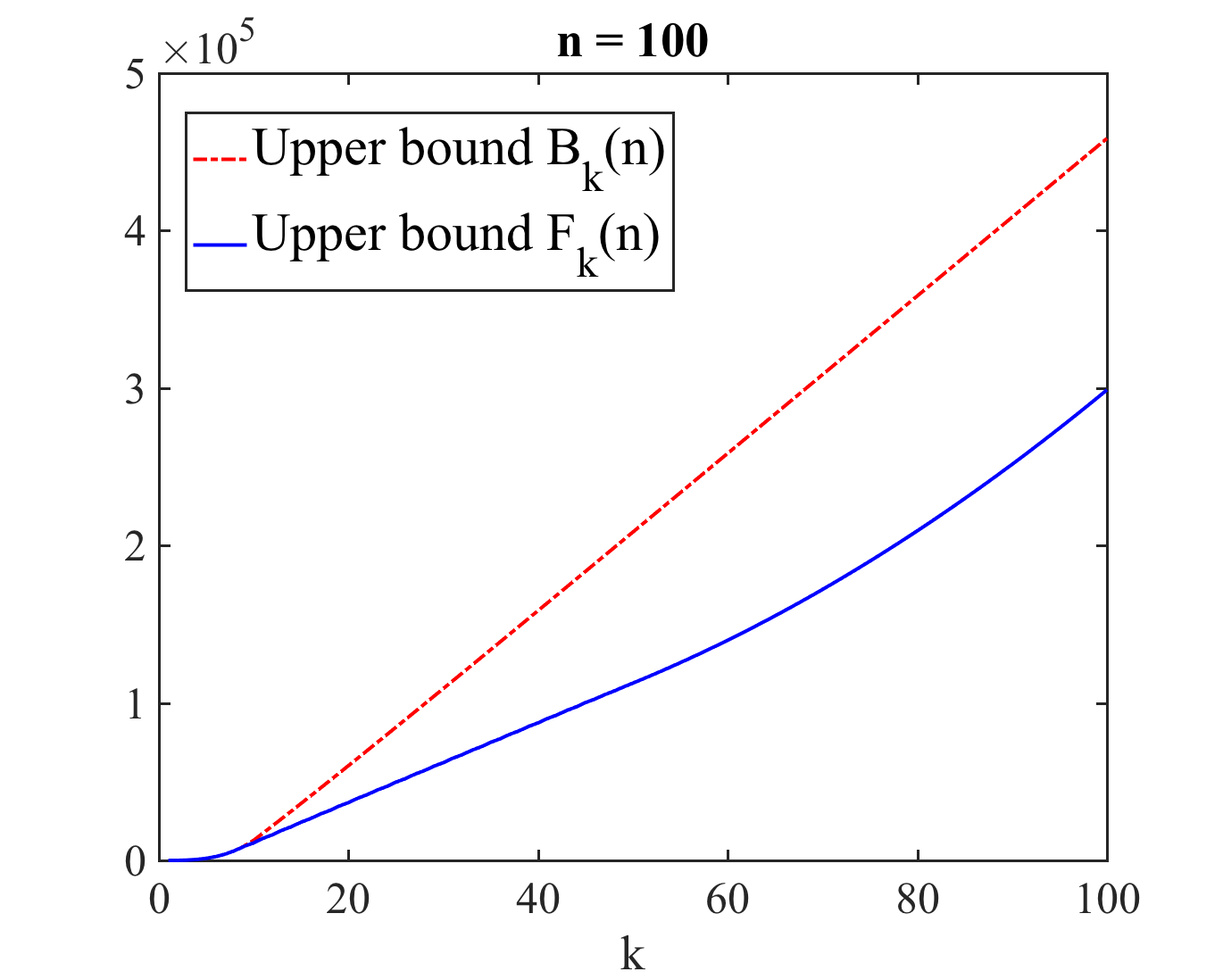}}
  \subfigure[][$ n=200 $]{\includegraphics[scale=0.145]{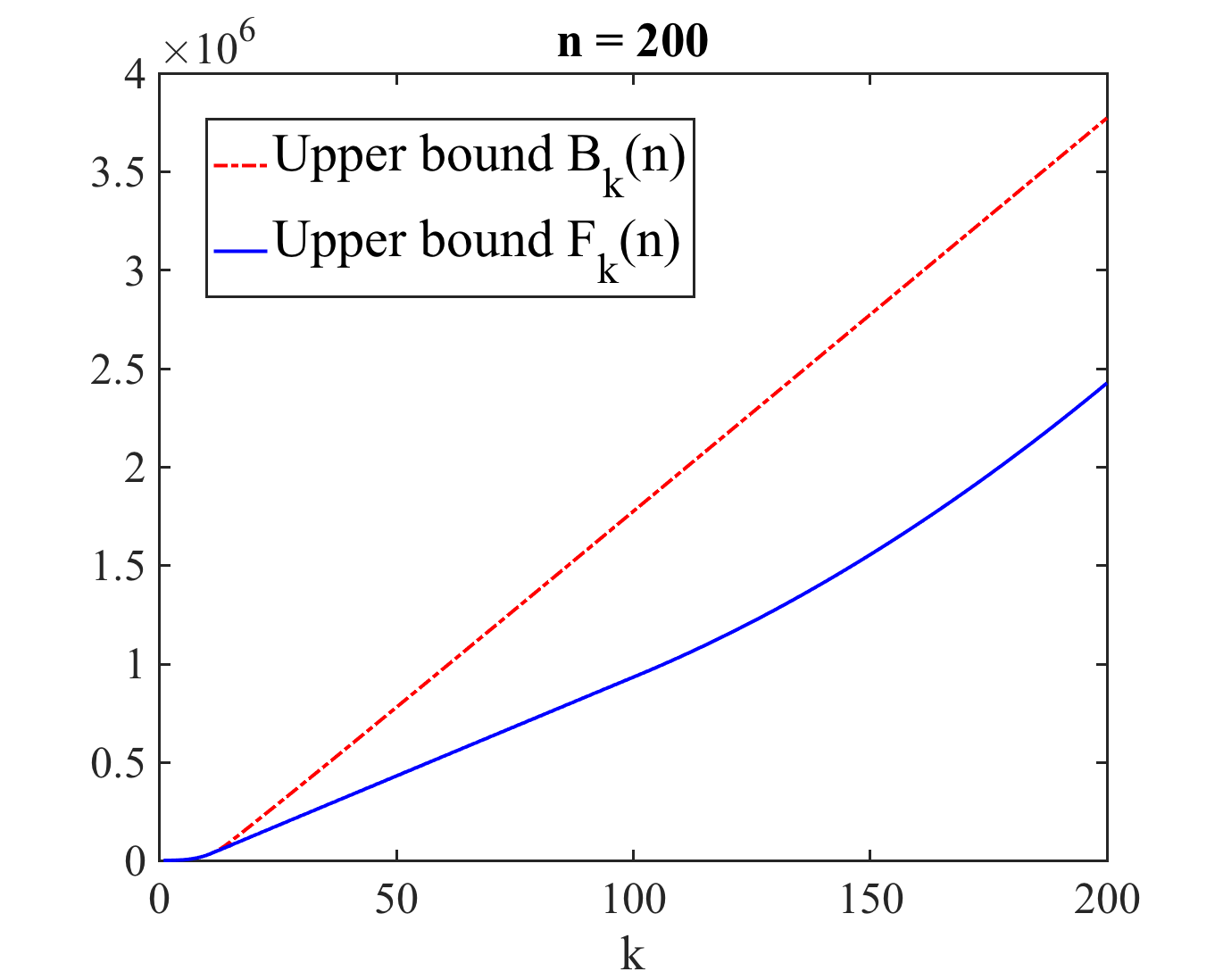}}\\
  \subfigure[][$ n=500 $]{\includegraphics[scale=0.145]{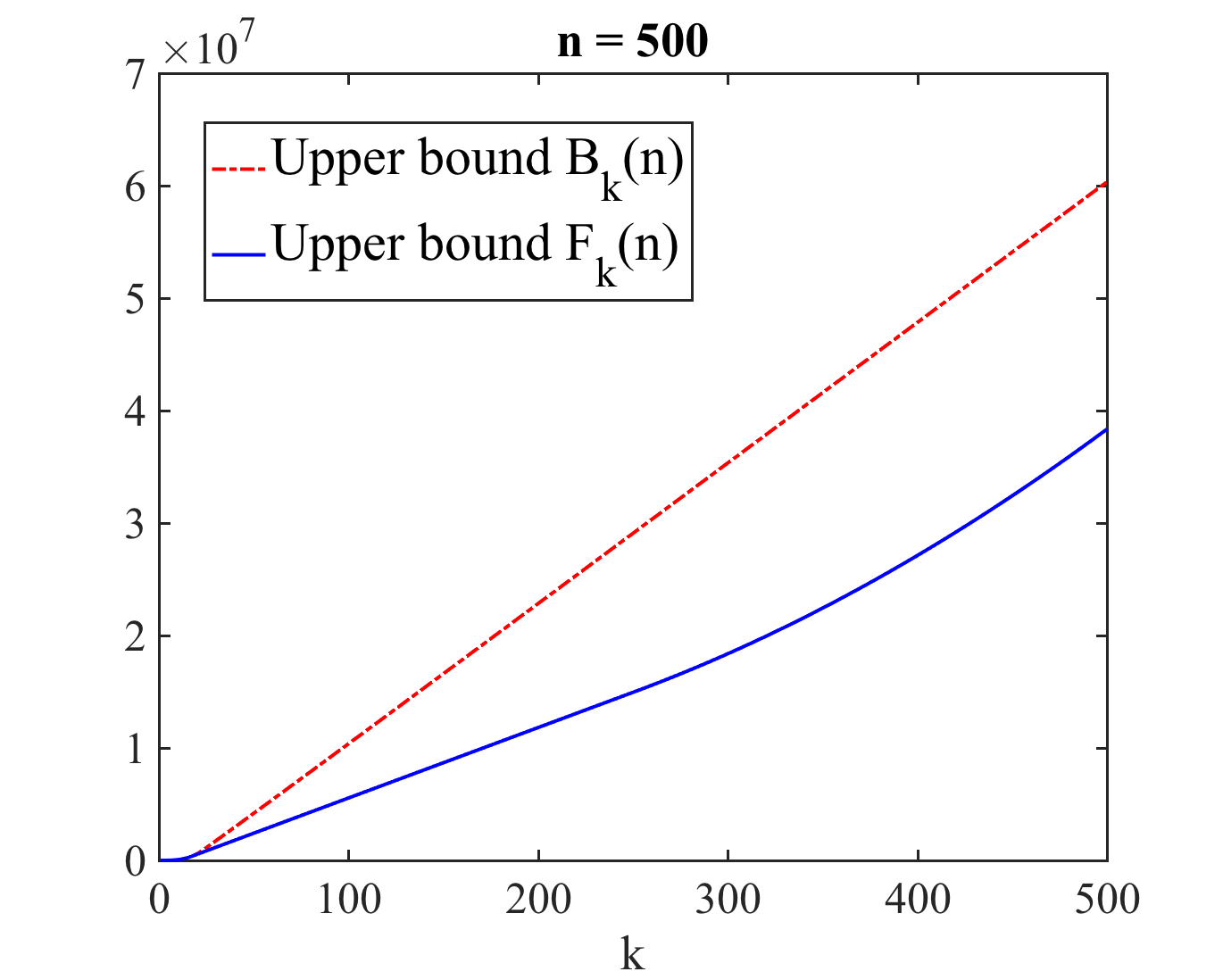}}
  \subfigure[][$ n=1000 $]{\includegraphics[scale=0.145]{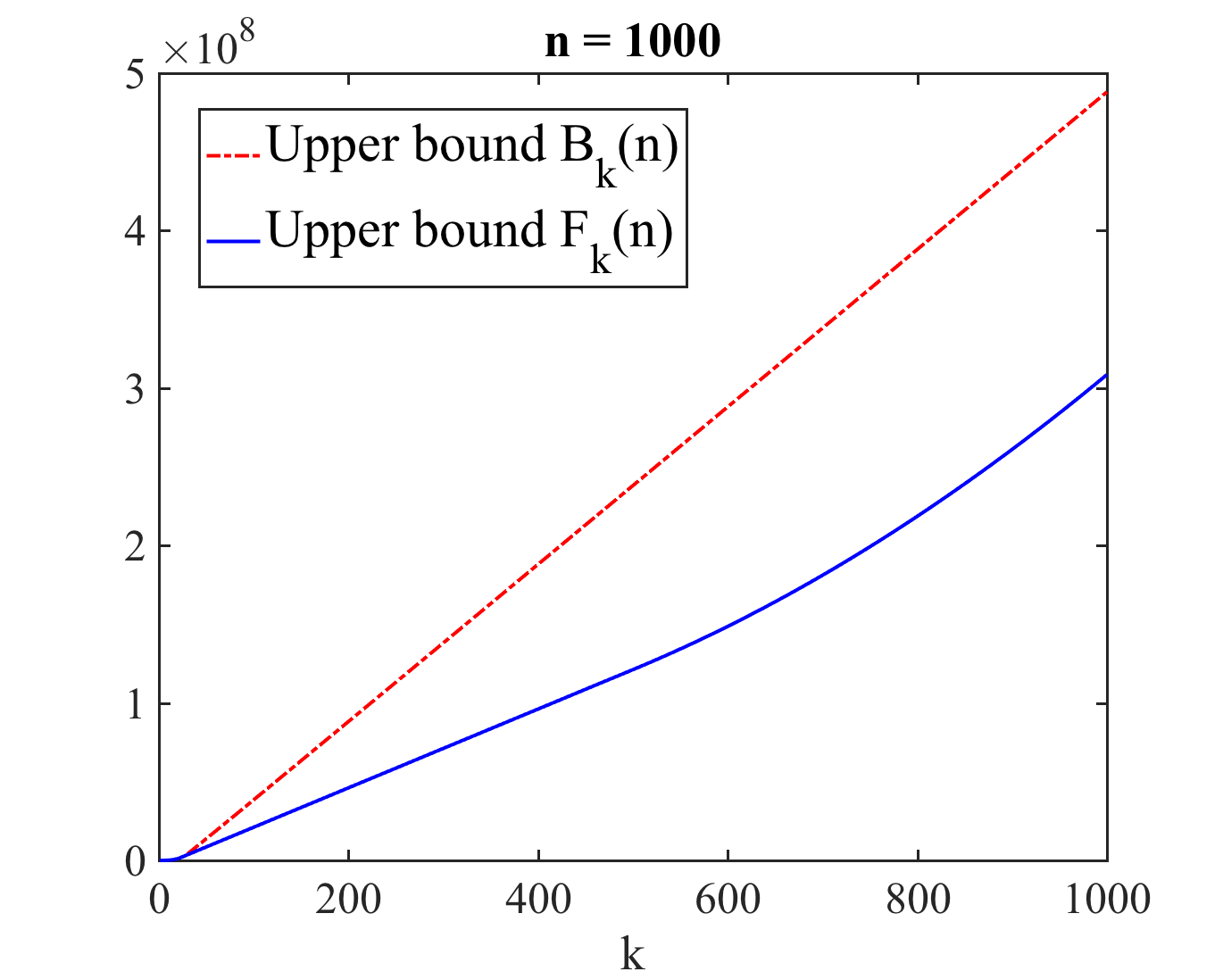}}
\caption{Comparison between the bounds $ F_k(n) $ and $B_k(n)$ for fixed values of $ n $.}
\label{fig:comparisonB_n_fixed}
%\vspace{-0.4cm}
\end{figure}

\begin{figure}[h!]
\centering
\includegraphics[scale=0.2]{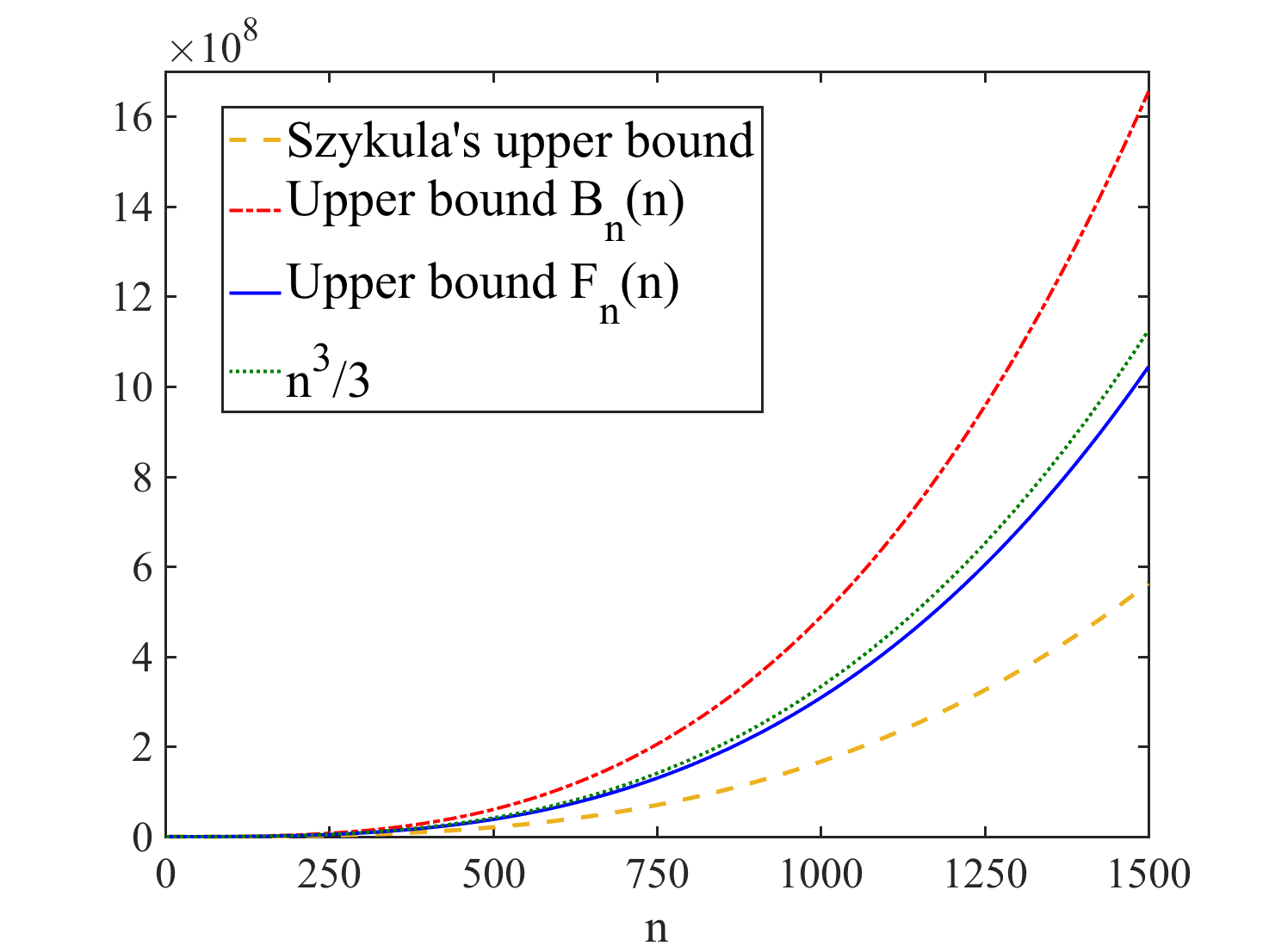}
\caption{Comparison between the two upper bounds $ F_n(n) $, $ B_n(n) $, and Szyku{\l}a's upper bound. The function $ n^3/3 $ has been pictured for reference.}
\label{fig:nRT_comparisons}
\end{figure}
%COMMENTI ULTERIORI?

\section{Conclusions}\label{sec:conclusions}
In this paper we have shown that we can upper bound the length of the shortest product of a primitive NZ set $ \mathcal{M} $ having a column or a row with $ k $ positive entries by a linear function of the matrix size $ n $, for any constant $ k\leq \sqrt{n} $. We have called this length the $ k $\emph{-rendezvous time} ($ k $-RT) of the set $ \mathcal{M} $, and we have shown that the same linear upper bound holds for $ \min \bigl\lbrace rt_k\bigl(Aut(\mathcal{M})\bigr), rt_k\bigl(Aut(\mathcal{M}^{\top})\bigr)\bigr\rbrace $, where $ Aut(\mathcal{M}) $ and $ Aut(\mathcal{M}^{\top}) $ are the synchronizing automata defined in Definition \ref{def:assoc_autom}. We have also showed that our technique cannot be improved as it is, because it already takes into account the worst cases. We have then presented a new strategy to obtain a better upper bound on $ rt_k(n) $, which takes into account the weights of the columns (or rows) that we are summing up to obtain a column (or row) of higher weight; numerical results show that this new upper bound significantly improves the previous one when $ n $ is not too large with respect to $ k $. The notion of $ k $-RT comes as an extension of a similar notion for synchronizing automata introduced in \cite{Gonze2015}. For automata, the problem whether there exists a linear upper bound on the $ k $-RT for small $ k $ is still open, as the only nontrivial result on the $ k $-RT that appears in the literature proves a quadratic upper bound on the $ 3 $-RT \cite{Gonze2015}. We believe that our result, as well as the new technique developed in Section \ref{sec:improv_ub}, could help in shedding light to this problem and possibly to the \v{C}ern\'{y} conjecture, in view of the connection between synchronizing automata and primitive NZ sets established by Theorem \ref{thm:autom_matrix}.

\bibliography{biblOnprobprim}

\bibliographystyle{fundam}

%\bibliography{citations}

%\appendix
%\section*{Appendix}

\end{document}